\newcommand{\sref}[1]{\ref{#1}}
\newcommand{\appref}[1]{\ref{#1}}
\newcommand{\seqref}[1]{\eqref{#1}}
\newcommand{\mref}[1]{\ref{#1}}
\newcommand{\meqref}[1]{\eqref{#1}}
\newcommand{\defeq}{\mathrel{\mathop:}=}
\newcommand{\X}{\mathsf{X}}
\newcommand{\uarg}{\,\cdot\,}
\newcommand{\ud}{\mathrm{d}}
\renewcommand{\P}{\mathbb{P}}
\newcommand{\E}{\mathbb{E}}
\newcommand{\Var}{\mathrm{Var}}
\newcommand{\Cov}{\mathrm{Cov}}
\newcommand{\expm}{\mathrm{expm}}
\newcommand{\parexp}[1]{\exp{\Big(#1\Big)}}
\newcommand{\blocksize}{\mathrm{blocksize}}
\newcommand{\blocktime}{\mathrm{blocktime}}
\newcommand{\plu}{\mathrm{PLU}}
\newcommand{\iact}{\mathrm{IACT}}
\newcommand{\charfun}[1]{\mathbf{1}\left(#1\right)}
\newcommand{\given}{\,:\,}
\newcommand{\bigmid}{\;\big|\;}
\renewcommand{\vec}[1]{\boldsymbol{#1}}
\newcommand{\floor}[1]{\lfloor #1 \rfloor}
\newcommand{\Categ}{\mathrm{Categ}}
\newcommand{\ind}[1]{(#1)}
\newtheorem{theorem}{Theorem}
\newtheorem{lemma}[theorem]{Lemma}
\theoremstyle{definition}
\newtheorem{definition}[theorem]{Definition}
\newtheorem{assumption}[theorem]{Assumption}
\theoremstyle{remark}
\newtheorem{remark}[theorem]{Remark}
\begin{document}

\title{\bf Conditional particle filters with bridge backward sampling}

\author{Santeri Karppinen}
\address[SK]{Department of Mathematics and Statistics, University of Jyväskylä}

\author{Sumeetpal S.~Singh}
\address[SSS]{NIASRA, School of Mathematics and Applied Statistics,
University of Wollongong}

\author{Matti Vihola}
\address[MV]{Department of Mathematics and Statistics, University of Jyväskylä}

\def\spacingset#1{}

\begin{abstract}
Conditional particle filters (CPFs) with backward/ancestor sampling are powerful methods for sampling from the posterior distribution of the latent states of a dynamic model such as a hidden Markov model. However, the performance of these methods deteriorates with models involving weakly informative observations and/or slowly mixing dynamics. 
	Both of these complications arise when sampling finely time-discretised continuous-time path integral models, but can occur with hidden Markov models too. 
	Multinomial resampling, which is commonly employed with CPFs, resamples excessively for weakly informative observations and thereby introduces extra variance. 
	Furthermore, slowly mixing dynamics render the backward/ancestor sampling steps ineffective, leading to degeneracy issues. 
	We detail two conditional resampling strategies suitable for the weakly informative regime: the so-called `killing' resampling and the systematic resampling with mean partial order. 
	To avoid the degeneracy issues, we introduce a generalisation of the CPF with backward sampling that involves auxiliary `bridging' CPF steps that are parameterised by a blocking sequence. 
	We present practical tuning strategies for choosing an appropriate blocking. 
	Our experiments demonstrate that the CPF with a suitable resampling and the developed `bridge backward sampling' can lead to substantial efficiency gains in the weakly informative and slow mixing regime.
\end{abstract}

\keywords{Feynman-Kac model, hidden Markov model, particle Markov chain Monte Carlo, path integral, sequential Monte Carlo, smoothing}

\maketitle

\section{Introduction} 

Sampling from the posterior distribution of the latent states of a dynamic model, such as a hidden Markov model (HMM), is a common statistical task. 
Such `smoothing problems' arise in many fields of science, such as ecology \citep{wood-ecology}, epidemiology \citep{rasmussen-epidemiology} and genetics \citep{mirauta-genetics}.
The particle Markov chain Monte Carlo (PMCMC) methods \citep{andrieu-doucet-holenstein} have proved invaluable for solving such problems. 
In particular, the conditional particle filter (CPF) with multinomial resampling and backward sampling (BS) \citep{andrieu-doucet-holenstein,whiteley-backwards-note}, or the probabilistically equivalent particle Gibbs with ancestor sampling algorithm \citep{lindsten-jordan-schon}, have been found to perform well with challenging HMMs and long data records \citep{fearnhead-kunsch,lee-singh-vihola}. 

However, the performance of CPF with BS (CPF-BS) deteriorate when the observations are weakly informative, that is, when the weights (or the potential functions in the Feynman--Kac (FK) distribution) are nearly constant \citep[cf.][]{chopin-singh-soto-vihola}. In such scenarios, the multinomial resampling steps introduce too much noise by eliminating particles excessively. Furthemore, when the dynamics of the model of interest are slowly mixing, the backward/ancestor sampling step has only a limited effect. We mitigate these issues by devising new resampling algorithms and an alternative for backward sampling for such weak potentials and slowly mixing scenarios. 

A time-discretisation of a continuous-time Feynman--Kac (FK) path integral model leads to an `extreme' weakly informative and slow mixing scenario \citep[cf.][]{chopin-singh-soto-vihola}.
Motivated by successes of CPF-BS in the usual discrete time settings, we were interested to seek for a CPF-BS analogue which is stable with respect to refined time-discretisations. 
It is relatively easy to see that the direct application of BS degenerates under such refined discretisations, except for limited cases such as when the driving Markov process admits jumps, such as considered in \citep{miasojedow-niemiro}, or in a univariate case where the trajectories can cross with positive probability.

To address the inherent inefficiency of multinomial resampling, we draw inspiration from the recent works of \cite{arnaudon-delmoral} and \cite{chopin-singh-soto-vihola} where other types of resampling are shown to be more effective for weakly informative potentials. \cite{arnaudon-delmoral}  propose a continuous-time version of the CPF with `killing' resampling, however, this is an idealised algorithm in the sense that practical diffusion models need to be time-discretised. The work of \cite{chopin-singh-soto-vihola} studies the stability of resampling for particle filters (and not the CPF) as the time discretisation is refined. 
There, a new systematic resampling method is proposed, which incorporates a `mean partition' step, which has not been developed for the CPF yet. The mean partition step does further decrease superfluous resampling \citep{chopin-singh-soto-vihola},
and empirical results suggest improved performance.

The main contributions of this paper are as follows.
\begin{itemize}
  \item We detail two new conditional resampling algorithms: the `killing' and systematic resampling with mean partition (Section \ref{sec:conditional-resamplings}). These are conditional versions of resampling algorithms, which were shown to be stable in the weak potentials setting (in the continuous-time limit) \citep{chopin-singh-soto-vihola}. 
    We also detail a generic sufficient condition for conditional resamplings (Assumption \ref{a:conditionals}), which guarantees validity of the CPF (Theorem \ref{thm:cpf-valid}), and complements the result of \cite{chopin-singh}.
  \item Our main contribution is a new CPF with bridge backward sampling (CPF-BBS) (Section \ref{sec:cpf-bbs}), which may be regarded as a generalisation of BS to an arbitrary `blocking sequence,' and which can avoid the degeneracy problem of CPF-BS with refined discretisations.
  \item The performance of the CPF-BBS relies on an appropriately chosen blocking sequence, which depends on the model at hand. Therefore, a significant portion of our work focuses on finding practical, computationally inexpensive and robust tuning criteria for choosing such a sequence (Section \ref{sec:blocking-seq-selection}). 
    We introduce a method for blocking sequence selection that requires a small number of independent runs of the standard particle filter for the model of interest. 
\end{itemize}

Our developments related to blocking sequence selection can be of independent interest, and potentially useful with other methods based on blocking, such as the blocked particle Gibbs \citep{singh-lindsten-moulines}.
Systematic resampling in the context of CPF has been proposed before \citep{chopin-singh} but not the more efficient mean partition version. Furthermore, \cite{chopin-singh} do not discuss or demonstrate efficiency in the context of weak potentials, which is our primary motivation.

The CPF-BBS is a general method, but requires evaluation of and simulation from the conditional distributions of (multiple steps of the) proposal distributions. In practice, this typically means that the proposals are linear-Gaussian, arising for instance from a linear stochastic differential equation (SDE). 
The latter can occur in single molecule studies \citep{dAvigneau-SMM}, and one of our numerical examples demonstrates how animal movement modelling based on telemetry data \citep{johnson-ctcrw} can be combined with a path integral model to account for habitat preferences instead of so-called step-selection formulation \citep[cf.][]{hooten-johnson-mcclintock-morales,thurfjell-ciuti-boyce}. 
Linear-Gaussian state dynamics are common with structural time series models \citep{durbin-koopman} too, and smoothing distribution approximations can lead to weak potentials \citep[cf.][]{vihola-helske-franks}.

The CPF-BBS features `bridging' CPF steps, which resemble the intermediate block importance sampling suggested in \citep{lindsten-bunch-singh-schon}, and the MCMC rejuvenation considered in \citep{lindsten-bunch-singh-schon,carter-mendes-kohn}; there are similarities also with the bridging particle filter suggested in \citep{delmoral-murray}; see also \citep{mider-schauer-vandermeulen}. We believe that our approach is more efficient than direct importance bridging, and because our approach can be intuitively related to a continuous-time analogue (through \citep{chopin-singh-soto-vihola}), it is expected to behave well with respect to refinement of time-discretisation, unlike the MCMC bridging.

Our experiments (Section \ref{sec:experiments}) demonstrate how the developed resamplings outperform standard multinomial resampling in the weak potentials and slow mixing scenario, and we establish
empirically an order between their performance, which follows a similar pattern as the results of \cite{chopin-singh-soto-vihola} for the standard particle filter applied to FK path integral models.
Empirical results using the CPF-BBS show a significant improvement over CPF-BS, and reveal how the method is stable with respect to refined discretisation.
Finally, our tuning algorithm appears to deliver blocking sequences that facilitate efficient inference with little additional specification from the user.

\section{Preliminaries and notation} \label{sec:notation} 

We aim at inference of a smoothing distribution, typically arising from a hidden Markov model (HMM), which consists of a latent Markov state $X_{1:T}$ taking values in $\X$, with initial (prior) density $m_1(x_1)$ and transition probability densities $(m_k(x_k \mid x_{k-1}))_{2 \leq k \leq T}$, as well as conditionally independent observations $Y_{1:T}$ with observation densities $(g_k(y_k \mid x_k))_{1 \leq k \leq T}$.
The smoothing distribution is the posterior density of the latent states $X_{1:T}$:
\begin{equation}
  \pi(x_{1:T}) = \frac{\gamma(x_{1:T})}{\mathcal{Z}} ,\quad
  \text{where}\quad \gamma(x_{1:T})\defeq m_1(x_1) g_1(y_1 \mid x_1)
	\prod_{k=2}^T m_k(x_k\mid x_{k-1}) g_k(y_k \mid x_k), 
  \label{eq:target-hmm}
\end{equation}
with normalising constant $\mathcal{Z}\defeq \int \gamma(x_{1:T})\ud x_{1:T}\in(0,\infty)$.

We approach the inference of \eqref{eq:target-hmm} via an alternative representation often referred to as a Feynman--Kac (FK) distribution \citep{del-moral}.
In our context, the FK distribution consists of the `components' $(M_{1:T}, G_{1:T})$ (described below) that are chosen such that
\begin{equation}
	\label{eq:target}
	\gamma(x_{1:T}) \propto M_1(x_1) G_1(x_1) \prod_{k=2}^T M_k(x_k\mid x_{k-1}) G_k(x_{k-1:k}), 
\end{equation}
where $\gamma(x_{1:T})$ is defined as in \eqref{eq:target-hmm}.
The components $M_1(x_1)$ and $M_k(x_k\mid  x_{k-1})$ define, respectively, an alternative initial
distribution and `proposal' transition densities of a Markov chain on $\X$, and the components $G_1:\X\to[0,\infty)$ and $G_k:\X^2\to[0,\infty)$ for $k \geq 2$ are called `potential' or `weight' functions.

The FK distribution \eqref{eq:target} is a generalisation of the HMM smoothing problem \eqref{eq:target-hmm}. Indeed, taking $M_k \equiv m_k$ and $G_1(x_1) = g_1(y_1^* \mid x_1)$ and $G_k(x_{k-1:k}) \equiv g_k(y_k^*\mid x_k)$ satisfies \eqref{eq:target}. 
However, it is often beneficial to choose another `proposal' family $M_k\not\equiv m_k$, in which case the `weights' in \eqref{eq:target} may be taken as $G_1(x_1) \defeq g_1(y_1^*\mid x_1) m_1(x_1)/M_1(x_1)$ and $G_k(x_{k-1},x_k) \defeq g_k(y_k^*\mid x_k) m_k(x_k\mid x_{k-1})/M_k(x_{k}\mid x_{k-1})$ for $k\ge 2$ to satisfy \eqref{eq:target}.
The FK distribution \eqref{eq:target} arises also in contexts beyond the HMM scenario, such as with time-discretised path-integral models, as discussed in Section \ref{sec:fk-path-integral}.

Above and hereafter, `$\ud x$' stands for a $\sigma$-finite dominating measure on a general state space $\X$, integers are equipped with the counting measure, and product spaces are equipped with products of the dominating measures. We use the shorthand notation 
for sequences: for $\{x_i\}_i$, $\{y^j\}_j$ and $\{z_i^j\}_{i,j}$ we write $x_{a:b} = (x_a,\ldots,x_b)$, $y^{\ind{a:b}} = (y^{\ind{a}},\ldots,y^{\ind{b}})$ and $z_{a:b}^{\ind{j_{a:b}}} = (z_a^{\ind{j_a}},\ldots,z_b^{\ind{j_b}})$.
 We also denote $[N]\defeq\{1,\ldots,N\}$, and use $\mathcal{U}(a, b)$ and $\mathcal{U}([N])$ to denote the uniform distribution in the interval (a,b) and on integers $[N]$, respectively. 
Finally, test and potential functions are implicitly assumed measurable.
  
 \section{The conditional particle filter} \label{sec:cpf} 

The particle filter is a sequential Monte Carlo algorithm, which includes sampling from Markov dynamics $M_k$, and \emph{resampling} proportional to  weights arising from $G_k$. The resampling operation $r(a^{\ind{1:N}}\mid g^{\ind{1:N}})$ defines a probability distribution on $[N]^N$ which depends on  non-negative `unnormalised weights' $g^{\ind{1:N}}$. That is, if $A^{\ind{1:N}}\sim r(\uarg\mid g^{\ind{1:N}})$ in $[N]$, then $\P(A^{\ind{1:N}} = a^{\ind{1:N}}) = r(a^{\ind{1:N}}\mid g^{\ind{1:N}})$. We will only consider \emph{unbiased resamplings} \citep{crisan-discrete} $r$, which means that 
for all $j\in[N]$:
\begin{equation}
\label{eq:res-unbiased} 
  \Bigg( \sum_{i=1}^N g^{\ind{i}} \Bigg) \E_{r(\uarg\mid g^{\ind{1:N})})}
\bigg[ \frac{1}{N}\sum_{i=1}^N \charfun{A^{\ind{i}} = j}\bigg] 
= g^{\ind{j}}.
\end{equation}

Algorithm \ref{alg:pf} describes the particle filter targeting the FK distribution \eqref{eq:target} using $N$ particles, 
and an unbiased resampling $r(\uarg\mid g^{\ind{1:N}})$.
The boldface notation $\mathbf{X}_1^{(i)} = X_1^{(i)}$ and $\mathbf{X}_{k+1}^{(i)} = ({X}_{k}^{(A_k^{(i)})},X_{k+1}^{(i)})$ 
stands for the latest particles augmented with their ancestors, generated during the algorithm. 
\begin{algorithm}
	\spacingset{1.0}
  \caption{\small \textsc{PF}$(r, M_{1:T}, G_{1:T}, N)$}
    \label{alg:pf} 
\begin{algorithmic}[1]
  \small
  \State Draw $X_1^{(i)} \sim M_1(\cdot)$ for $i \in [N]$.
  \State Set $\mathbf{X}_1^{(i)} = X_1^{(i)}$ for $i \in [N]$
  \For{$k = 1, \ldots, T-1$}

  \State Set $W_k^{(i)} = G_k(\mathbf{X}_k^{(i)})$ for $i \in [N]$
    \State Draw $A_{k}^{(1:N)} \sim r(\cdot \mid W_{k}^{(1:N)})$
    
    \State Draw $X_{k + 1}^{(i)} \sim M_{k + 1}(\cdot \mid X_{k}^{(A_{k}^{(i)})})$ for $i \in [N]$
    
    \State Set $\mathbf{X}_{k + 1}^{(i)} = (X_{k}^{(A_{k}^{(i)})}, X_{k + 1}^{(i)})$ for $i \in [N]$ 
  \EndFor
    \State Set $W_{T}^{(i)} = G_{T}(\mathbf{X}_{T}^{(i)})$ for $i \in [N]$
  \State \textbf{output} $(X_{1:T}^{(1:N)}, A_{1:T-1}^{(1:N)}, W_{1:T}^{(1:N)})$ 
  
\end{algorithmic} 
\end{algorithm}
In what follows, we use underline to denote `all particles' at one time instant, so for instance $\underline{X}_k = X_k^{(1:N)}$.

The conditional particle filter (CPF) introduced in \citep{andrieu-doucet-holenstein} is similar to the particle filter, but its motivation is different: it implements a $\pi$-invariant Markov transition kernel from an input (so called `reference' path) $X_{1:T}^* \in \X^{T}$ to a newly chosen path $\tilde{X}_{1:T}^* \in \X^{T}$.
The original scheme of \cite{andrieu-doucet-holenstein} assumed multinomial sampling with ancestor tracing.
Algorithm \ref{alg:cpf} presents a generic version of the CPF with $N$
particles and with ancestor tracing, using a \emph{generic conditional
resampling} $r^{(p,n)}$. 
\begin{algorithm}
	\spacingset{1.0}
  \caption{\small CPF-AT$(r^{(p, n)}, X_{1:T}^*,B_{1:T}; M_{1:T}, G_{1:T}, N)$}
    \label{alg:cpf} 
\begin{algorithmic}[1]
  \small
\State
  $(\underline{X}_{1:T}, \underline{A}_{1:T-1}, \tilde{B}_T) \gets \mathrm{CPF}(r^{(p, n)}, X_{1:T}^*,B_{1:T}; M_{1:T}, G_{1:T}, N)$
\State $\tilde{B}_{1:T-1} \gets \textsc{AncestorTrace}(\underline{A}_{1:T-1},
\tilde{B}_T)$
\State \textbf{output} $(\tilde{X}_{1:T}^*, \tilde{B}_{1:T})$ where $\tilde{X}_{1:T}^* =
X_{1:T}^{(\tilde{B}_{1:T})}$
\end{algorithmic}
\end{algorithm}
\begin{algorithm}
	\spacingset{1.0}
  \caption{\small CPF$(r^{(p, n)}, X_{1:T}^*,B_{1:T}; M_{1:T}, G_{1:T}, N)$}
    \label{alg:cpf-core} 
\begin{algorithmic}[1]
  \small
\State
  Draw $X_1^{(-B_1)} \sim M_1(\uarg)$ and set $X_1^{(B_1)}\gets X_1^*$ and $\mathbf{X}_1^{(i)} = X_1^{(i)}$ for $i \in [N]$
\label{item:cpf-initial-samples}
\For{$k=1,\ldots,T-1$}
\State $A_{k}^{(1:N)} \gets r^{(B_k,B_{k+1})}\big(\uarg \mid
  G_{k}(\vec{X}_{k}^{(1:N)})\big)$ \label{line:conditional-res}
\State Draw $X_{k+1}^{(i)} \sim M_{k+1}(\uarg \mid
            X_{k}^{(A_{k}^{(i)})})$ for
            $i\neq B_{k+1}$ and
            set $X_{k+1}^{(B_{k+1})} = X_{k+1}^*$
\State Set $\mathbf{X}_{k+1}^{(i)} = (X_k^{(A_{k}^{(i)})}, X_{k+1}^{(i)})$ for $i \in [N]$
\EndFor
\State Draw $\tilde{B}_T\sim\Categ\big(G_T(\vec{X}_T^{(1:N)})\big)$
\label{item:cfp-drawfinal}
\State \textbf{output} $(\underline{X}_{1:T}, \underline{A}_{1:T-1},
\tilde{B}_T)$
\end{algorithmic}
\end{algorithm}
\begin{algorithm}
	\spacingset{1.0}
\caption{\small \textsc{AncestorTrace}$(\underline{a}_{\ell:u-1}, b_u)$}
\label{alg:at} 
\begin{algorithmic}
    \small
    \State \textbf{for} $v = u-1,u-2,\ldots, \ell$ \textbf{do} $b_v  \gets a_v^{(b_{v+1})}$
    \State \textbf{output} $b_{\ell:u-1}$
\end{algorithmic}
\end{algorithm}
The conditional resampling scheme draws the ancestor indices (on line \ref{line:conditional-res} of Algorithm \ref{alg:cpf-core}) conditional on the ancestor of the reference.
This makes it possible to write Algorithm \ref{alg:cpf} such that the reference trajectory can be located at arbitrary indices $B_{1:T}$, unlike earlier formulations, which assume reference at index 1 \citep[e.g.][]{chopin-singh}.
The arbitrary reference indices turn out to be convenient for us, when we introduce the bridge backward sampling CPF in Section \ref{sec:cpf-bbs}.
Definition \ref{def:valid-conditional-resampling} gives a sufficient condition that $r^{(p, n)}$ is a valid conditional resampling for use with Algorithm \ref{alg:cpf}.

\begin{definition}
    \label{def:valid-conditional-resampling} 
The conditional resampling scheme $r^{(p,n)}(\uarg \mid g^{(1:N)})$
is valid, if there exists an unconditional resampling scheme $r(\uarg\mid g^{(1:N)})$, such that
for all $g^{(1:N)}\ge 0$ such that $\sum_{\ell=1}^N g^{(\ell)}>0$,
and all $p,n\in\{1{:}N\}$,
\begin{enumerate}[(i)]
\item $\P_{r^{(p, n)}(\uarg \mid g^{(1:N)})}(A^{(n)} = p)=1$,
\item $r^{(p,n)}(a^{(1:N)}\mid g^{(1:N)}) = \P_{r(\uarg\mid
    g^{(1:N)})}(A^{(-n)}=a^{(-n)}\mid A^{(n)} = p)$,
\item $\P_{r(\uarg\mid g^{(1:N)})}(A^{(n)}=p) =
  \frac{g^{(p)}}{\sum_{i=1}^N g^{(i)}}$. \label{defcond:symmetricity}
\end{enumerate}
\end{definition}

\begin{theorem}
    \label{thm:cpf-valid} 
  Algorithm \ref{alg:cpf} with a valid conditional resampling $r^{(p, n)}$
defines
a Markov update $(X_{1:T}^*, B_{1:T}) \to
(\tilde{X}_{1:T}^*, \tilde{B}_{1:T})$ that is reversible
with respect to $\pi\times \mathcal{U}([N]^T)$.
\end{theorem}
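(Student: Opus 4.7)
My plan is to follow the extended-target methodology of \cite{andrieu-doucet-holenstein}, as generalised by \cite{chopin-singh} to arbitrary conditional resamplings. The key object is a joint distribution $\tilde{\pi}$ on $(\underline{x}_{1:T}, \underline{a}_{1:T-1}, b_{1:T})$ characterised by two properties: its marginal on the reference path and reference indices $(x_{1:T}^{(b_{1:T})}, b_{1:T})$ equals $\pi \times \mathcal{U}([N]^T)$, and its conditional distribution given $(x_{1:T}^*, b_{1:T})$ reproduces the generative law of Algorithm \ref{alg:cpf-core} run on those inputs.

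Existence of $\tilde{\pi}$ and its properties follow the standard Andrieu--Doucet--Holenstein construction: run the unconditional particle filter with resampling $r$, draw a terminal index $B_T$ with probability proportional to $G_T(\vec{X}_T^{(B_T)})$, and set $B_{1:T-1}$ by ancestor tracing. The unbiasedness \eqref{eq:res-unbiased} of $r$ combined with Definition \ref{def:valid-conditional-resampling}(iii) then yields the first property by direct computation. The second property rests on Definition \ref{def:valid-conditional-resampling}(i)--(ii): the factor $r(a_k^{(1:N)}\mid G_k(\vec{x}_k^{(1:N)}))$ restricted to the constraint $a_k^{(b_{k+1})} = b_k$ (part (i)) reduces, by part (ii), to exactly $r^{(b_k,b_{k+1})}(\,\cdot\,\mid G_k(\vec{x}_k^{(1:N)}))$, which is the resampling used on line \ref{line:conditional-res} of Algorithm \ref{alg:cpf-core}; the non-reference initial and transition densities factor out directly.

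Given this identification, the CPF-AT kernel is naturally expressed as a two-step Gibbs scan on $\tilde{\pi}$: first sample $(\underline{X}_{1:T}, \underline{A}_{1:T-1})$ from the conditional of $\tilde{\pi}$ given $(x_{1:T}^*, b_{1:T})$, which is exactly Algorithm \ref{alg:cpf-core}; second sample $\tilde{B}_{1:T}$ from the conditional of $\tilde{\pi}$ given the particle system, which one checks reduces to the final-time categorical draw followed by ancestor tracing as in Algorithm \ref{alg:cpf}. Both steps preserve $\tilde{\pi}$, so the induced kernel on the marginal $(x_{1:T}^*, b_{1:T})$ preserves $\pi \times \mathcal{U}([N]^T)$.

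Reversibility follows because, under this decomposition, the joint distribution of $((X^*_{1:T}, B_{1:T}), (\tilde{X}^*_{1:T}, \tilde{B}_{1:T}))$ can be realised by sampling the particle system together with two conditionally-independent reference index sequences from $\tilde{\pi}$'s conditional, so swapping the two is a symmetry of the joint law. The main obstacle I expect is the careful bookkeeping in the conditional identification under arbitrary reference indices $b_{1:T}$ rather than the usual $b_{1:T} \equiv 1$ of \cite{andrieu-doucet-holenstein}; Definition \ref{def:valid-conditional-resampling}(ii) is stated in the generality needed for this, so the reduction to the standard case should be clean, if notationally heavy.
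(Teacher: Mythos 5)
Your proposal is correct and takes essentially the same route as the paper: the paper's proof writes out the joint density of $(B_{1:T},\underline{X}_{1:T},\underline{A}_{1:T-1},\tilde{B}_{1:T})$ started from $\pi\times\mathcal{U}([N]^T)$, uses the identity $r(a^{(1:N)}\mid g)\,\charfun{a^{(n)}=p}=\frac{g^{(p)}}{\sum_\ell g^{(\ell)}}\,r^{(p,n)}(a^{(1:N)}\mid g)$ (your invocation of Definition \ref{def:valid-conditional-resampling}(i)--(iii)) to rewrite it in a form manifestly symmetric in $(b_{1:T},x_{1:T}^{(b_{1:T})})$ and $(\tilde{b}_{1:T},x_{1:T}^{(\tilde{b}_{1:T})})$, which is exactly your observation that the two reference index sequences are conditionally i.i.d.\ given the particle system under the extended target. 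Your Gibbs-on-$\tilde{\pi}$ framing is just a named version of that direct density computation, so the two arguments coincide.
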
 
Theorem \ref{thm:cpf-valid}, whose proof is given in Appendix
\appref{app:cpf}, complements the result of \cite{chopin-singh} by accomodating our version of the CPF, where the reference is placed at arbitrary position, and allows for the resamplings which we discuss in Section \ref{sec:conditional-resamplings}.

Ancestor tracing version of the CPF requires typically $N = O(T)$ in order to remain efficient \citep{andrieu-lee-vihola,lindsten-douc-moulines}. If the potentials $G_k$ are nearly uniform, this issue can be mitigated to some extent by an efficient resampling. For instance, when the model is a time discretisation of a continuous-time model (cf.~Section \ref{sec:fk-path-integral}), $N$ does not need to increase with respect to the number of time steps, but in general, $N$ still has to increase with respect to the time horizon. 

\cite{whiteley-backwards-note} suggested, in a discussion note to \citep{andrieu-doucet-holenstein}, that if transition densities $M_k$ can be calculated, backward sampling may be used in CPF (with multinomial resampling), instead of ancestor tracing. Later, a probabistically equivalent variant of BS called `ancestor sampling' (AS) \citep{lindsten-jordan-schon}, was introduced. The Markov kernels of CPF with BS/AS are reversible with respect to $\pi$, and guaranteed to outperform AT in the asymptotic variance sense \citep{chopin-singh}. The improvement has been found substantial in many empirical studies, and the BS/AS variants have been found stable with constant $N$ and very large $T$; see \citep{lee-singh-vihola} for a theoretical result which consolidates these findings. We present a generalisation of BS in Section \ref{sec:cpf-bbs}, which remains efficient with slowly mixing $M_k$, and in particular, with refined time discretistaions.

\section{New conditional resampling algorithms} \label{sec:conditional-resamplings} 

The simplest unbiased resampling, that is, satisfying \eqref{eq:res-unbiased} is \emph{multinomial resampling}, where $A^{(k)}$ are drawn independently from the categorical distribution $\Categ(w^{\ind{1:N}})$ with normalised weights $w^{\ind{j}} =\frac{g^{\ind{j}}}{\sum_{i=1}^N g^{\ind{i}}}$. 
However, in the context of this work, multinomial resampling is wasteful, and we focus instead on conditional versions of two resampling algorithms, that were recently shown to be stable in refined discretisations \citep[][]{chopin-singh-soto-vihola}. We briefly discuss these resampling algorithms below, and then describe their conditional variants.

The first resampling is the `killing' resampling, defined as follows \citep[cf.][]{del-moral-2013}:
\begin{align}
    r_{\mathrm{kill}}(a^{(1:N)} \mid g^{(1:N)}) 
    &\defeq \prod_{i=1}^N \bigg[\charfun{a^{(i)} = i}
    \frac{g^{(i)}}{g^*}  
    + \Big(1- \frac{g^{(i)}}{g^*}\Big) 
    \sum_{j=1}^N \charfun{a^{(i)} = j} 
    \frac{g^{(j)}}{\sum_{\ell=1}^N g^{(\ell)}}
    \bigg],
    \label{eq:killing-resampling}
\end{align}
where $g^* = \max_{i\in\{1:N\}} g^{(i)}$ (and in case $g^*=0$,
$r_{\mathrm{kill}}$ may be defined arbitrarily). The killing resampling is valid
also with any other choice of $g^*$ as long as $g^{(j)}\le g^*$, but we consider the above one minimising the resampling rate, which was also used in \citep{murray-lee-jacob}. Like multinomial resampling, the components of the random vector $A^{(1:N)}$ are independent but not identically distributed.

The second resampling is `systematic resampling with mean partition' (Definition \ref{def:sys-res-mean-partition}), which is a variant of `standard' systematic resampling (Definition \ref{def:sys-res}) where the weights $w^{1:N}$ are processed in a particular `mean partition' order (Definition \ref{def:mean-partition}).
\begin{definition}
  \label{def:sys-res}
  (Systematic resampling). Input normalised weights $w^{1:N}$.
  Simulate a single $\tilde{U} \sim \mathcal{U}(0, 1)$, set 
  $
    \check{U}^{i} \defeq (i - 1 + \tilde{U})/N
  $
  and define the resampling indices as $A^{i} \defeq F^{-1}(\check{U}^i)$ for $i \in [N]$.
  Here, the generalised inverse $F^{-1}(u)$ is defined for $u \in (0, 1)$ as the unique index 
  $i \in [N]$ such that $F(i - 1) < u \leq F(i)$, with $F(i) \defeq \sum_{j = 1}^{i} w^j$.
\end{definition}

\begin{definition}
  \label{def:mean-partition}
  (Mean partition order) Suppose that $u^{1:N} \in \mathbb{R}^N$.
  A permutation $\varpi: [N] \rightarrow [N]$ is a \textit{mean partition} order for $u^{1:N}$, if the re-indexed vector $u_{\varpi}^{i} \defeq u^{\varpi(i)}$ satisfies $u_{\varpi}^{1}, \ldots, u_{\varpi}^{m} \leq \bar{u}$ and $u_{\varpi}^{m + 1}, \ldots, u_{\varpi}^{N} > \bar{u}$ for some $m \in [N]$, with $\bar{u}$ denoting the mean of the vector $u$.
\end{definition}

\begin{definition}
  \label{def:sys-res-mean-partition}
  (Systematic resampling with mean partition).
  Let $F_{\varpi}^{-1}$ denote the generalised inverse distribution function corresponding to the 
  re-indexed weights $w_{\varpi}^{1:N}$, where $\varpi$ is a mean partition order as in Definition \ref{def:mean-partition}.
  Set $A^{i} \defeq \varpi(F_{\varpi}^{-1}(\check{U}^i)),$
  where $\check{U}^{1:N}$ are defined as in Definition \ref{def:sys-res}.
\end{definition}
Systematic resampling with mean partition was introduced in \citep{chopin-singh-soto-vihola}, where it was shown to have the smallest resampling rate among a number of other algorithms, including killing resampling. 
The empirical evidence in \citep{chopin-singh-soto-vihola} further suggests that
the mean partition order variant improves performance over the `standard' systematic resampling.
The work of \cite{chopin-singh-soto-vihola} featured also another resampling algorithm, the Srinivasan sampling process \citep{gerber-chopin-whiteley}, which was on a par with systematic.
However, it seems difficult to devise an efficient implementation of the conditional version of the latter.

Algorithms \ref{alg:killing} and \ref{alg:systematic} describe the conditional variants of killing resampling and systematic resampling with mean partition, respectively. 
These algorithms are valid conditional resamplings, as indicated by Lemma \ref{lem:conditional-killing}, whose proof is given in Appendix \appref{app:cpf}.
Both algorithms make use of a cyclic shift of $1{:}N$ by $s$, which is denoted by $\sigma_{s}^{N}$, that is, $\sigma_{s}^{N}(i) := 1 + (i + s - 1) \mod N$.
The conditional systematic resampling with mean partition is an extension of the conditional systematic resampling of \cite{chopin-singh}.
The mean partition may be found in $O(N)$ time, and our implementation is based on Hoare's scheme \cite{hoare-quicksort}; see Algorithm \sref{alg:mean-partition-order} in Appendix \appref{sec:misc-algs}. 
\begin{algorithm}
	\spacingset{1.0}
  \caption{\small Conditional killing resampling $\rho_{\mathrm{kill}}^{(i,k)}(\uarg\mid
      g^{(1:N)})$.}
    \label{alg:killing} 
\begin{algorithmic}[1]
  \small
  \State Draw $\bar{A}^{(1:N)} \sim \rho_{\mathrm{kill}}(\uarg\mid g^{(1:N)})$
\State Draw $J\in[N]$ such that
$
\P(J=j) = h(j\mid i) \defeq \begin{cases}
\frac{1}{N}\Big( 1 + \frac{\sum_{\ell\neq i} g^{(\ell)}}{g^*}\Big),
  & j = i \\
\frac{1}{N}\Big( 1 - \frac{g^{(j)}}{g^*}\Big), & j \neq i
\end{cases}
$
\State Set $S \defeq \sigma_{-k}^N(J)$ and 
$\bar{A}^{(J)} \gets i$
\State \textbf{output} $A^{(1:N)}$ where $A^{(j)} = \bar{A}^{(\sigma_S^N(j) )}$
\end{algorithmic}
\end{algorithm}

\begin{algorithm}
	\spacingset{1.0}
  \caption{\small Conditional systematic resampling with mean partition $\rho_{\mathrm{syst}}^{(i,k)}(\uarg\mid
  g^{(1:N)})$.}
    \label{alg:systematic} 
\begin{algorithmic}[1]
  \small
	\State Define $W^j := \dfrac{g^{(j)}}{\sum_{l=1}^{N} g^{(l)}}$ for $j \in [N]$
	\State Set $r = NW^i - \floor{NW^i}$ and $p = \dfrac{r (\floor{N W^i} + 1)}{NW^i}$
	\State With probability $p$, draw $\bar{U} \sim \mathcal{U}(0, r)$ and set $N^i = \floor{NW^i} + 1$; otherwise draw $\bar{U} \sim \mathcal{U}(r, 1)$ and set $N^i = \floor{NW^i}$
  \State Set $\varpi \gets \textsc{MeanPartitionOrder}(W^{1:N})$ \algorithmiccomment{Algorithm \sref{alg:mean-partition-order} in Appendix \appref{sec:misc-algs}}
	\State Set $s = \varpi^{-1}(i)$ and define $\tilde{\varpi}$ such that $\tilde{\varpi}(j) = \varpi(\sigma_{s-1}^N(j))$ for $j \in [N]$
	\State Set $\bar{A}^{(1:N)} = \tilde{\varpi}(F_{\tilde{\varpi}}^{-1}(U^{1:N}))$, where
  $U^j := \dfrac{j - 1 + \bar{U}}{N}$ for $j \in [N]$ 
	\State Draw $\bar{C} \sim \mathcal{U}([N^i])$ and set $C = \sigma_{-k}^N(\bar{C})$
	\State Set $A^{(j)} = \bar{A}^{(\sigma_{C}^N(j))}$, for $j \in [N]$
\State \textbf{output} $A^{(1:N)}$
\end{algorithmic}
\end{algorithm}

\begin{lemma}
    \label{lem:conditional-killing} 
The following define valid conditional resamplings (Definition
\ref{def:valid-conditional-resampling}):
\begin{enumerate}[(i)]
\item \label{item:killing-valid} conditional killing 
  $\rho_{\mathrm{kill}}^{(i,k)}$ 
  of Algorithm \ref{alg:killing}, and
\item \label{item:systematic-valid} conditional systematic resampling with mean partition $\rho_{\mathrm{syst}}^{(i, k)}$  of Algorithm \ref{alg:systematic}.
\end{enumerate}
\end{lemma}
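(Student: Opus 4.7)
The plan is to verify the three conditions of Definition \ref{def:valid-conditional-resampling} for each scheme by exhibiting an underlying unconditional resampling $r$: namely, the corresponding unconditional algorithm (killing for part \ref{item:killing-valid}, systematic with mean partition for part \ref{item:systematic-valid}) composed with a uniformly random cyclic shift of the $N$ output positions. The shift symmetrises the per-position marginals so that condition \ref{defcond:symmetricity} ($\P_r(A^{(n)}=p)=w^{(p)}$) holds at every position $n$, which also delivers unbiasedness of $r$ in the sense of \eqref{eq:res-unbiased}. For both algorithms, condition (i) is immediate from the construction: the final cyclic shift applied to $\bar A$ is arranged so that the position carrying the value $i$ maps exactly to $k$, reducing to the one-line index check $\sigma_S^N(k)=J$ in Algorithm \ref{alg:killing} and the analogous identity $\sigma_C^N(k)=\bar C$ in Algorithm \ref{alg:systematic}. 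The substance is condition (ii): that after the shift, the joint law of $A^{(-k)}$ coincides with the conditional law of $r$ given $A^{(k)}=i$.

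For killing, I would exploit the component-wise independence of $\bar A^{(1:N)}$ under $\rho_{\mathrm{kill}}$. A direct computation gives $\P_{\rho_{\mathrm{kill}}}(\bar A^{(j)}=i)=\charfun{j=i}\,g^{(i)}/g^* + (1-g^{(j)}/g^*)w^{(i)}$, and summing over $j\in[N]$, using $w^{(i)}\sum_\ell g^{(\ell)}/g^* = g^{(i)}/g^*$, telescopes to $\sum_j\P_{\rho_{\mathrm{kill}}}(\bar A^{(j)}=i)=Nw^{(i)}$. Dividing by this normaliser shows that the conditional law of the index $J$ satisfying $\bar A^{(J)}=i$ coincides with the distribution $h(\cdot\mid i)$ drawn in line 2 of the algorithm. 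By independence, the override $\bar A^{(J)}\gets i$ leaves the unconditional marginals of $\bar A^{(\ell)}$ for $\ell\neq J$ intact, and the subsequent shift then assembles these into the mixture (over $J$) of products of shifted killing marginals, which is precisely the conditional law of $r$ given $A^{(k)}=i$.

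For systematic resampling with mean partition, the argument has the same skeleton but the auxiliary randomness is richer, decomposed into three pieces: the number $N^i$ of copies of $i$ in the output, the underlying uniform $\bar U\in(0,1)$ driving the regularly spaced shifts, and the position $\bar C\in[N^i]$ of the selected copy among the block of $i$'s. Standard properties of systematic resampling (extending \cite{chopin-singh}) imply that $i$ contributes either $\lfloor Nw^{(i)}\rfloor$ or $\lfloor Nw^{(i)}\rfloor+1$ positions, the latter with probability equal to the fractional part $r=Nw^{(i)}-\lfloor Nw^{(i)}\rfloor$. Conditioning on $A^{(k)}=i$ size-biases this count, yielding exactly the probability $p$ and the two conditional laws of $\bar U$ used in lines 2--3; the remaining positions are then generated unconditionally by $F_{\tilde\varpi}^{-1}$, while the uniform choice of $\bar C$ composed with $\sigma_{-k}^N$ reproduces the symmetrising uniform cyclic shift used to define $r$.

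The main obstacle is the systematic case, specifically verifying that after the re-ordering by $\tilde\varpi$ in line 5 all $N^i$ copies of $i$ occupy exactly the positions $\{1,\ldots,N^i\}$ of the pre-shift output $\bar A$, consistently across both possible values of $N^i$ and both sub-intervals of $\bar U$; this is what makes $\bar C\in[N^i]$ index one of them. The mean-partition permutation itself does not enter the correctness argument (it only ensures that $\tilde\varpi$ can be constructed in $O(N)$ time); the essential combinatorial step is the standard claim that $F_{\tilde\varpi}^{-1}$ applied to the regularly spaced uniforms $\check U^{1:N}$ places the mass block of index $i$ at the desired consecutive positions, which I would verify by direct inspection of the re-indexed CDF and its two cases for $\bar U$.
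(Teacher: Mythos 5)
Your treatment of the killing case is essentially the paper's own argument: the same symmetrised resampling $\hat\rho$ (unconditional killing followed by a uniform cyclic shift), the same marginal computation $\sum_j\P(\bar A^{(j)}=i)=Ng^{(i)}/\sum_\ell g^{(\ell)}$, and the same identification of the normalised source-position law with $h(\cdot\mid i)$, with component-wise independence justifying the override step. No issues there.

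In the systematic case there is a genuine gap. You define the unconditional $r$ as systematic resampling with the mean partition order $\varpi$ followed by a uniform cyclic shift, but Algorithm \ref{alg:systematic} generates its pre-shift array from $F_{\tilde\varpi}^{-1}$, where $\tilde\varpi(j)=\varpi(\sigma_{s-1}^N(j))$ with $s=\varpi^{-1}(i)$ --- a reordering that \emph{depends on the conditioned index} $i$ and therefore cannot be folded into the definition of $r$. To verify condition (ii) of Definition \ref{def:valid-conditional-resampling} you must show that the randomly shifted output has the same law whether the weights are processed in the order $\varpi$ or in the order $\tilde\varpi$; this cyclic-shift invariance is exactly the paper's Lemma \ref{lem:systematic-invariance}, and its proof is not a triviality (one has to re-wrap the stratified uniforms modulo $1$, re-anchor the shift at the wrap-around point, and check the boundary case $U^i<w^1$ separately). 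Your plan instead declares that ``the mean-partition permutation itself does not enter the correctness argument'' and locates the ``main obstacle'' in the claim that the $N^i$ copies of $i$ occupy positions $1,\ldots,N^i$ of $\bar A$ --- but that part is immediate from $\tilde\varpi(1)=i$ together with the monotonicity of $F_{\tilde\varpi}^{-1}$ on the increasing grid $\check U^{1:N}$. The size-biasing of $N^i$, the conditional laws of $\bar U$, and the uniform choice of $\bar C$ are all correctly anticipated and match the paper, but without the invariance step the argument only establishes validity relative to an ``unconditional'' resampling that varies with $(i,k)$, which Definition \ref{def:valid-conditional-resampling} does not permit.
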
 

\section{Bridge backward sampling} \label{sec:cpf-bbs}

The backward/ancestor sampling CPF \citep{whiteley-backwards-note,lindsten-jordan-schon} often has impressive performance even with small $N$ and large $T$ \citep{lee-singh-vihola}. The selection probabilities in the backward sampling step 
include the transition density $M_{k+1}(X_{k+1}^{(B_{k+1})}\mid X_k^{(i)})$. When $M_{k+1}$ is slowly mixing, this density is typically very small for all $i$ except for the ancestor $i=A_k^{(B_{k+1})}$, and therefore the backward sampling step essentially reduces to ancestor tracing.

We discuss next the conditional particle filter with bridge backward sampling (CPF-BBS), which is a generalisation of CPF with backward sampling (CPF-BS) \citep{whiteley-backwards-note} suitable for slowly mixing $M_k$. The backward sampling step is replaced by a `bridging` procedure which spans over multiple time steps, and requires tractable dynamics $\{M_k\}$, in the following sense:
\begin{assumption} 
	\label{a:conditionals} 
	Denote $M_{\ell:u}(x_{\ell:u}) \defeq \prod_{k=\ell+1}^u M_k(x_k\mid x_{k-1})$
	for any $1\le \ell < u \le T$. 
	Then, we are able to simulate from and evaluate the density of the
	conditional distribution of $x_\ell$ given $x_{\ell-1}$ and $x_u$:
	\[
		\bar{M}_{\ell}(x_{\ell}\mid x_{\ell-1},x_u)
			 \defeq \frac{\int M_{\ell-1:u}(x_{\ell-1:u})
				\ud x_{\ell+1:u-1}}{M_{u\mid \ell-1}(x_u\mid x_{\ell-1})}.
			\]
	We further assume that we are able to evaluate the conditional density
	of $x_u$ given $x_{\ell}$:
	\[
		M_{u\mid \ell}(x_u\mid x_\ell) \defeq \int
				M_{\ell:u}(x_\ell,z_{\ell+1:u-1},x_u)\ud z_{\ell+1:u-1}.
			\]
\end{assumption} 
Assumption \ref{a:conditionals} is restrictive, and satisfied for instance by linear-Gaussian $M_k$. Note, however, that $M_k$ need not necessarily correspond to the statistical model, but may be another `proposal' distributions, as long as the Feynman--Kac model \eqref{eq:target} corresponds to the smoothing distribution (cf.~discussion in Section \ref{sec:discussion}).

Algorithm \ref{alg:bbcpf} gives the pseudocode of the CPF-BBS algorithm. 
\begin{algorithm}
	\spacingset{1.0}
  \caption{\small \textsc{CPF-BBS}$(X_{1:T}^*,B_{1:T}; T_{1:L})$}
  \label{alg:bbcpf} 
  \begin{algorithmic}[1]
    \small
  \State $(\underline{X}_{1:T}, \underline{A}_{1:T-1}, \tilde{B}_T) \gets \mathrm{CPF}(X_{1:T}^*,B_{1:T})$ 
      and set $\tilde{X}_T^* \gets X_T^{(\tilde{B}_T)}$
      \For{$k=L,L-1,\ldots,2$} 
            \State $\ell \gets T_{k-1}$; $u \gets T_{k}$; $B_u^* \gets
            \tilde{B}_u$
            \State $B_{\ell:u-1}^* \gets 
            \text{\textsc{AncestorTrace}}(\underline{A}_{\ell:u-1}, B_u^*)$
  
      \State $(\tilde{X}_{\ell:u-1}^*,\tilde{B}_{\ell:u-1}) \gets
      \textsc{BridgeCPF}(\underline{\vec{X}}_\ell, B_{\ell:u-1}^*,
      X_{\ell+1:u}^{(B_{\ell+1:u}^*)}, \ell, u)$
  
      \EndFor
      \State \textbf{output} $(\tilde{X}_{1:T}^*,\tilde{B}_{1:T})$
  \end{algorithmic}
  \end{algorithm}
The first step (line 1) invokes the forward CPF (Algorithm \ref{alg:cpf-core}). The bridging procedure (line 5) that replaces the usual backward sampling step in CPF-BS, requires a fixed `blocking sequence' $1 = T_1 < \cdots < T_L = T$ that gives rise to the blocks $(T_{i-1}, \ T_{i}), \ i = 2, \ldots, L$. For each block, $\ell=T_{i - 1}$ and $u=T_{i}$ are referred to as the block lower and upper boundaries, respectively, and Algorithm \ref{alg:bridge} attempts to change the ancestor of state $X_u^{(B^\ast_u)}$ at time $\ell$ from $B_\ell^\ast$ to a different particle from the pool $\underline{\vec{X}}_l$. Success of this step hinges on  Algorithm \ref{alg:bridge} being able to generate particles that could equally well explain the future state $X_u^{(B^\ast_u)}$ being conditioned on (see line 8 of Algorithm \ref{alg:bridge}), for which the conditional densities of Assumption \ref{a:conditionals} are needed in its forward simulation procedure (lines 1--7). The new ancestor from the pool $\underline{\vec{X}}_l$ is then found by ancestor tracing (line 9). Success in this step relies on  an efficient resampling strategy (line 3) to avoid particle degeneracy so that many particles from $\underline{\vec{X}}_l$ can survive to line 8. The choice of the  blocking sequence is important and we devise a practical design choice procedure for this in Section \ref{sec:plu}.

\begin{algorithm}
	\spacingset{1.0}
\caption{\small \textsc{BridgeCPF}$(\underline{\vec{x}}_\ell,
  b_{\ell:u-1}^*,x_{\ell+1:u}^*, \ell, u)$}
\label{alg:bridge} 
\begin{algorithmic}[1]
  \small
          \State 
          $W_{\ell}^{(1:N)} \gets M_{u\mid \ell}(x_u^* \mid
          \underline{x}_\ell)^{\frac{1}{u-\ell}}$; 
          $\underline{\tilde{\vec{X}}}_\ell \gets \underline{\vec{x}}_\ell$
          \For{$v = \ell+1:u-1$} \label{line:bridge-start}
            \State \label{line:bridge-resample} 
            $\tilde{A}_{v-1}^{(1:N)} \gets 
            r^{(b_{v-1}^*,b_v^*)}\big(\uarg \bigmid
            G_{v-1}(
            \tilde{\vec{X}}_{v-1}^{(1:N)})
            W_{v-1}^{(1:N)}
            \big)$ 
            \State Draw $\tilde{X}_v^{(i)} \sim \bar{M}_{v}(\uarg \mid
            \tilde{X}_{v-1}^{(\tilde{A}_{v-1}^{(i)})},
            \tilde{X}_u^{*})$ for $i\neq b_v^*$ and set 
            $\tilde{X}_v^{(b_v^*)} =
            x_v^*$
            \State Set $\tilde{\vec{X}}_{v}^{(i)}
            \gets (X_{v-1}^{(\tilde{A}_{v-1}^{(i)})}, \tilde{X}_v^{(i)})$
            for $i\in[N]$.
            \State $W_v^{(1:N)} \gets
            W_{v-1}^{(\tilde{A}_{v-1}^{(1:N)})}$
          \EndFor \label{line:bridge-end}
    \State Draw
    $\tilde{B}_{u-1}\sim\Categ(\tilde{\omega}_{u-1}^{(1:N)})$ where 
        $\tilde{\omega}_{u-1}^{(j)} = 
        G_{u-1}(
\tilde{\vec{X}}_{u-1}^{(j)})
        G_u(\tilde{X}_{u-1}^{(j)}, x_u^*)
        W_{u-1}^{(j)}$
     \State $\tilde{B}_{\ell:u-2} \gets 
          \text{\textsc{AncestorTrace}}(\underline{\tilde{A}}_{\ell:u-2}, \tilde{B}_{u-1})$
          \label{line:bridge-trace}
    \State \textbf{output}
    $\big((x_\ell^{(\tilde{B}_\ell)},\tilde{X}_{\ell+1:u-1}^{(\tilde{B}_{\ell+1:u-1})}),
\tilde{B}_{\ell:u-1}\big)$
\end{algorithmic}
\end{algorithm}

The following result, whose proof is given in Appendix \appref{app:bbcpf}, ensures the validity of CPF-BBS.

\begin{theorem} 
    \label{thm:bbcpf-valid} 
Assume that $N\ge 2$, $r^{(i,k)}$ is a valid conditional resampling (Definition \ref{def:valid-conditional-resampling}) and Assumption \ref{a:conditionals} holds.
Then, the Markov update
$(X_{1:T}^*, B_{1:T})\to (\tilde{X}_{1:T}, \tilde{B}_{1:T})$
defined by Algorithm \ref{alg:bbcpf} leaves $\pi\times \mathcal{U}([N]^T)$ invariant.
\end{theorem}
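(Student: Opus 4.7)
The plan is to decompose Algorithm~\ref{alg:bbcpf} into the forward CPF call on line~1 and the sequence of \textsc{BridgeCPF} updates on lines~2--6 indexed by $k=L,L-1,\ldots,2$, and to verify that each step preserves the target on an enlarged state space that records the entire pool of particles, the ancestor indices, and the current reference index vector. Marginalising out everything except $(\tilde{X}_{1:T}^*,\tilde{B}_{1:T})$ then yields invariance of $\pi\times\mathcal{U}([N]^T)$. The forward step is handled by Theorem~\ref{thm:cpf-valid}: after line~1, the joint law of $(\underline{\vec{X}}_{1:T},\underline{A}_{1:T-1},\tilde{B}_T)$ is precisely the one induced by starting the CPF at $(X_{1:T}^*,B_{1:T})\sim\pi\times\mathcal{U}([N]^T)$, so for every block $[\ell,u]=[T_{k-1},T_k]$ the triple $(\underline{\vec{X}}_\ell,X_u^{(\tilde{B}_u)},B_{\ell:u-1}^*)$ produced by the ancestor trace on line~4 has a tractable joint law that I would characterise explicitly.

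The central step is to recast each \textsc{BridgeCPF} call (Algorithm~\ref{alg:bridge}) as a CPF (Algorithm~\ref{alg:cpf-core}) applied to an auxiliary Feynman--Kac model on the block times $\ell,\ell+1,\ldots,u-1$. The proposals of this auxiliary model are the bridging kernels $\bar{M}_v(\,\uarg\mid x_{v-1},\tilde{X}_u^*)$ from Assumption~\ref{a:conditionals}, and the potentials combine $G_v$ with the bridging correction $W_v\defeq M_{u\mid\ell}(\tilde{X}_u^*\mid x_\ell)^{1/(u-\ell)}$ that is transported along ancestral lines via $W_v^{(i)}=W_{v-1}^{(\tilde{A}_{v-1}^{(i)})}$; the final weight at time $u-1$ additionally absorbs $G_u(x_{u-1},\tilde{X}_u^*)$, which matches the form of $\tilde{\omega}_{u-1}^{(j)}$ used to draw $\tilde{B}_{u-1}$ in Algorithm~\ref{alg:bridge}. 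Using Assumption~\ref{a:conditionals}, the product telescopes as $M_{u\mid\ell}(\tilde{X}_u^*\mid x_\ell)\prod_v\bar{M}_v(x_v\mid x_{v-1},\tilde{X}_u^*)=\prod_v M_v(x_v\mid x_{v-1})$, so the auxiliary Feynman--Kac target coincides, up to a constant depending only on the boundary data, with the block-conditional of $\pi$ given $x_u=\tilde{X}_u^*$ and $x_\ell$ chosen from the pool. Theorem~\ref{thm:cpf-valid} applied to this auxiliary CPF, together with validity of $r^{(p,n)}$ (Definition~\ref{def:valid-conditional-resampling}), then yields that the bridge preserves this block-conditional on the extended state.

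The main obstacle is the careful bookkeeping needed for the inductive step: before each bridge call one must verify that the joint distribution of the inputs $(\underline{\vec{X}}_\ell,B_{\ell:u-1}^*,\tilde{X}_u^*)$ agrees with the initial conditions assumed by the auxiliary CPF above (which in particular uses the already-existing pool, rather than a freshly sampled particle system, as its initial layer), and afterwards one must identify the output of Algorithm~\ref{alg:bridge} (produced by the ancestor trace on line~\ref{line:bridge-trace}) with the output of CPF-AT (Algorithm~\ref{alg:cpf}) on the auxiliary model. Propagating invariance through the blocks in the backward order $k=L,\ldots,2$, with the upper endpoint state of each block held fixed (either by the forward step or by the previously-processed bridge) while the block is updated, and finally marginalising out the full particle system, delivers the claimed invariance of $\pi\times\mathcal{U}([N]^T)$.
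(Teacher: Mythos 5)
Your overall architecture is exactly the paper's: handle line~1 by Theorem~\ref{thm:cpf-valid}, view each \textsc{BridgeCPF} call as a conditional particle filter on an auxiliary Feynman--Kac model with proposals $\bar{M}_v(\uarg\mid x_{v-1},\tilde{X}_u^*)$ and potentials $G_v$ tempered by the transported correction $M_{u\mid\ell}(\tilde{X}_u^*\mid x_\ell)^{1/(u-\ell)}$, exploit the telescoping identity $M_{u\mid\ell}(x_u\mid x_\ell)\prod_v\bar{M}_v(x_v\mid x_{v-1},x_u)=\prod_v M_v(x_v\mid x_{v-1})$, induct backwards over the blocks, and marginalise at the end. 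The paper formalises this via the explicit family of extended distributions $\mu_u^{(N)}(\underline{x}_{1:u},\underline{a}_{1:u-1},b_u,x_{u+1:T}^*)=\check{\pi}_u^{(N)}(\underline{x}_{1:u},\underline{a}_{1:u-1},b_u)\,\gamma_{u:T}(x_u^{(b_u)},x_{u+1:T}^*)$, and Lemma~\ref{lem:bbcpf-recurse} shows that partial ancestor tracing and the bridge each map $\mu_u^{(N)}$ to $\mu_\ell^{(N)}/N^{u-\ell}$; iterating from $u=T$ down to $1$ gives the claim.

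The one place your proposal would not go through as written is the sentence ``Theorem~\ref{thm:cpf-valid} applied to this auxiliary CPF \ldots yields that the bridge preserves this block-conditional.'' Theorem~\ref{thm:cpf-valid} assumes the reference path is drawn from the target and the non-reference initial particles are i.i.d.\ from $M_1$; in the bridge step the initial layer is the inherited, highly dependent pool $\underline{\vec{X}}_\ell$ from the forward pass, the ``reference'' indices $B_{\ell:u-1}^*$ are produced by ancestor tracing rather than being uniform, and the object whose law must be preserved is the whole extended particle system (including $\underline{X}_{1:\ell}$ and $\underline{A}_{1:\ell-1}$), not $\pi\times\mathcal{U}([N]^T)$. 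You correctly flag this as the ``main obstacle,'' but resolving it is the actual technical content of the proof: one has to identify the right invariant family ($\mu_u^{(N)}$ in the paper) and redo the density manipulation of Theorem~\ref{thm:cpf-valid}'s proof in that setting, using Lemma~\ref{lem:valid-conditional-resampling}(ii) to swap conditional and unconditional resampling densities, marginalising over $b_{\ell:u-1}$, and then running the symmetric calculation in reverse for $\tilde{b}_{\ell:u-1}$. So: right decomposition, right auxiliary model, right telescoping identity, but the induction hypothesis and the bridge-step verification still need to be constructed rather than borrowed from Theorem~\ref{thm:cpf-valid}.
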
 

With dense blocking sequence $T_{1:T} = 1{:}T$,
the bridging CPF and its tracing (lines \ref{line:bridge-start}--\ref{line:bridge-end} and \ref{line:bridge-trace} of Algorithm \ref{alg:bridge}, respectively) are eliminated, and therefore the CPF-BBS simplifies to the backward sampling CPF (CPF-BS) of \cite{whiteley-backwards-note}. This means that the CPF-BBS can be viewed as a true generalisation of CPF-BS for arbitrary blockings.

The other extreme case, that is, the trivial blocking sequence $T_1=1$, $T_2=T$ leads to running a CPF and then \emph{another} CPF 
with same initial particles and targeting the conditional distribution $\pi(x_{1:T-1}\mid \tilde{X}_T^*)$ (cf.~Lemma \sref{lem:bbcpf-recurse}). This may not be practically useful, but can give insight about what the `bridge CPF' is about. 

We conclude this section with two remarks about methods related to
Algorithm \ref{alg:bbcpf}.
\begin{enumerate}[(i)]
\item If we modify the algorithm by replacing \textsc{BridgeCPF} by the following algorithm:
  \begin{enumerate}[\small 1:]
    \item Set $\tilde{X}_{k}^{(B_\ell^*)} = X_{k}^{(B_{k}^*)}$ for $k=\ell{:}(u-1)$.
    \item For
    $i\neq B_\ell^*$, set $\tilde{X}_\ell^{(i)}= X_\ell^{(i)}$ and $\tilde{X}_{k}^{(i)}\sim \bar{M}_{v}(\uarg\mid \tilde{X}_{k-1}^{(i)}, X_u^{(B_u^*)})$ for $k=(\ell+1){:}(u-1)$. 
  \item Choose
    $\tilde{X}_{\ell:u-1}^{(i)}$ with probability proportional to
$M_{u\mid \ell}(\tilde{X}_u^*\mid \tilde{X}_\ell^{(i)})
      \prod_{v=\ell}^{u-1} G_v(\tilde{X}_v^{(i)})$,
\end{enumerate}
then we get a CPF version of the extended importance sampling for particle filters suggested in \citep{doucet-briers-senecal}.
We did not investigate such 'importance bridging' method further, but we believe that our bridge CPF generally allows for longer block sizes, for similar reasons why particle filters tend to be more efficient than importance sampling.
\item Algorithm \ref{alg:bbcpf} has similarities with the blocked particle Gibbs (or blocked CPF) of \cite{singh-lindsten-moulines}, but we believe that CPF-BBS can be substantially more efficient with the same computational complexity, because: 
\begin{itemize}
    \item CPF-BBS uses a block-wide `lookahead' which is possible
      to implement thanks to Assumption \ref{a:conditionals}, instead of using a
      modified potential only at the last time instant like (the direct implementation of) blocked particle Gibbs.
    \item The BBS update is not conditioned on a \emph{single} point at the block start, like the blocked particle Gibbs, but uses \emph{all particles} which were generated by the `forward' CPF.  
      (Algorithm \ref{alg:cpf-core}).
\end{itemize}
Note, however, that the blocked particle Gibbs is directly parallelisable unlike the CPF-BBS.
\end{enumerate}
Finally, we note that the reverse update order of the blocks occurs since a block's update depends on the value at the lower boundary of the subsequent block. 
Although not pursued here, it might also be possible to devise a forward only implementation, following \cite{lindsten-jordan-schon}. 

\section{Blocking sequence selection} \label{sec:blocking-seq-selection} 

The CPF-BBS (Algorithm \ref{alg:bbcpf}) is valid with any choice of the blocking sequence $T_{1:L}$.
However, its choice affects simulation efficiency, that is, the mixing of the
Markov chain.
In this section, we discuss a computationally inexpensive method that can be used in practice
to determine a suitable blocking sequence 
prior to running the CPF-BBS in order to facilitate efficient mixing.

We begin in Section \ref{sec:plu} by discussing a proxy for the integrated autocorrelation time (IACT) of the Markov chain output by the CPF-BBS. 
Then, Section \ref{sec:plu-estimator} details an estimator we have developed for the proxy.
Finally, Section \ref{sec:blocking-alg} describes a practical algorithm for blocking sequence selection that is based on the estimator of Section \ref{sec:plu-estimator}. 
We will study the methods presented in this section empirically in Section \ref{sec:experiments}. 

\subsection{The probability of lower boundary updates (PLU)} 
\label{sec:plu}

A theoretically attractive candidate strategy for blocking sequence selection is monitoring the IACT for variables of interest, based on the output of the CPF-BBS. 
Efficient inference could then be obtained by choosing the blocking sequence that minimises the IACT.
However, this approach is typically computationally demanding or even infeasible, since the estimation of the IACT is notoriously difficult and often requires extensive simulation of Markov chains. 

For these reasons, we base the selection of the blocking sequence on a proxy for IACT that is easier to work with. 
We call the proxy the `probability of lower boundary updates' ($\plu$),
and its definition for the block $(\ell, u)$, using the notation of Algorithm \ref{alg:bridge}, is:
\begin{equation}
  \plu(\ell, u) \defeq \mathbb{P}(X_\ell^{(\tilde{B}_{\ell})} \neq X_{\ell}^{(b_{\ell}^{*})}).
\end{equation}
In other words, $\plu(\ell, u)$ measures 
the probability that the bridge CPF (Algorithm \ref{alg:bridge}) on block $(\ell, u)$
updates the value at the block lower boundary $\ell$. 
Intuitively, higher values of $\plu$ should be associated with lower IACT.
Indeed, our experiments in Section \ref{sec:experiments} indicate that maximising $\plu$ (with respect to the block size) appears to yield a block size that approximately minimises IACT. 

\subsection{Approximate estimator for the PLU} \label{sec:plu-estimator}

Even though $\plu(\ell, u)$ is much easier to estimate than IACT, it still requires iterating the CPF-BBS for each candidate blocking, which is computationally demanding. We have developed an estimator for $\plu(\ell,u)$ which avoids this, and is based on a single 'stationary' CPF state (the generated particles $X_{1:T}^{(1:N}$ and reference indices $B_{1:T}$), which is used for any block boundaries $\ell,u$. The practical algorithm postponed to Section \ref{sec:blocking-alg} will be based on this idea, but assumes further that such a stationary CPF state can be well approximated by an independent particle filter.

The estimator from a single CPF state is presented in \eqref{eq:plu-est} below and is based on two `asymptotic' characterisations for $\plu$, 
for small and large block sizes, respectively.
The idea behind the characterisations is that the event $X_\ell^{(\tilde{B}_{\ell})} \neq X_{\ell}^{(b_{\ell}^{*})}$
occurs when a trajectory traced back from the generated particle tree in the bridge CPF has a different value at the block lower boundary than the reference. 

Consider first the case of a small block size, that is, $u - \ell \approx 1$.
In this case, $\plu$ is approximately characterised by:
\begin{equation}
  \label{eq:plu-m}
  \plu_{\mathrm{M}}(\ell, u) \defeq 1 - \dfrac{M_{u \mid \ell}(X_u^{*} \mid X_\ell^{*})}{\sum_{j = 1}^{N} M_{u \mid \ell}(X_u^{*} \mid X_\ell^{(j)})},
\end{equation}
where $X_\ell^* \defeq X_\ell^{(B_\ell)}$ and $X_u^* \defeq X_u^{(B_u)}$ refer to the $\ell$th and $u$th value of a reference trajectory.
The rationale for \eqref{eq:plu-m} comes from CPF-BS being a special case of the CPF-BBS for the dense blocking with unit block sizes.
Letting $b_{t}$ and $b_{t + 1}$ denote the indices of the current reference, the probability of choosing $b_{t}$ in backward sampling \citep{whiteley-backwards-note} is given by:
\begin{equation*} 
     \mathbb{P}(B_{t} = b_t \mid B_{t + 1} = b_{t + 1}) \propto 
     w_t^{(b_t)} 
     M_{t+1}(X_{t + 1}^{(b_{t+1})} \mid X_{t}^{(b_t)})
     G_{t+1}(X_{t}^{(b_t)}, X_{t + 1}^{(b_{t+1})}).
\end{equation*}
Here, under the weak potential setting (i.e.~with approximately constant potentials), the right hand side approximately reduces to $M_{u \mid \ell}(X_{u}^{(b_{u})} \mid X_{\ell}^{(b_\ell)})$ since $\ell = t, u = t + 1$ with a unit block size.
The probability of choosing a non-reference is therefore approximately given by \eqref{eq:plu-m}.

On the other hand, if the block size is large, $\plu(\ell, u)$ is approximately characterised by: 
\begin{equation}
  \label{eq:plu-g}
  \plu_{\mathrm{G}}(\ell, u) \defeq \Bigg (1 - \frac{1}{N} \Bigg) \prod_{k = \ell}^{u - 1} \Bigg( 1 - \frac{p_k N}{(N - 1)^{2}} \Bigg),
\end{equation}
where the quantity $p_k$ equals the probability that a resampling event occurs, divided by $N$. 
In the case of systematic resampling with mean partitioned weights $W_k^{(1:N)}$, (see Appendix A, Lemma 28 of \cite{chopin-singh-soto-vihola}), $p_k$ may be calculated as follows (for normalised $W_k^{(1:N)}$):
\begin{equation}
  \label{eq:cond-sys-p}
   p_k = \frac{1}{2} \sum_{i = 1}^{N} \Big | W_k^{(i)} - \frac{1}{N} \Big |.
\end{equation}
The justification of \eqref{eq:plu-g} comes from a calculation detailed in Appendix \appref{sec:plu-g-derivation}, which shows that 
$\plu_{\mathrm{G}}(\ell, u)$ approximately equals the expected proportion of particles whose ancestor at time $\ell$ is not the reference after an `artificial' conditional particle system has evolved for $u - \ell$ time steps from time $\ell$.
Therefore, $\plu_{\mathrm{G}}(\ell, u)$ may be loosely interpreted as approximating the probability of choosing non-reference at time $\ell$, when the ancestry of a particle chosen uniformly at time $u$ is traced back until time $\ell$.
 
Our estimator for $\plu(\ell, u)$ is constructed by `interpolating' \eqref{eq:plu-m} and \eqref{eq:plu-g} such that 
\begin{equation}
  \label{eq:plu-est}
  \widehat{\plu}(\ell, u) \defeq \plu_{\mathrm{G}}(\ell, u) \plu_{\mathrm{M}}(\ell, u) \Bigg ( 1 - \frac{1}{N} \Bigg)^{-1},
\end{equation}
where the scaling is added so that the estimator approximately reduces to 
\eqref{eq:plu-m} and \eqref{eq:plu-g} for short and long blocks, respectively, 
in the weak potential setting. 

The estimator in \eqref{eq:plu-est} was derived assuming an access to CPF state with $N$ particles. It is also possible to estimate the $\widehat{\plu}(\ell, u)$ from a CPF (or particle filter) state which has a different number of particles $N_0$ (which is often useful to take `large' in practice so that $N_0\gg N$). In this case, we can estimate $\plu_{\mathrm{G}}$ and $\plu_{\mathrm{M}}$ as follows, and then use \eqref{eq:plu-est} with the desired $N$ in the scaling.

To estimate $\plu_{\mathrm{G}}$, we simply compute $p_k$ using \eqref{eq:cond-sys-p} from the $N_0$ particles and 
substitute it directly to \eqref{eq:plu-g} with the desired $N < N_0$.
For $\plu_{\mathrm{M}}$ we use the alternative estimator of the form
\begin{equation}
  \label{eq:plu-m-alt}
  \plu_{\mathrm{M}}(\ell, u) = 1 - \dfrac{c(\ell, u)}{c(\ell, u) + N-1},
\end{equation}
which follows by assuming that 
\begin{equation}
  \label{eq:plu-m-alt-assump}
  M_{u \mid \ell}(X_u^{*} \mid X_{\ell}^{*}) \approx c(\ell, u) M_{u \mid \ell}^{(T)},
\end{equation}
where
\begin{equation}
  \label{eq:typical-dens-est}
  M_{u \mid \ell}^{(T)} = \dfrac{1}{N_0 - 1} \sum_{j \neq B_\ell} M_{u \mid \ell}(X_u^* \mid X_\ell^{(j)}).
\end{equation}
In other words, the block transition density for the reference is assumed to be approximately equal to a constant $c(\ell, u)$ times a `typical' value of the block transition densities for particles not including the reference. 
The estimator \eqref{eq:plu-m-alt} may be derived by appropriate substitution of \eqref{eq:plu-m-alt-assump} and \eqref{eq:typical-dens-est} into \eqref{eq:plu-m}.

\subsection{Algorithm for blocking sequence selection} \label{sec:blocking-alg}

In this section we describe a practical method based on \eqref{eq:plu-est} to choose the blocking sequence.
Algorithm \ref{alg:evaluate-blocking-candidates} describes a method that uses \eqref{eq:plu-est} to evaluate $S$ candidate blocking sequences $(T_{1:L^{(s)}}^{(s)})_{s = 1, 2, \dots, S}$ in the context of the FK distribution $(M_{1:T}, G_{1:T})$. 
\begin{algorithm}
	\spacingset{1.0}
  \caption{\small \textsc{EvaluateBlockingCandidates}($\{T_{1:L^{(1)}}^{(1)}, \dots, T_{1:L^{(S)}}^{(S)}\}$, $M_{1:T}$, $G_{1:T}$, $N$, $n$)}
  \label{alg:evaluate-blocking-candidates}
  \begin{algorithmic}[1]
    \small
    \For{$j = 1, \dots, n$}
      \State $\underline{X}_{1:T}, \underline{A}_{1:T-1}, \underline{W}_{1:T} \gets \textsc{PF}(\rho_{\mathrm{syst}}, M_{1:T}, G_{1:T}, N)$ \label{line:pf-start} 
      \State Draw $B_T \sim \Categ(W_T^{(1:N)})$; 
      $B_{1:T-1} \gets \textsc{AncestorTrace}(A_{1:T-1}^{(1:N)}, B_T)$ \label{line:pf-end}
      \State $\phi_{\mathrm{PLU}}[:, :, j] \gets \textsc{EstimatePLU}(\{T_{1:L^{(1)}}^{(1)}, \dots, T_{1:L^{(S)}}^{(S)}\}$, $\underline{X}_{1:T}, \underline{W}_{1:T-1}, B_{1:T})$ \label{line:compute-block-effs}
          \EndFor 
    \For{$s = 1, \dots, S$} \label{line:summ-start}
      \State Set $\bar{\phi}_{\mathrm{PLU}}[i, s] = \textsc{Mean}(\phi_{\mathrm{PLU}}[i, s, :])$ for $i = 1, \dots, L^{(s)} - 1$.
    \EndFor \label{line:summ-end} \\ 
    \Return $\bar{\phi}_{\mathrm{PLU}}$ 
  \end{algorithmic}
\end{algorithm}
The additional parameters $N$ and $n$ stand for the number of particles and number of iterations, which are tuning parameters of the blocking candidate evaluation.
Here, we use indexing notation where $A[i, j, k]$ stands for the element in row $i$, column $j$ and slice $k$ in an array $A$.
Furthermore, the columns of arrays need not have the same number of rows, and indexing operations with `$:$' mean `all elements' in the particular dimension. 

One iteration of the main loop in Algorithm \ref{alg:evaluate-blocking-candidates} consists of running the standard particle filter (Algorithm \ref{alg:pf}) with mean partitioned systematic resampling followed by a traceback using ancestor tracing (Algorithm \ref{alg:at}) in lines \ref{line:pf-start}--\ref{line:pf-end}.
Then, given the output of the particle filter, we estimate $\plu$ using Algorithm \ref{alg:estimate-plu} (see below) on line \ref{line:compute-block-effs} for each block $(\ell, u)$ within each blocking sequence.
The computation is a straightforward application of Equations \eqref{eq:plu-m}--\eqref{eq:plu-est} using the particle filtering results.
Finally, lines \ref{line:summ-start}--\ref{line:summ-end} summarise the estimates of $\plu$ by taking their mean over the $n$ replicate runs of the particle filter.
The element $\bar{\phi}_{\mathrm{PLU}}[i, s]$ in the output of Algorithm \ref{alg:evaluate-blocking-candidates} describes in terms of $\plu$, how efficient the $i$th block in the blocking sequence $s$ was.

\begin{algorithm}
	\spacingset{1.0}
  \caption{\small \textsc{EstimatePLU}($\{T_{1:L^{(1)}}^{(1)}, \dots, T_{1:L^{(S)}}^{(S)}\}$, $\underline{X}_{1:T}, \underline{W}_{1:T-1}, B_{1:T}$)}
  \label{alg:estimate-plu}
  \begin{algorithmic}[1]
    \small
    \State Compute $p_k$ for $k = 1, \dots, T-1$ using \eqref{eq:cond-sys-p}.
  \For{$s = 1, \dots, S$}
        \For{$i = 1, \dots, L^{(s)} - 1$}
          \State Set $\ell = T_{i}^{(s)};$ 
          $u = T_{i + 1}^{(s)};$
          $X_{\ell}^{*} = X_{\ell}^{(B_\ell)};$ 
          $X_{u}^{*} = X_{u}^{(B_u)}$
          \State Compute $\plu_M(\ell, u)$ using \eqref{eq:plu-m} and
          $\plu_G(\ell, u)$ using \eqref{eq:plu-g} 
          \State Set $\phi_{\mathrm{PLU}}[i, s] = \widehat{\plu}(\ell, u)$
          given in \eqref{eq:plu-est}
        \EndFor
  \EndFor \\
    \Return $\phi_{\mathrm{PLU}}$
  \end{algorithmic}
\end{algorithm}

Algorithm \ref{alg:evaluate-blocking-candidates} may in principle be used to evaluate any candidate blocking sequence, but we suggest to use it with Algorithm \sref{alg:dyadic-candidate-blockings} given in Appendix \appref{sec:construct-dyadic-blockings} that constructs dyadic blocking sequences: the block sizes $T_{k+1} - T_k$ are powers of two. 
More precisely, if $T = 2^{p^{*}} + 1$ for some $p^{*}$, Algorithm \sref{alg:dyadic-candidate-blockings} returns blocking sequences $T^{(i)}_{1:L^{(i)}}$ for $i = 1, 2, \dots, p^{*} + 1$,
where the block sizes of the $i$th sequence are all constant $2^{i-1}$, except for a possible `residual block' of length $< 2^{i-1}$ as the last block in each sequence $i$.

Finally, Algorithm \ref{alg:choose-blocking} describes a method based on Algorithms \ref{alg:evaluate-blocking-candidates} and \sref{alg:dyadic-candidate-blockings}
for choosing a single blocking sequence to be used with the CPF-BBS and a given FK distribution. 
In summary, Algorithm \ref{alg:choose-blocking} first constructs the candidate blocking sequences using Algorithm \sref{alg:dyadic-candidate-blockings}. 
Then, Algorithm \ref{alg:evaluate-blocking-candidates} is run to obtain $\bar{\phi}_{\mathrm{PLU}}$ given these sequences.
The data $\bar{\phi}_{\mathrm{PLU}}$ is then reinterpreted as a set of elements $D_{\mathrm{PLU}}$, whose element $(\ell, b, e_{\plu})$
describes the estimated $\plu$, $e_{\plu}$, of the block with lower boundary $\ell$ and upper boundary $\ell + b$. 
Finally, $D_{\mathrm{PLU}}$ is processed such that blocking sequences with largest block sizes are considered first, and at each block lower boundary, the best performing block size in terms of the estimated $\plu$ is selected to the output blocking sequence.
\begin{algorithm}
	\spacingset{1.0}
  \caption{\small \textsc{ChooseBlocking}($M_{1:T}, G_{1:T}, N, n)$}
  \label{alg:choose-blocking}
  \begin{algorithmic}[1]
    \small
    \State $\{T_{1:{L^{(1)}}}^{(1)}, \ldots, T_{1:{L^{(p)}}}^{(p)} \} \gets \textsc{DyadicCandidateBlockings}(T)$
    \State $\bar{\phi}_{\mathrm{PLU}} \gets \textsc{EvaluateBlockingCandidates}(\{T_{1:{L^{(1)}}}^{(1)}, \ldots, T_{1:{L^{(p)}}}^{(p)} \}, M_{1:T}, G_{1:T}, N, n)$
	  \State Compute $D_{\mathrm{PLU}}$, a container with elements of the form $(\ell, b, e_{\plu})$ based on $\bar{\phi}_{\mathrm{PLU}}$.
    \State Initialise $D$, an empty container for elements of the form $(\ell, b)$. 
    \For{$s = p, p - 1, \ldots, 1$}
       \State Get lower boundaries and block sizes $(\ell_k, b_k)$ for $k = 1, \ldots, L^{(s)} - 1$ from $T_{1:L^{(s)}}^{(s)}$.
      \For{$k = 1, \ldots, L^{(s)} - 1$}
        \State Denote by $D_{\mathrm{PLU}}^{(\ell_k)}$ all elements of $D_{\mathrm{PLU}}$ whose block lower boundary equals $\ell_k$.
		  \If{maximal $e_{\mathrm{PLU}}$ is reached when block size equals $b_k$ among elements of $D_{\mathrm{PLU}}^{(\ell_k)}$}
           \State Add $(\ell_k, b_k)$ to $D$.
           \State Remove all elements of $D_{\mathrm{PLU}}$ with $\ell$ such that $\ell_k \leq \ell < \ell_k + b_k$.
        \EndIf
      \EndFor
    \EndFor \\
    \Return Blocking sequence constructed from elements of $D$.
  \end{algorithmic}
\end{algorithm}

\section{Linear diffusions with path integral weights} \label{sec:fk-path-integral} 

We discuss next a class of continuous-time models and their discretisations, for which the methods of Section \ref{sec:cpf-bbs}--\ref{sec:blocking-seq-selection} are particularly useful. We will consider instances of these models also in the experiments (Section \ref{sec:experiments}).

We start with the continuous-time model on a time interval $[0,\tau]$. The prior dynamics $\mathbb{M}$ correspond to the solution of a linear stochastic differential equation (SDE):
\begin{equation}
  \label{eq:lin-sde}
  \ud X_t = \mathbf{F} X_t \ud t + \mathbf{K} \ud \mathbf{B}_t,
  \qquad X_0 \sim N(\mu_\mathrm{init}, \Sigma_\mathrm{init})
\end{equation}
where $\mathbf{B}_t$ is a $d$-dimensional Brownian motion and $\mathbf{F}$ and $\mathbf{K}$ are matrices of appropriate dimension, and $\mu_\mathrm{init}$ and $\Sigma_\mathrm{init}$ are the mean and covariance of the initial distribution, respectively. The law of interest is $\mathbb{M}$ weighted by non-negative weights of the form
$w(x_{[0,\tau]}) = \exp(- \int_{0}^{\tau} \mathcal{V}(x_u) \ud u )$,
where $\mathcal{V}:\X\to[0,\infty]$ are `potential' functions that `penalise' the trajectories of $\mathbb{M}$. That is, the distribution of interest is proportional to $\mathbb{M}(\ud x_{[0,\tau]}) w(x_{[0,\tau]})$.

In practice, we assume a time discretisation of $[0,\tau]$, $0 = t_1 < t_2 < \cdots < t_T = \tau$,
which leads to the discrete-time FK distribution \eqref{eq:target}.
The dynamics $M_{1:T}$ in \eqref{eq:target} correspond to the marginals of $X_{[0,\tau]}\sim \mathbb{M}$, that is:
\begin{equation}
  \label{eq:fk-path-integral-M}
  \begin{aligned}
    M_1 &= \mathrm{Law}(X_{t_1}) = N(\mu_\mathrm{init},\Sigma_\mathrm{init}) \\ 
    M_k(\uarg\mid x) &= \mathrm{Law}(X_{t_k} \mid X_{t_{k-1}}=x) \qquad \text{for } 2 \leq k \leq T,
  \end{aligned}
\end{equation}
which are linear-Gaussian. Appendix \appref{sec:lgssm-conditionals} details how $M_k$ can be derived from the parameters of the SDE, and also how their necessary conditional distributions required by Assumption \ref{a:conditionals} can be determined. 
The potential functions $G_{1:T}$ in \eqref{eq:target} stem from approximating the path integral by a Riemann sum:
\begin{equation}
  \label{eq:path-integral-approx}
  w(x_{[0,\tau]}) = 
  \prod_{k=1}^{T-1} \exp{\Bigg( - \int_{t_k}^{t_{k+1}} \mathcal{V}(x_u) \ud u \Bigg)}
  \approx \prod_{k=1}^{T-1} \exp{\Bigg( -|\Delta_k| \mathcal{V}(x_{t_k}) \Bigg)},
\end{equation}
where $\Delta_k = [t_k, t_{k+1})$ and $|\Delta_k| = t_{k+1} - t_k$.
This leads to potentials of the following form:
\begin{equation}
  \label{eq:fk-path-integral-G}
  \begin{aligned}
    G_1(x_{t_1}) &= \exp\big(-(t_2 - t_1) \mathcal{V}(x_{t_1}) \big) \\
    G_k(x_{t_{k-1}}, x_{t_{k}}) &= \exp\big(-( t_{k + 1}-t_{k})  \mathcal{V}(x_{t_{k}})\big) \text{ for } 2 \leq k \leq T - 1, \text{ and } 
    G_T \equiv 1.
  \end{aligned}
\end{equation}
\begin{remark}
  The scenario detailed above can be generalised and/or modified in a number of ways. Indeed, the potentials $G_k$ can also include purely discrete-time elements, as in our Cox process experiment (Section \ref{sec:experiments-blocking}).
  The law $\mathbb{M}$, or equivalently $M_k$, can also correspond to the law of linear SDE \emph{conditioned on} a number of linear-Gaussian observations. In such a case, the distributions $M_k$ are still linear-Gaussian, and we can derive the required conditional laws. 
  This can be useful in many practical settings, and indeed was essential for our movement model example (Section \ref{sec:movement-modelling}).
\end{remark}
\section{Experiments} \label{sec:experiments} 

\subsection{Comparison of conditional resamplings} \label{sec:comparing-cond-res}

We first investigate the performance of the CPF-BBS (Algorithm \ref{alg:bbcpf}) using the conditional resamplings $\rho_{\mathrm{kill}}$ and $\rho_{\mathrm{syst}}$. 
For reference, we also study conditional multinomial resampling with conditioning indices $i$ and $k$, $\rho_{\mathrm{mult}}^{(i, k)}$.
This conditional resampling may be simply implemented by first drawing the 
ancestor indices $A^{(1:N)} \sim \Categ(w^{(1:N)})$ as in standard multinomial resampling, and then enforcing the condition $A^{(k)} = i$ (since $A^{(1:N)}$ are independent).

In this section, we study a correlated random walk incorporating a path integral type potential function, hereafter called the CTCRW-P model.
The dynamics of the model $X_t = (V_t \ L_t)^T$ are driven by the SDE
\begin{equation}
  \label{eq:ctcrwp-sde}
  \begin{aligned}
    \ud V_t &= -\beta_v V_t \ud t + \sigma \ud B_t \\
    \ud L_t &= [-\beta_x L_t + V_t] \ud t,
  \end{aligned}
\end{equation}
where $B_t$ is the standard Brownian motion, $\sigma$, $\beta_v$ and $\beta_x$ are parameters, and $(L_t)_{t \geq 0}$ and $(V_t)_{t \geq 0}$ represent location and velocity processes, respectively.
The FK representation \eqref{eq:fk-path-integral-M} \& \eqref{eq:fk-path-integral-G} of CTCRW-P is given by
$M_1 := N(0, S)$, $M_k(\uarg \mid x) := N(T_{t_{k - 1}, t_{k}} x, Q_{t_{k-1}, t_{k}}), \ \text{for } 2 \leq k \leq T$ and $\mathcal{V}(X_t) := L_{t}^2 / (2\eta^2)$.
Here, $\eta$ is a parameter, and $T_{t_{k-1}, t_{k}}$, $Q_{t_{k-1}, t_{k}}$ and $S$ are the transition matrix, conditional covariance matrix and stationary covariance matrix, respectively, arising in the solution of the linear SDE \eqref{eq:ctcrwp-sde}.
Their expressions are given in Appendix \appref{sec:ctcrwp-details}, in Equations \seqref{eq:ctcrwp-expm}, \seqref{eq:upper-tri-ctcrwp}--\seqref{eq:upper-tri-ctcrwp-eq} and \seqref{eq:ctcrwp-statcov}--\seqref{eq:ctcrwp-statcov-eq}, respectively.

We ran the CPF-BBS targeting CTCRW-P with the configurations 
$N \in \{2, 4, 8, 16, 32\}$, $\blocktime \in \{2^{-7}, 2^{-6}, \dots 2^6\}$ and  
$r \in \{ \rho_{\mathrm{syst}}, \rho_{\mathrm{kill}}, \rho_{\mathrm{mult}} \}$.
Here, $\blocktime$ parameterises the blocking sequence in terms of the `physical time' of the discretised SDE.
The blocksizes $T_{k + 1} - T_k$ in Algorithm \ref{alg:bbcpf}, may simply be obtained by dividing $\blocktime$ by $|\Delta_k|$ (see below).
For each run of the CPF-BBS, we used 21000 iterations with the first 1000 discarded as burn-in.

We set $\tau = 2^6$, $|\Delta_k| = 2^{-7}$, $\eta = 1.0$ and $\sigma \in \{0.125, 0.5, 2.0\}$, which controls the variability in the velocity process.
Each time, given $\sigma$, we solved for the parameters $\beta_x$ and $\beta_v$ such that the stationary covariance matrix \seqref{eq:ctcrwp-statcov} had unit variances on the diagonal.
This was done to ensure that the variability of the process remains similar as $\sigma$ changes.

The simulations were run with all combinations of the algorithm and model configurations described above.
We estimated PLU (discussed in Section \ref{sec:plu}) by tallying iterations where 
$x_\ell^{(\tilde{B}_{\ell})} \neq x_{\ell}^{(b_{\ell}^{*})}$ and dividing by their total,
and estimated the $\iact$ for $L_{0.0}$ using batch means \citep{flegal-batch}.
\begin{figure}
  \centering
  \includegraphics{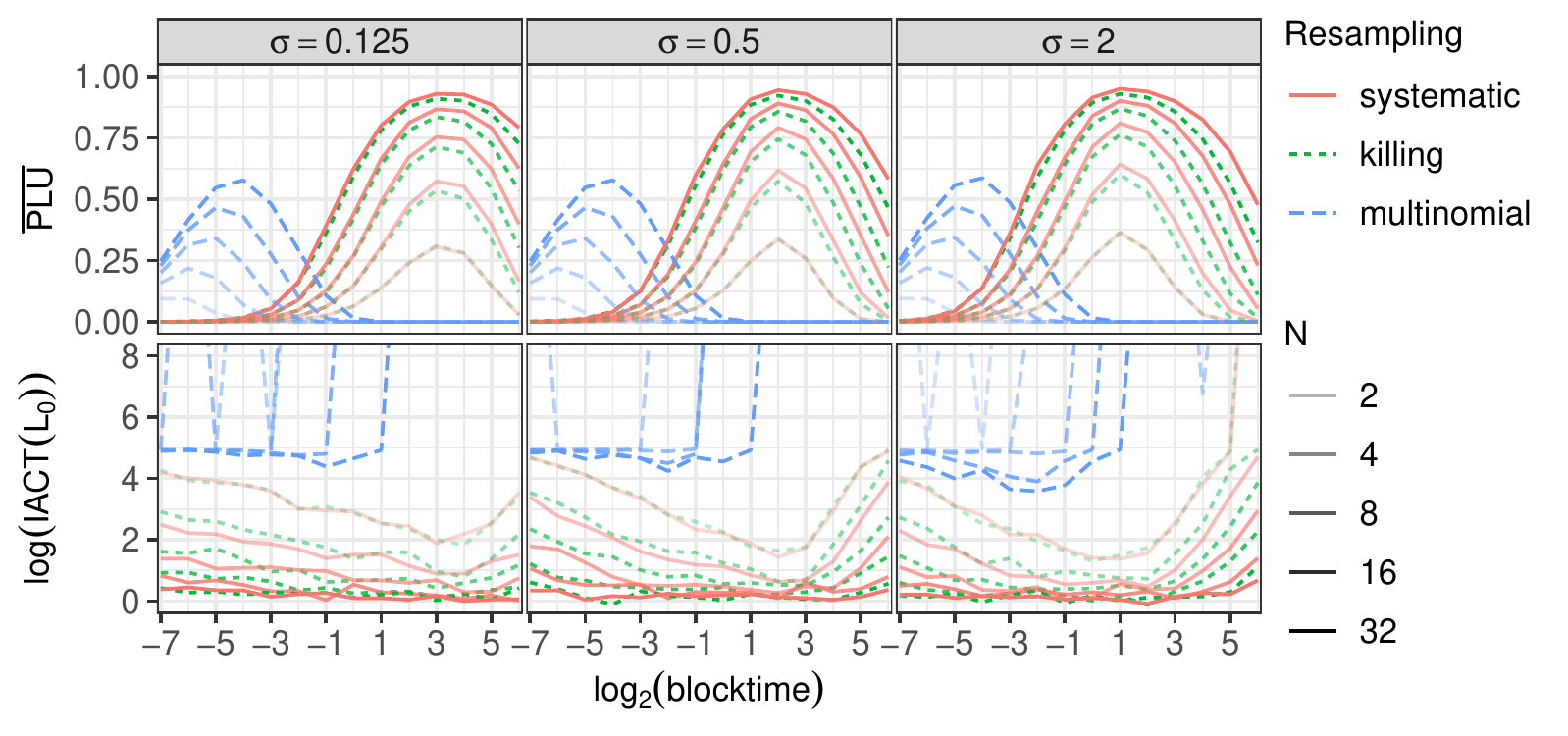}
  \caption{\small The estimated mean PLUs and the logarithm of IACT with varying $\sigma$ for the location state variable at time 
  $0.0$ in the CTCRW-P model. The value of $|\Delta_k|$ was set to $2^{-7}$. The performance of CPF-BS is seen at the far left, with $\blocktime = 2^{-7}$.}
  \label{fig:ctcrwp-param-vs-blocksize}
\end{figure} 
Figure \ref{fig:ctcrwp-param-vs-blocksize} summarises the results of this experiment. 
The mean $\plu$ shown in the top row is computed over the number of blocks (given here by $\tau / \blocktime$).
The figure shows systematic and killing resampling performing better than multinomial resampling,
which can be seen from the lower IACTs and higher mean PLU.
The performance with multinomial resampling is poor here, as expected, since the model has weak potentials with $|\Delta_k| = 2^{-7}$.
In contrast, killing and systematic resampling behave nearly uniformly, with systematic resampling performing slightly better. 
This finding aligns well with the theoretical and empirical findings in \citep{chopin-singh-soto-vihola} for the particle filter in a similar context of path integral potentials and $|\Delta_k|$ close to $0$. 

The CPF-BBS coincides with the CPF-BS when $\blocktime = |\Delta_k|$, which corresponds to the first value on the horizontal axis.
Even though increasing $N$ naturally improves the performance of the CPF-BS too, the CPF-BBS has better simulation efficiency with an appropriately
chosen blocktime, for any $N$ in the simulation. 
Note that the estimation of the IACT is quite noisy here, since the mixing is poor especially with multinomial resampling
and with poorly chosen blocking sequences induced by the value of $\blocktime$.
In contrast, the computed mean $\plu$ appears less noisy, and in the case of systematic and killing resampling the best blocktime in terms of $\iact$ is identified.

We also investigated the relationship of $\plu$ with $\iact_{32.0}$, and the findings were similar. 
A further experiment fixing $\sigma = 1.0$ and varying $\eta \in \{0.125, 0.5, 2.0\}$ instead also resulted in similar findings (see supplementary Figure \sref{fig:ctcrwp-potential-vs-blocksize-supp}). 

\subsection{Choice of the blocking sequence}
\label{sec:experiments-blocking}

As already illustrated empirically with Figure \ref{fig:ctcrwp-param-vs-blocksize} and discussed in Section \ref{sec:blocking-seq-selection}, the choice of the blocking sequence is a tuning parameter affecting the sampling efficiency of the CPF-BBS.
Figure \ref{fig:ctcrwp-sigma-iact-all-times} exemplifies this further by showing another look at the results obtained from the
experiment in the previous section. 
Here, the logarithm of the inverse relative efficiency (IRE) is plotted at each time point when systematic resampling was used.
The IRE is obtained by scaling the IACT by the number of particles, and measures the asymptotic efficiencies of estimators with varying computational costs \citep{glynn1992asymptotic}. 
The panes from left to right show the results with odd $\blocktime$ values and represent 
a range of algorithms beginning from the CPF-BS ($\blocktime = 2^{-7}$).
Here, the algorithms that are tuned well use only 4 particles, motivating the search for an appropriate blocktime (or blocking sequence).
By visual inspection, it appears that the best $\blocktime$ values found in the experiment are roughly $2^1$ for $\sigma = 2.0$, 
$2^3$ for $\sigma = 0.5$ and $2^3 - 2^5$ for $\sigma = 0.125$. 
When the value of $\sigma$ is decreased, larger $\blocktime$ achieves better performance, since a small $\sigma$
leads to `stiff' dynamics $M_{1:T}$. 
\begin{figure}
  \centering
  \includegraphics{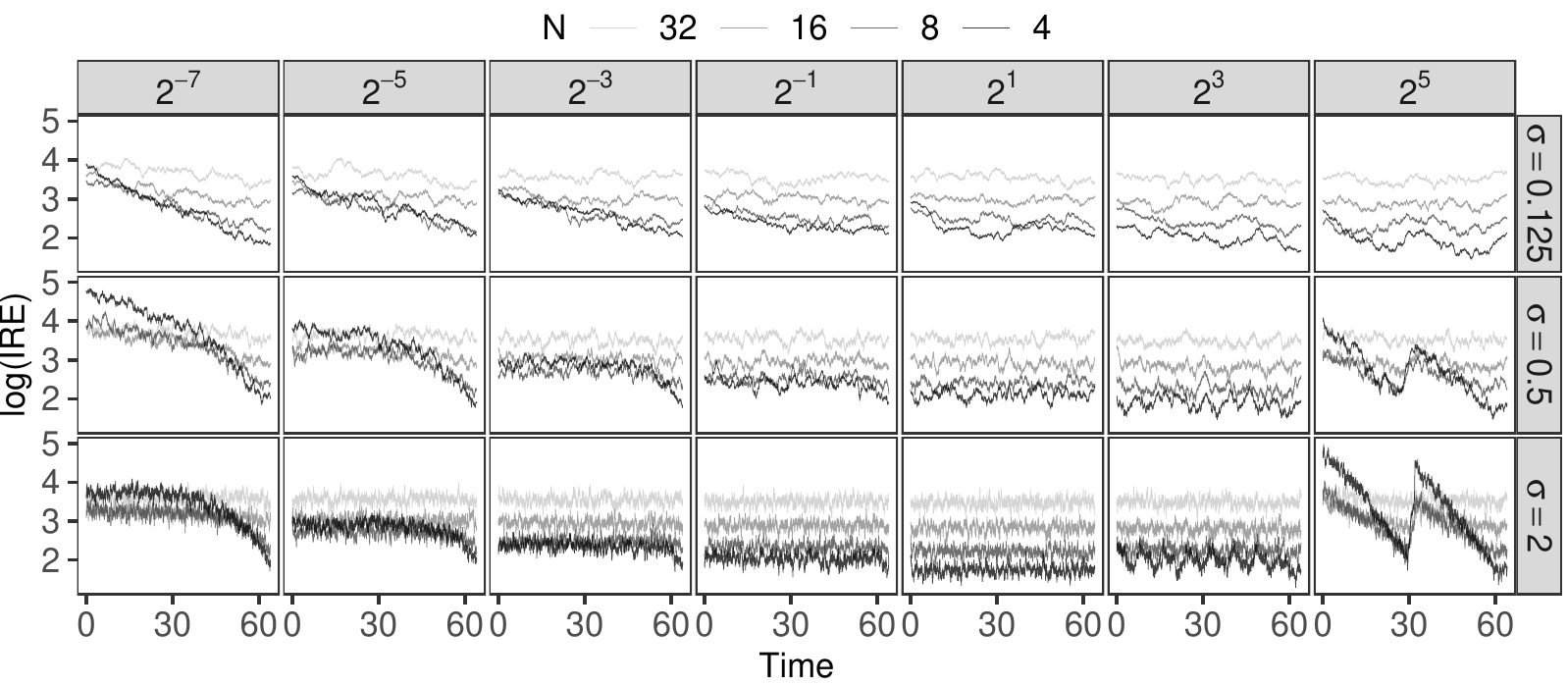}
  \caption{\small The logarithms of inverse relative efficiency obtained at each time point with conditional systematic resampling in the experiment discussed in Section \ref{sec:comparing-cond-res}. The columns show the results with varying $\blocktime$ in Algorithm \ref{alg:bbcpf}.}
  \label{fig:ctcrwp-sigma-iact-all-times}
\end{figure}

The best found blocktimes represent balances between two phenomena.
On the one hand, the blocks are large enough so that the dynamic model has sufficient time to bridge from the block lower boundaries to the values conditioned on at the upper boundaries (in Algorithm \ref{alg:bridge}). 
On the other hand, the blocks are small enough to avoid particle degeneracy within the blocks. 

Next, we investigate how well the estimates of $\bar{\phi}_{\plu}$ computed using Algorithm \ref{alg:evaluate-blocking-candidates} coincide with $\plu$.
We studied the relationship of $\bar{\phi}_{\mathrm{PLU}}$ and $\plu$ with respect to $\blocktime$ (that is, with blocking sequences constructed with constant block sizes) using the CTCRW-P model with $N \in \{2^1, 2^2, \ldots, 2^{10}\}$ and the parameter $\sigma \in \{0.03125, 0.125, 0.5, 2, 4\}$. 
The rest of the model configuration was as in Section \ref{sec:comparing-cond-res}.
To estimate $\plu$, 
we ran 1100 iterations of Algorithm \ref{alg:bbcpf} with the first 100 discarded burn-in, monitoring for each block the proportion of iterations where $x_\ell^{(\tilde{B}_{\ell})} \neq x_{\ell}^{(b_{\ell}^{*})}$.
In Algorithm \ref{alg:evaluate-blocking-candidates}, we used $n = 50$ runs of the particle filter and $N$ as reported above.
Figure \ref{fig:blocksize-tuning-ctcrwp} visualises the results for $N \in \{2, 8, 32, 128\}$ (the results for other $N$ yield no further conclusions).
The estimated $\plu$ and $\bar{\phi}_{\plu}$ appear to be in close agreement, with only slight discrepancies seen for large blocktimes. 
\begin{figure}
  \centering
  \includegraphics{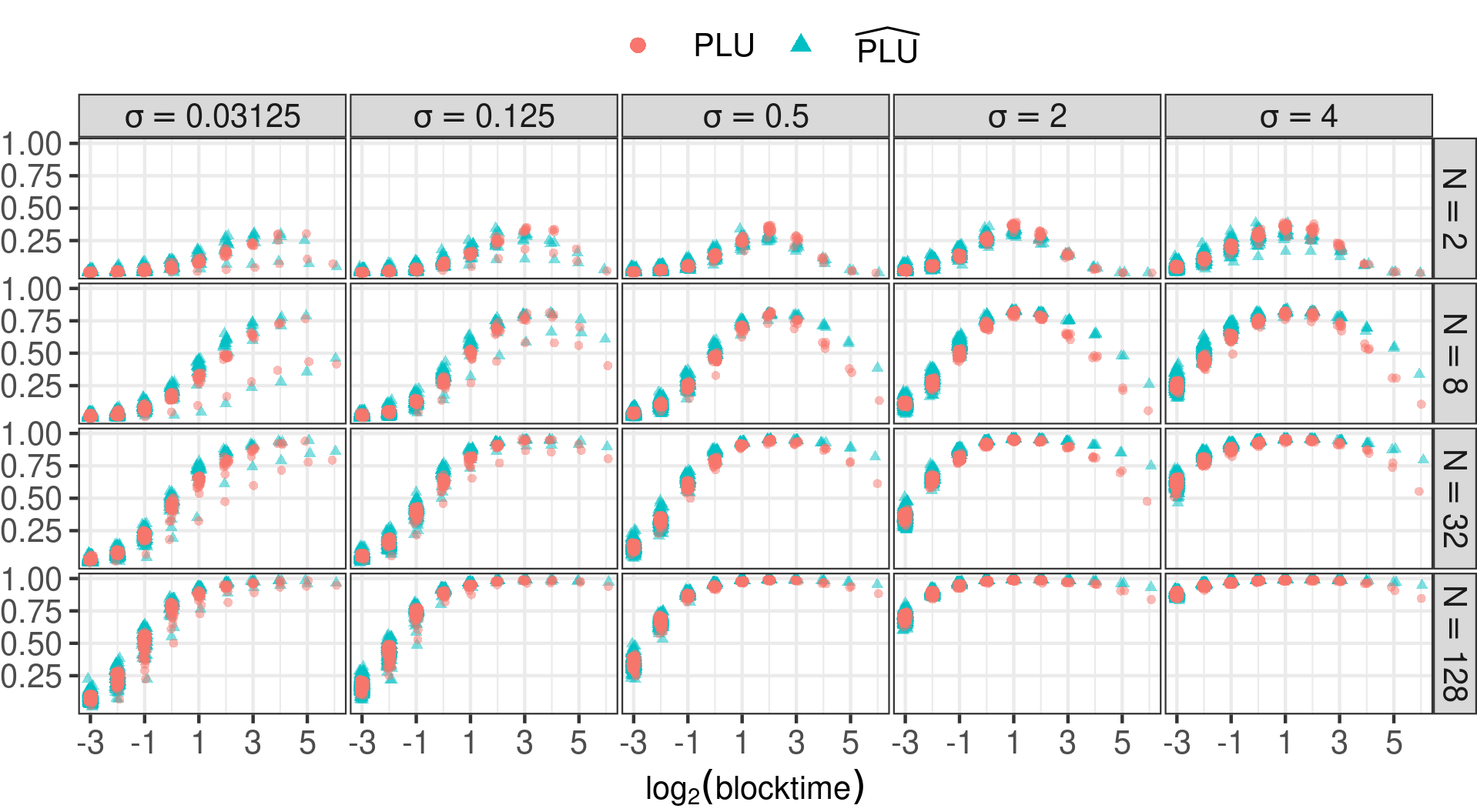}
  \caption{\small The $\plu$ (orange) and $\bar{\phi}_{\mathrm{PLU}}$ (light blue) for each block induced by the $\blocktime$ on the horizontal axis for the CTCRW-P. The points are slightly jittered for visualisation.}
  \label{fig:blocksize-tuning-ctcrwp}
\end{figure}
This finding motivates the use of $\bar{\phi}_{\mathrm{PLU}}$ as a maximisation criterion for finding a block size that 
likely results in a high overall $\plu$.

Next, we turned to study Algorithm \ref{alg:choose-blocking} for selecting the blocking sequence based on $\bar{\phi}_{\mathrm{PLU}}$. 
We investigated this with a model that slightly differs from the form \eqref{eq:fk-path-integral-G}, and is a Cox process model incorporating a reflected Brownian motion (CP-RBM) first appearing in \citep{chopin-singh-soto-vihola} and briefly detailed (with minor changes) below.

The CP-RBM model is a random intensity time-inhomogeneous Poisson process, generating observation sequences $\tilde{\tau}$. 
The intensity function is piecewise constant, and given by
\begin{equation}
  \label{eq:cp-rbm-intensity}
  \lambda(t) = \beta\exp{(-\alpha X_{t_{k}})}, \ \text{ for } t \in [t_{k}, t_{k + 1}),
\end{equation}
where $\alpha$ and $\beta$ are parameters.
The process $(X_{t_k})_{k = 1, \dots, T}$ is distributed such that 
\begin{equation}
    \label{eq:cp-rbm-dynamics}
    X_{t_1} \sim N^{(r)}(0, 1, a, b), \text{ and }
	 X_{t_k} \mid (X_{t_{k-1}} = x_{t_{k-1}}) \sim N^{(r)}(x_{t_{k-1}}, |\Delta_k|\sigma^2, a, b), \\
\end{equation}
where $N^{(r)}(\mu, \sigma ^ 2, a, b)$ is a distribution we call the `reflected normal distribution', with parameters $\mu$, $\sigma$ and bounds $a$ and $b$.
To simulate from $N^{(r)}(\mu, \sigma ^ 2, a, b)$, one first draws $Z \sim N(\mu, \sigma^2)$ and then sets $X = \mathrm{reflect}(Z, a, b)$,
where `reflect' is an operation that recursively reflects (that is, mirrors over a boundary) $Z$ with respect to $a$ (if $Z < a$) or $b$ (if $Z > b$) until a value within $(a, b)$ is obtained and outputted.

To apply the CPF-BBS with the CP-RBM, we use the following FK representation:
\begin{equation}
  \label{eq:cp-rbm-fk}
  \begin{aligned}
    M_1 &:= N(0, 1), \text{ and }
    M_k(\uarg \mid x) := N(x, |\Delta_{k -1}| \sigma^2), \ \text{for } 2 \leq k \leq T \\
    G_1(x) &:= \dfrac{N^{(r)}(x; 0, 1, a, b)\exp{(-|\Delta_1|\beta\exp{(-\alpha x)})}}{N(x; 0, 1)} 
    (\beta\exp{(-\alpha x)})^{\charfun{\exists i \text{ s.t } \tilde{\tau}_i \in \Delta_1}}
                     \\
    G_k(x, y) &:= \dfrac{N^{(r)}(y; x, |\Delta_{k-1}|\sigma^2, a, b)
    \exp{(-|\Delta_k|\beta\exp{(-\alpha y)})}}{N(y; x, |\Delta_{k- 1}|\sigma^2)}\times \\ 
    &\phantom{:=}(\beta\exp{(-\alpha y)})^{\charfun{\exists i \text{ s.t } \tilde{\tau}_i \in \Delta_k}},
  \ \text{for } 2 \leq k \leq T \\
  \end{aligned}
\end{equation}
where $|\Delta_{T}| = 0$. Note that these $M_k$ differ from \citep{chopin-singh-soto-vihola} and satisfy Assumption \ref{a:conditionals}, even though the statistical dynamic model \eqref{eq:cp-rbm-dynamics} does not. Here, the intractable reflection part of the model dynamics is accounted for in the potential functions $G_k$.
The FK distribution above is valid for the inference of the CP-RBM in the situation that the time discretisation is made fine enough such that each $\Delta_k$ contains at most one observation.
The density $N^{(r)}(x; \mu, \sigma^2, a, b)$ contains an infinite sum (see Appendix \appref{sec:cp-rbm-details}), which we truncate to the first ten terms. 

We first drew a realisation of the process $X$ using \eqref{eq:cp-rbm-dynamics} with $|\Delta_k| = 2^{-6}$, $\sigma = 0.3$, $a = 0$, $b = 3$ and time interval length $\tau = 2^8$.
Then, conditional on this realisation, we simulated one dataset, $\tilde{\tau}$, from the Poisson process with intensity \eqref{eq:cp-rbm-intensity} with $\alpha = 1$ and $\beta = 0.5$.
Finally, we augmented the discretisation of the process $X$ with the time points $\tilde{\tau}$, leading to a model of the form \eqref{eq:cp-rbm-fk}.

For the blocking sequences, we considered the sequences induced by the constant blocktimes $\{2^{-6}, 2^{-5}, \ldots, 2^5\}$ and 
a (inhomogeneous) blocking sequence constructed using Algorithm \ref{alg:choose-blocking} with $n = 50$ and $N = 8$.
Here, a minor change to the choice of candidate blockings (that is, Algorithm \sref{alg:dyadic-candidate-blockings}) was done: instead of constructing them using
block sizes (integers) in powers of two as discussed in Section \ref{sec:blocking-seq-selection}, we constructed them using the power of two block\textit{times} $2^{-6}-2^{5}$ as this is more natural for a continuous-time model.
For each blocking sequence, we then applied the CPF-BBS with $N = 8$ for 26000 iterations with the first 1000 discarded as burn-in.

Figure \ref{fig:cp-rbm-summary} summarises the results of the experiment.
The top pane shows the true simulated state, the observations $\tilde{\tau}$ and the 50\% and 95\% credible intervals of the distributions $X_t \mid \tilde{\tau}$. 
The middle pane compares the IACTs obtained from the samples of said distributions with some of the considered blocking strategies; the inhomogeneous blocking is highlighted in red. 
The blocking strategies omitted from the figure yield no further conclusions and the IACTs for the blocktimes $> 2^3$ were greater than for the strategies depicted.
Finally, the bottom pane visualises the inhomogeneous blocking sequence obtained using Algorithm \ref{alg:choose-blocking}.
\begin{figure}
  \centering
  \includegraphics{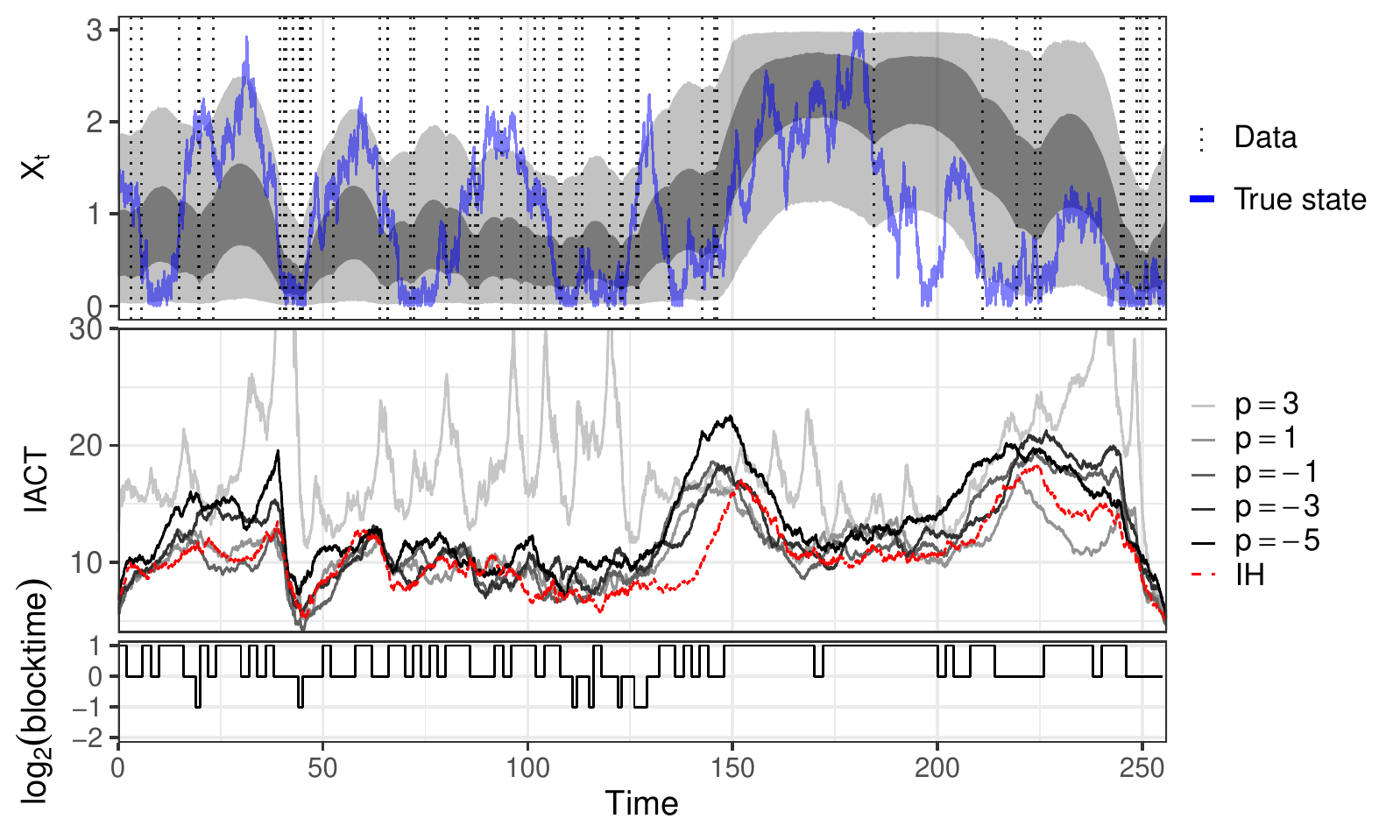}
  \caption{\small Top: the observations $\tilde{\tau}$,  the true simulated state and 
  the 50\% and 95\% credible intervals for $X_t \mid \tilde{\tau}$ (shaded)
    of the CP-RBM model. Middle: the IACTs with homogeneous blocking with blocktime $2^p$  (shades of gray), and with the inhomogeneous blocking $\mathsf{IH}$ (red dashed line). Bottom: the inhomogeneous blocking from Algorithm \ref{alg:choose-blocking} using $n = 50$ and $N = 8$.}
  \label{fig:cp-rbm-summary}
\end{figure}

In terms of the IACT, the blocking sequence returned by Algorithm \ref{alg:choose-blocking} appears to perform
similarly to the best choices for the blocking sequences constructed with constant blocktimes, indicating that
the method here provides adequate performance without trial runs of the CPF-BBS.
The bottom pane shows how the blocktime of the inhomogeneous blocking switches between $1/2, 1$ and $2$.

\subsection{Movement modelling with 
terrain preference} \label{sec:movement-modelling}

We conclude with an application of the CPF-BBS to a movement modelling scenario.
Here, we are interested in modelling the movement of an object on a plane based on noisy observations and knowledge of terrain in which the object moves.
We assume that the object has a `preference' for spending time in certain terrain types. 

To model such a setting, we build on the continuous-time correlated random walk (CTCRW) model suggested for animal movement modelling based on telemetry data \citep{johnson-ctcrw}.
The dynamics of the CTCRW model arise from a special case of the SDE \eqref{eq:ctcrwp-sde}, obtained by setting $\beta_x = 0$ and denoting $\beta \defeq \beta_v$. 
Using this SDE independently in $x$ and $y$ dimensions 
yields a 4-dimensional state $X_t = (V_t^{(x)}, L_t^{(x)}, V_t^{(y)}, L_t^{(y)})^T$ and a movement model on the plane, which we call the CTCRW SDE. 
The full CTCRW model also incorporates two-dimensional location observations $z = (z_k)_{k = 1, 2, \dots, K_z}$ observed at times $(\tilde{t}_k)_{k=1, 2, \ldots, K_z}$.
Each observation is related to the location state variables, $\mathbf{L}_{t} = (L_{t}^{(x)}, L_{t}^{(y)})^{T}$, with 
$z_k = \mathbf{L}_{\tilde{t}_k} + \epsilon_{k}$,
where $\epsilon_k \sim N(0, \eta^2 I_2)$, where $\eta$ is a standard deviation and $I_2$ stands for the $2 \times 2$ identity matrix.
We use the initial distribution $X_{t_1} \sim N( (0, z_{11}, 0, z_{12})^{T}, \mathrm{diag}(\sigma_V^2, \sigma_L^2, \sigma_V^2, \sigma_L^2))$,
where $z_{11}$ and $z_{12}$ are the first and second coordinates of the first observation, respectively, $\sigma_V^{2} = \sigma^2/(2\beta)$ (the stationary variance of the velocity component) and $\sigma_L^2$ is a parameter.
The details regarding the solution of the CTCRW SDE are given in Appendix \appref{sec:ctcrw-details}.

Our model, which we denote CTCRW-T ($T$ standing for `terrain') differs from the CTCRW model of \cite{johnson-ctcrw} by incorporating the effect of terrain. 
We use a discretisation of the CTCRW SDE conditioned on the observations as the sequence of $M_k$'s in the FK representation of the CTCRW-T. 
More specifically, we define 
\begin{equation}
  \label{eq:ctcrwt-M}
  \begin{aligned}
    M_1 & = \mathrm{Law} (X_{t_1} \mid Z = z), \\
    M_k(\uarg\mid x) & = \mathrm{Law} (X_{t_{k}} \mid X_{t_{k-1}} = x, Z = z), \qquad \text{for } 2 \leq k \leq T, 
  \end{aligned}
\end{equation}
where $X_t$ stands for the state of the CTCRW model at time $t$ and $Z$ stands for all observations and $z$ their realised values. 
The distributions in \eqref{eq:ctcrwt-M} are Gaussian, and they can be computed as marginal and conditional distributions of the joint normal distribution \eqref{eq:joint-normal} in Appendix \appref{sec:lgssm-conditionals}. 

The CTCRW-T models terrain preference through its potentials $G_k$ that are of the form
\eqref{eq:fk-path-integral-G}
with $\mathcal{V}(x) = -\log{(v_i)}$ when $x$ is in terrain $i$.
We call the values $v_i \in [0, 1]$, $i = 1, \dots, K_T$, `terrain coefficients', which induce the potential values for each of the $K_T$ terrain types.

We apply the CTCRW-T model in a region of Finland containing lakes, plotted in the background of Figure \ref{fig:ctcrw-vs-ctcrwt-trajectories}. 
The colors of the background map depict the value of $\mathcal{V}$, with black representing larger values, that is, lower potential. 
We define the terrain types based on the Corine Land Cover classification \citep{corine} which classifies each $20\times 20$ metre cell in Finland to one of five classes.  
The terrain types and their associated terrain coefficients (in parentheses) are `Artificial surfaces' (0.2), `Agricultural areas' (0.6), `Forests and semi-natural areas' (0.5), `Water bodies' (0.0) and `Wetlands' (0.5).
The terrain coefficient of `Water bodies' is set to zero, since we want to constrain the movement on land only.  

With the potential map constructed this way, we set $\tau = 16$ and hand-picked 16 observed locations in a clockwise pattern around the lakes, spacing the observation times equidistantly in time.
The observed locations appear as crosses in Figure \ref{fig:ctcrw-vs-ctcrwt-trajectories}.
\begin{figure}
  \centering
  \includegraphics{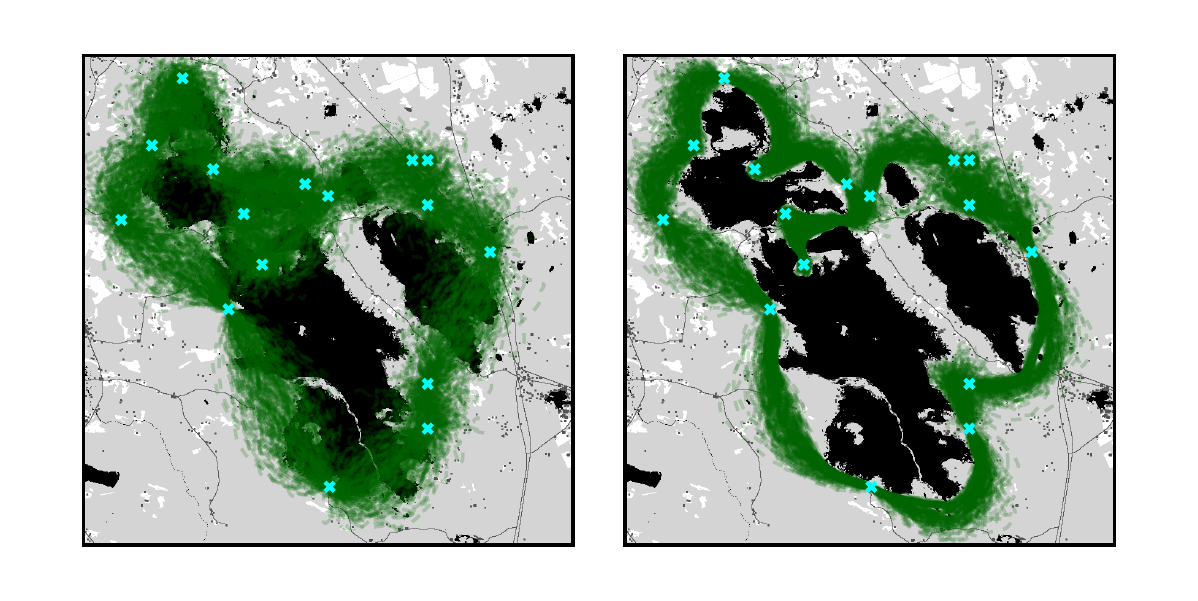}
  \caption{\small Comparison of CTCRW (left) and CTCRW-T (right) with 250 simulated location trajectories.
  The observations used by both models are shown with crosses. The CTCRW-T also uses terrain, which includes lakes (black). 
}
  \label{fig:ctcrw-vs-ctcrwt-trajectories}
\end{figure}
The CTCRW model parameters $\beta$ and $\sigma$ were fit via maximum likelihood, and we set $\eta = 50$ and $\sigma_L = 50$.

We then applied the CPF-BBS with systematic resampling, $N = 16$ and $\blocktime = 1.0$ for $11000$ iterations, discarding the first $1000$ as burn-in. 
$|\Delta_k|$ was set to $2^{-7}$.
The right pane of Figure \ref{fig:ctcrw-vs-ctcrwt-trajectories} shows 250 of the simulated location trajectories from the CTCRW-T model. 
In comparison, the left pane shows trajectories simulated from the CTCRW model conditioned on the observed locations, simulated using \eqref{eq:ctcrwt-M}.
We observe that the trajectories simulated from the CTCRW-T model are influenced by the conditioning on the observations, while avoiding water bodies, as desired. 

We also tested the performance of the CPF-BBS with the blocking sequence obtained using Algorithm \ref{alg:choose-blocking} (using $N = 512$ and $n = 25$), as well as CPF-BS in this example. 
Here, the number of particles for Algorithm \ref{alg:choose-blocking} had to be set slightly higher to ensure that a sufficient number of particles end up in regions of positive potential (due to the hard constraint induced by `Water bodies').
Figure \ref{fig:terrain-sim-bbcpf-vs-cpfbs-iacts} compares the three algorithms by plotting the IACT of the state variable $L_{t}^{(x)}$ with respect to time.
\begin{figure}
  \centering
  \includegraphics{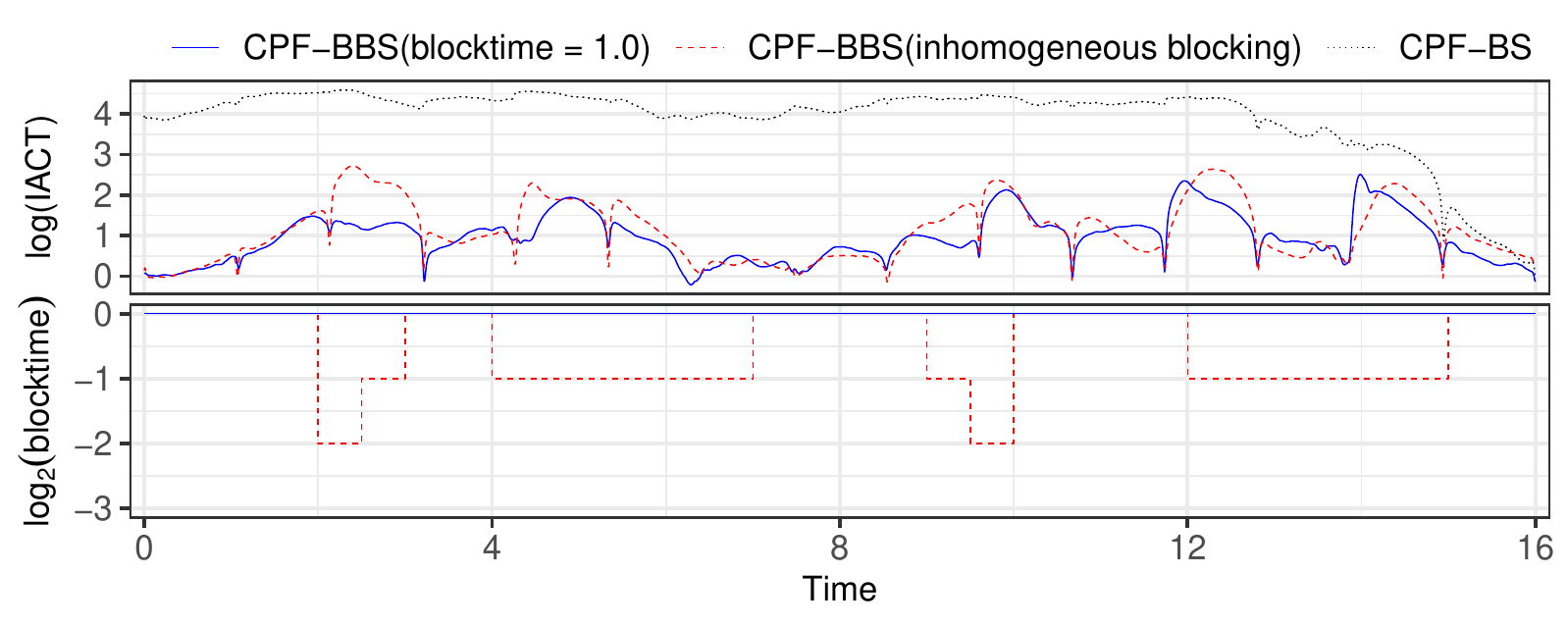}
  \caption{\small The logarithmic IACT (top) of the state $L_t^{(x)}$ in the CTCRW-T model
  with $|\Delta_k| = 2^{-7}$ for CPF-BS and the CPF-BBS with blocktime $1.0$
  and the blocking from Algorithm \ref{alg:choose-blocking} (bottom).}
  \label{fig:terrain-sim-bbcpf-vs-cpfbs-iacts}
\end{figure}
The plots for the other state variables were similar.
Clearly, the simulation efficiencies of both variants of the CPF-BBS are superior here in comparison to the CPF-BS.
Between the automatic blocking and constant blocking, the finding is similar as with the CP-RBM model: 
the blocking optimisation via Algorithm \ref{alg:choose-blocking} yields similar results as the `hand tuned' constant blocking with $\blocktime = 1.0$.
The supplementary material also includes an animation that visualises the values of all sampled trajectories at each time point of the simulation, 
showing slower exploration of the target distribution using the CPF-BS.

We also experimented with the above three algorithms using a higher value for $|\Delta_k|$, a situation where a greater discretisation error in the approximation \eqref{eq:path-integral-approx} may be tolerated.
We found that when $|\Delta_k|$ was increased to 0.125, the resulting IACTs of $L_t^{(x)}$ were similar between the three algorithms (see Figure \sref{fig:terrain-sim-bbcpf-vs-cpfbs-iacts-supp} in the supplementary material).

\section{Discussion} \label{sec:discussion} 

The methods presented in this paper make inference more efficient (and feasible) for an important class of statistical models, which includes hidden Markov models (HMMs) involving weakly informative observations and slowly mixing dynamics and, in particular, time-discretisations of continuous-time path integral models.

Our first contribution was presenting two new conditional resampling algorithms for CPFs in such a context: the killing resampling $\rho_{\mathrm{kill}}$, and the systematic resampling with mean partitioned weights $\rho_{\mathrm{syst}}$. Based on our experiments, $\rho_{\mathrm{syst}}$ performs slightly better than $\rho_{\mathrm{kill}}$, coinciding with the recent theoretical and empirical findings of \cite{chopin-singh-soto-vihola} for the particle filter in a similar context. 
Based on our findings, we generally recommend to use $\rho_{\mathrm{syst}}$ with the CPF in the weakly informative regime, but the simpler killing resampling $\rho_{\mathrm{kill}}$ can also be sufficient for most purposes.

Adaptive resampling \citep{liu-chen-blind} can also be used to reduce resamplings and thereby make the particle filter stable in the weakly informative regime. Adaptive resampling has been suggested also in the context of conditional particle filter \citep[Algorithm 5.3]{lee-phd}. However, it is not obvious how to implement a valid (bridge) backward sampling with adaptive resampling.

Our main contribution is a new CPF with bridge backward sampling (CPF-BBS), which may be regarded as a generalisation of the celebrated CPF with backward sampling (CPF-BS) \citep{whiteley-backwards-note}.
The key ingredient of the CPF-BBS which avoids performance issues of the CPF-BS in the weak potentials and slowly mixing context, is the bridging CPF step that updates the latent trajectory subject to a blocking sequence that acts as a tuning parameter of the method.
We presented a computationally cheap procedure for finding an appropriate blocking sequence, which
is based on a proxy of the integrated autocorrelation time of the output Markov chain, the so-called
probability of lower boundary updates ($\plu$), which measures the probability that the bridge CPF updates the value at the block lower boundary.
We derived an estimator for $\plu$ that we suggest to use for blocking sequence tuning via Algorithm \ref{alg:choose-blocking} that uses a small number of trial runs of the standard particle filter with ancestor tracing to estimate $\plu$ prior to running the CPF-BBS.

The CPF-BBS is generally applicable, assuming that the conditional distributions $M_{u \mid \ell}$ and $M_{k \mid k-1, u}$ related to the individual blocks $(\ell, u)$ may be computed (Assumption \ref{a:conditionals}). In principle, it is always possible to choose such proposals $M_k$, but 
careless choice might result in informative potentials $G_k$ and therefore poor performance. 
The contrary is also possible: with suitably chosen $M_k$, the $G_k$ can be weakly informative, even if the HMM observations are informative. This can be achieved by designing $M_k$ and $G_k$ by suitable `lookaheads' or `twisting' \citep{guarniero-johansen-lee}, such as an approximate smoothing distribution from a Laplace approximation \citep[cf.][Section 8.1]{vihola-helske-franks}.

The experiments suggest that our estimator for $\plu$ is in good agreement with the true $\plu$. 
Algorithm \ref{alg:choose-blocking}, which finds an appropriate blocking automatically, showed promising behaviour in our experiments, leading to performance similar to `hand tuning' the blocking sequence.
Using Algorithm \ref{alg:choose-blocking} is easy: it only requires the user to specify the number of iterations and number particles used in the selection to obtain adequate performance `out of the box'.
In all of the examples we studied, we found $50$ iterations to suffice for block selection, but we presume that the number of particles has to be chosen in a model by model basis.

The performance of the CPF-BBS in practice was promising: we found that the method can provide a substantial performance improvement over CPF-BS in the weak potentials and slowly mixing dynamics setting. This was particularly clear with our movement modelling experiment, which can be of independent interest in certain applications. For instance, our model could be a potentially useful alternative modelling approach for `step-selection' type analyses for territorial animals.

We believe that $\plu$ and the ideas in the estimator we derived for it can be of interest in other contexts, too.
In Section \ref{sec:plu-estimator} we discussed the possibility of obtaining estimates for $\plu$ for $N \neq N_0$, where $N_0$ is the number of particles used for the necessary computations. We found empirically (results not reported) that the agreement between $\plu$ and $\widehat{\plu}$ remains similar as in Figure \ref{fig:blocksize-tuning-ctcrwp} if we use this alternative estimation procedure. 
This method could potentially be elaborated to a heuristic for choosing the number of particles $N$ for the CPF-BBS. One potential way forward is to determine a `cutoff level' for how large a $\plu$ is `large enough,' and the smallest $N$ reaching this level would be chosen. 
Further developments of these ideas are out of the scope of the present paper.

In some applications relevant for the weakly informative context, the initial distribution $M_1$ can be diffuse (relative to the smoothing distribution) --- even an (improper) uniform measure. In such a case, the CPF and also the CPF-BBS  will suffer from poor mixing, but there are relatively direct extensions that are applicable also with the CPF-BBS. Indeed, \cite{fearnhead-melgkotsidou} discuss general state augmentations that can be useful, and a straightforward implementation is often possible in terms of $M_1$-reversible transitions \citep{karppinen-vihola}.

Multilevel Monte Carlo -type methods \citep[e.g.][]{vihola-unbiased} have been recently used for inference with increasingly refined discretisations of continuous-time models. 
The CPF-BBS could be useful for devising multilevel estimators for path integral models.

\section*{Acknowledgements}

SK and MV were supported by the Academy of Finland grants 315619 and 346311. 
The authors wish to thank Anthony Lee and Nicolas Chopin for useful discussions and acknowledge CSC --- IT Center for Science, Finland, for computational resources.

\section*{Supplementary Materials}

\begin{description}
  \item[Source codes] are available at \url{https://github.com/skarppinen/cpf-bbs}.
\end{description}

\bibliographystyle{abbrvnat}
\bibliography{refs.bib}

\appendix

\section{Validity of CPF with killing and systematic resampling} 
\label{app:cpf} 

We start by stating an easy lemma, whose proof is immediate.
\begin{lemma}
    \label{lem:valid-conditional-resampling} 
For a valid conditional resampling scheme $r^{(p,n)}(\uarg \mid
g^{(1:N)})$ and its unconditional version $r$, it holds that:
\begin{enumerate}[(i)]
\item $\E_{r(\uarg\mid g^{(1:N)})}\big[\sum_{i=1}^N \charfun{A^{(i)}  = j}\big] = N
    \frac{g^{(j)}}{\sum_{i=1}^N g^{(i)}}$ for all $j\in\{1{:}N\}$, and
\item \label{item:unconditional-vs-conditional}
  $r(a^{(1:N)}\mid g^{(1:N)}) \charfun{\smash{a^{(n)}} = p}
    = \frac{g^{(p)}}{\sum_{\ell=1}^N g^{(\ell)}}
      r^{(p,n)}(a^{(1:N)}\mid g^{(1:N)})$ for all
      $a^{(i)}$, 
      $n$ and $p$ in $\{1{:}N\}$.
\end{enumerate}
\end{lemma}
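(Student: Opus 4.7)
The plan is to derive both identities directly from Definition \ref{def:valid-conditional-resampling}, without appealing to the unbiasedness condition \eqref{eq:res-unbiased}. The point is that the three clauses of that definition already pin down enough of $r$ to make both assertions routine: clause (iii) gives the single-coordinate marginals, and clauses (i)--(ii) give the disintegration of the joint under the conditioning event $\{A^{(n)}=p\}$.

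For part (i) I would argue by linearity. Clause (iii), applied coordinate by coordinate, gives $\P_{r(\uarg\mid g^{(1:N)})}(A^{(i)}=j)=g^{(j)}/\sum_\ell g^{(\ell)}$ for \emph{every} $i\in[N]$ (not only a fixed one), since $n$ in that clause is arbitrary. Summing over $i$ and using linearity of expectation on the indicator sum yields exactly the claimed identity. No joint information about $A^{(1:N)}$ is used, so this half is immediate.

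For part (ii) I would split on whether $a^{(n)}=p$. If $a^{(n)}\neq p$, then the LHS vanishes because of the indicator $\charfun{a^{(n)}=p}$, and the RHS vanishes because clause (i) of Definition \ref{def:valid-conditional-resampling} forces $r^{(p,n)}(a^{(1:N)}\mid g^{(1:N)})=0$ whenever $a^{(n)}\neq p$. If $a^{(n)}=p$, I would apply the elementary chain-rule factorisation
\[
r(a^{(1:N)}\mid g^{(1:N)}) = \P_{r}(A^{(-n)}=a^{(-n)}\mid A^{(n)}=p)\,\P_{r}(A^{(n)}=p),
\]
and substitute clause (ii) for the conditional factor and clause (iii) for the marginal factor, which reproduces the RHS. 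Since $\charfun{a^{(n)}=p}=1$ on this case, both sides then agree.

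The whole argument is bookkeeping with the three clauses of Definition \ref{def:valid-conditional-resampling}, and I do not anticipate a real obstacle. The only small subtlety is that clause (ii) only characterises $r^{(p,n)}(a^{(1:N)}\mid g^{(1:N)})$ on $\{a^{(n)}=p\}$, which is precisely why the case split on $a^{(n)}$ is needed; outside that set, clause (i) is what makes the identity hold trivially.
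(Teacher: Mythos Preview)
Your argument is correct and is exactly the natural unpacking of Definition~\ref{def:valid-conditional-resampling}; the paper itself does not spell out a proof, declaring the lemma ``easy'' with ``proof immediate,'' so there is nothing to compare against beyond what you wrote. Your remark that clause~(ii) of the definition only characterises $r^{(p,n)}$ on $\{a^{(n)}=p\}$, with clause~(i) forcing it to vanish elsewhere, is the right way to reconcile the two clauses and is the only point requiring any care.
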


In what follows, we denote 
$\underline{M}_1(\underline{x}_1) = \prod_{i=1}^N M_1(x_1^{(i)})$,
$\underline{M}_k(\underline{x}_k\mid x_{k-1}^{(\underline{a})}) = \prod_{i=1}^N
M_k(x_k^{(i)}\mid x_{k-1}^{(a^{(i)})})$ and $G_k(\underline{\vec{x}}_k) = \big(G_k(\vec{x}_k^{(1)}), \ldots, G_k(\vec{x}_k^{(N)})\big)$.

\begin{proof}[Proof of Theorem \mref{thm:cpf-valid}] 
  Assume that $X_{1:T}^* \sim \pi$ and $B_{1:T}\sim \mathcal{U}(\{1{:}N\}^T)$
  independently. The joint distribution of $B_{1:T}$, the particles $\underline{X}_{1:T}$, 
  and the ancestories $\underline{A}_{1:T-1}$ generated by the CPF, 
  may be written as
  \begin{align}
  &\frac{\underline{M}_1(\underline{x}_1)}{\mathcal{Z} N^T}
      \bigg[\prod_{k=2}^T 
        r^{(b_{k-1},b_k)}\big(\underline{a}_{k-1}\mid
        G_{k-1}(\underline{\mathbf{x}}_{k-1})\big)
    \underline{M}_k(\underline{x}_k\mid x_{k-1}^{(\underline{a}_{k-1})})
      G_{k-1}(\mathbf{x}_{k-1}^{(b_{k-1})}) \bigg]
      G_T(\mathbf{x}_T^{(b_T)}) \nonumber\\
      &= \frac{\underline{M}_1(x_1^{(1:N)}) }{\mathcal{Z}}
      \bigg(\prod_{k=1}^{T} \frac{1}{N} \sum_{\ell=1}^N
      G_k(\mathbf{x}_k^{(\ell)})\bigg) 
      \nonumber\\
      &\phantom{=}
      \bigg(\prod_{k=2}^T 
      \charfun{a_{k-1}^{(b_k)}=b_{k-1}}
      r(\underline{a}_{k-1}\mid G_{k-1}(\underline{\mathbf{x}}_{k-1})\big)
      \underline{M}_k(\underline{x}_k\mid x_{k-1}^{(\underline{a}_{k-1})})
      \bigg) \frac{G_T(\mathbf{x}_T^{(b_T)})}{\sum_{\ell=1}^N
        G_T(\mathbf{x}_T^{(\ell)})},
      \label{eq:joint-cpf}
  \end{align}
  by Lemma \ref{lem:valid-conditional-resampling}.
  Including the variables $\tilde{B}_{1:T}$ to \eqref{eq:joint-cpf} adds the following factor:
\begin{equation}
    \bigg[\prod_{k=2}^T 
    \charfun{a_{k-1}^{(\tilde{b}_k)}=\tilde{b}_{k-1}}\bigg]
    \frac{G_T(\mathbf{x}_T^{(\tilde{b}_T)})}{\sum_{\ell=1}^N
      G_T(\mathbf{x}_T^{(\ell)})}
    \label{eq:cpf-new-ref}
\end{equation}
The joint distribution---product of \eqref{eq:joint-cpf}
and \eqref{eq:cpf-new-ref}---is clearly symmetric with respect to
$(b_{1:T},x_{1:T}^{(b_{1:T})})$ and
$(\tilde{b}_{1:T}, x_{1:T}^{(\tilde{b}_{1:T})})$.
\end{proof} 

In what follows, we use the shorthand $\sigma_s \defeq \sigma_s^N$.

\begin{proof}[Proof of Lemma \mref{lem:conditional-killing} \eqref{item:killing-valid}] 
Suppose that $\bar{A}^{(1:N)}\sim \rho(\uarg \mid g^{(1:N)})$, where
$\rho$ is given in \meqref{eq:killing-resampling}.
We first observe that $\rho$ is unbiased: 
\begin{equation}
    \E\left[\sum_{i=1}^{N}\charfun{
      \bar{A}^{(i)}=j} \right] 
    =\frac{g^{(j)}}{g^{\ast}}
    +\sum_{i=1}^{N}\left(1-\frac{g^{(i)}}{g^{\ast}}\right)
    \frac{g^{(j)}}{\sum_{\ell=1}^{N}g^{(\ell)}}
     =N\frac{g^{(j)}}{\sum_{\ell=1}^{N}g^{(\ell)}}.
 \label{eq:mean_rho_bar}
 \end{equation}
Let $S \in \{1{:}N\}$ be an independent uniformly distributed random
variable, and consider $A^{(1:N)} = \bar{A}^{(\sigma_S(1:N))}$.
Then
$A^{(1:N)} \sim \hat{\rho}(\uarg\mid g^{(1:N)})$,
where
\begin{equation}
    \hat{\rho}(a^{(1:N)}\mid g^{(1:N)}) \defeq \frac{1}{N} \sum_{s=1}^N
    \rho(a^{(\sigma_s(1:N))}\mid g^{(1:N)}),
    \label{eq:symmetrised-killing}
\end{equation}
which also clearly unbiased, and 
from \eqref{eq:mean_rho_bar}, it follows that 
\begin{equation}
     \P(A^{(k)} = j) 
     = \frac{1}{N}\E\Big[\sum_{i=1}^N \charfun{\bar{A}^{(i)} = j}\Big]
     = \frac{g^{(j)}}{\sum_{\ell=1}^{N}g^{(\ell)}}.
     \label{eq:killing-marginal-A1}
 \end{equation}
Next we derive the conditional distribution of
$A^{(-k)}$ given $A^{(k)}=i$.
First, because $A^{(k)}=\bar{A}^{(\sigma_S(k))}$, we have
\[
    \P\left(\sigma_S(k) =j\mid A^{(k)}=i\right) 
    =\frac{\P(\sigma_S(k)=j)\P(\bar{A}^{(j)}=i)}{
      \sum_{\ell=1}^{N}\P\left(\sigma_S(k)=\ell\right)\P\left(\bar{A}^{(\ell)}=i\right)}
 =\frac{\sum_{\ell=1}^{N}g^{(\ell)}}{Ng^{(i)}}
 \P\left(\bar{A}^{(j)}=i\right),
 \label{eq:cond_perm}
 \]
 and $\P(\bar{A}^{(j)}=i) = \frac{g^{(i)}}{g^*}\charfun{j=i}
  + \big(1 - \frac{g^{(j)}}{g^*}\big)\frac{g^{(i)}}{\sum_{\ell=1}^N
    g^{(\ell)}}$, so 
a simple calculation yields
 \begin{equation}
     \P(\sigma_S(k)=j \mid A^{(k)} = i) =
     h(j\mid i),\quad\text{where}\quad
     h(j\mid i) \defeq \begin{cases}
\frac{1}{N}\Big( 1 + \frac{\sum_{\ell\neq i} g^{(\ell)}}{g^*}\Big),
  & j = i \\
\frac{1}{N}\Big( 1 - \frac{g^{(j)}}{g^*}\Big), & j\neq i . 
\end{cases}
\label{eq:permutation-distribution}
 \end{equation}
Note that $\sigma_S(k)=j$ is equivalent with $S = \sigma_{-k}(j)$.

We conclude that $A^{(1:N)} \sim \hat{\rho}(\uarg\mid g^{(1:N)})$  may be drawn by first
drawing $B$ from the marginal distribution
of $A^{(k)}$,  that is, $\P(B=i) =
g^{(i)}/\sum_{\ell=1}^N g^{(\ell)}$, drawing $J\sim
h(\uarg\mid B)$, 
setting $S = \sigma_{-k}(J)$
and $\bar{A}^{(\sigma_S(k))}=B$ and
$A^{(j)} = \bar{A}^{(\sigma_S(j))}$ for $i\in\{1{:}N\}$.
\end{proof}

\begin{lemma}
  \label{lem:systematic-invariance}
  Suppose that $\varpi$ is a permutation of $[N]$, and $\varpi_*$ is a cyclic shift of $\varpi$, that is, $\varpi_*(i) = \varpi(\sigma_s(i))$ for some $s \in [N]$, and that 
  \begin{align*}
    \bar{A}^{1:N} &= \varpi(F_\varpi^{-1}(U^{1:N})) \\
    \bar{A}_*^{1:N} &= \varpi_*(F_{\varpi_*}^{-1}(U^{1:N})),
  \end{align*}
where $U^j = \dfrac{j - 1 + U}{N}$ with $U \sim \mathcal{U}(0,1)$.

Then, it holds that $A^{1:N}$ and $A_*^{1:N}$ have the same distribution, where
\begin{align*}
  A^j &= \bar{A}^{\sigma_C(j)} \\
  A_*^{j} &= \bar{A}_*^{\sigma_C(j)},
\end{align*}
and $C \sim \mathcal{U}([N])$ is a random shift offset.
\end{lemma}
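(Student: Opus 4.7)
The statement asserts that composing systematic resampling with a uniform random cyclic shift produces a distribution invariant under cyclic rotations of the ordering $\varpi$ of the weights. The plan is to couple $A_*^{1:N}$ to a vector of exactly the form $A^{1:N}$ by showing that switching from $\varpi$ to $\varpi_*$ corresponds to a rotation of the underlying arc arrangement on the circle of circumference $1$, and that this rotation can be absorbed into the uniform offset $U$ and the random cyclic shift $C$.

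I would think of the normalised weights $w^{\varpi(1)}, \ldots, w^{\varpi(N)}$ as consecutive arcs on the circle $[0,1)$, with the arc for $\varpi(k)$ occupying $[F_\varpi(k-1), F_\varpi(k))$, and the $N$ pointers $\{V + (j-1)/N : j \in [N]\}$ (with $V = U/N$) landing on arcs to determine $\bar{A}^{1:N}$. Going from $\varpi$ to $\varpi_* = \varpi \circ \sigma_s$ rotates this arrangement by $T_s \defeq F_\varpi(s)$; equivalently, it keeps the $\varpi$-arrangement fixed while shifting every pointer by $T_s$. Decomposing $T_s = m/N + r$ with integer $m$ and $r \in [0, 1/N)$, and defining $U' \defeq (U + Nr) \bmod 1$ together with $m' \in \{0,1\}$ satisfying $U + Nr = U' + m'$, a direct substitution gives
\[
\bar{A}_*^j \;=\; \bar{A}_{U'}^{(\sigma_{m+m'}(j))}, \qquad j \in [N],
\]
where $\bar{A}_{U'}$ denotes the output of $\varpi$-systematic resampling with offset $U'$ in place of $U$. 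Since $Nr$ is constant and $U \sim \mathcal{U}(0,1)$, we get $U' \sim \mathcal{U}(0,1)$ as well.

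Next I would bring in the random shift $C \sim \mathcal{U}([N])$ (independent of $U$) and use $\sigma_a \circ \sigma_b = \sigma_{a+b}$ to obtain
\[
A_*^j \;=\; \bar{A}_*^{(\sigma_C(j))} \;=\; \bar{A}_{U'}^{(\sigma_{C'}(j))}, \qquad C' \defeq 1 + ((C + m + m' - 1) \bmod N).
\]
Conditionally on $U$, the shift $C'$ is obtained from the uniform $C$ by an additive integer shift modulo $N$, so it is uniform on $[N]$; since $U'$ is a deterministic function of $U$, this conditional uniformity transfers to conditioning on $U'$. Therefore $(U', C')$ has the same joint law as $(U, C)$, yielding $A_*^{1:N} \overset{\mathrm{d}}{=} \bar{A}_U^{(\sigma_C(1:N))} = A^{1:N}$.

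The main obstacle I anticipate is the bookkeeping of the wrap-around when $T_s$ is not a multiple of $1/N$: the fractional part $r$ produces an auxiliary shift $m'$ that is itself a Bernoulli function of $U$. The critical observation is that uniformity of $C$ conditional on $U$ still yields uniformity of $C'$ conditional on $U'$, despite $m'$ depending on $U$, so the extra randomness introduced by the wrap-around is absorbed by the independent uniform shift $C$.
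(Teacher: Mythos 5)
Your proof is correct and follows essentially the same route as the paper's: both identify the passage from $\varpi$ to $\varpi_*$ with a rotation of the systematic pointer grid on the circle, observe that the rotated grid is again a systematic grid with a fresh uniform offset up to a cyclic re-indexing of the output array, and absorb that re-indexing into the independent uniform shift $C$. The only difference is bookkeeping: the paper reduces without loss of generality to $s=1$ and the identity permutation and re-sorts the shifted pointers via an $\arg\min$, whereas you treat general $s$ directly by decomposing the rotation $T_s = m/N + r$ and tracking the induced shift $m+m'$ explicitly.
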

\begin{proof}
Without loss of generality, we may consider the case $s=1$ and $\varpi(i) = i$, in which case $\varpi_*(i) = \sigma_1(i)$.

Define $\tilde{U}^i \defeq (U^i - w^1 \mod 1)$, let $j = \arg\min_i \tilde{U}^i$, and let $\tilde{U}^i_* = \tilde{U}^{\sigma_{j}(i)}$.
Observe that $\tilde{U}_*^{1:N}$ and $U^{1:N}$ have the same distribution,
  so the claim follows once we show that $\bar{A}^{1:N} = F^{-1}(U^{1:N})$ and $\bar{A}_*^{1:N} = \sigma_1(F_{\sigma_1}^{-1}(\tilde{U}^{1:N}_*))$ are equal, up to a cyclic shift. Indeed, we will see that for all $i\in[N]$ and $0\le k\le N-1$:
$$
\bar{A}_*^{\sigma_{-j}(i)} = \sigma_1(F_{\sigma_1}^{-1}(\tilde{U}^i))
= k+1 \iff \bar{A}^{i} = k+1.
$$
Let us first assume $k\ge 1$, then the expression on the left is equivalent to
$$
F_{\sigma_1}(k-1) < \tilde{U}^i \le F_{\sigma_1}(k) \iff F(k)  < \tilde{U}^i + w^1 \le F(k+1),
$$
  because $F_{\sigma_1}(\ell) = F(\ell+1) - w^1$. Whenever $U^i - w^1 \ge 0$, we have $\tilde{U}^i + w^1 = U^i$, and the expression on the right simplifies to $\bar{A}^i = F^{-1}(U^i) = k + 1$, as desired.

Suppose then that $U^i - w^1 < 0$, in which case $\bar{A}^i = F^{-1}(U^i) = 1$. But then also $\tilde{U}^i = U^i - w^1 + 1 \in (1-w^1, 1)$, which is equivalent to $F_{\sigma_1}^{-1}(\tilde{U}^i) = N$.
\end{proof}

\begin{proof}[Proof of Lemma \mref{lem:conditional-killing} \eqref{item:systematic-valid}]
  Assume that $\varpi$ is a permutation (such as the mean partition order).
  Let $I^{1:N} = F_{\tilde{\varpi}}^{-1}(U^{1:N})$, with
  \[
    U^{i} = \dfrac{i - 1 + U}{N},
  \]
  with $U \sim \mathcal{U}(0,1)$, that is, standard systematic resampling (Definition \mref{def:sys-res}) with weights $W_{\tilde{\varpi}}^{1:N}$, where $W_{\tilde{\varpi}}^j = W^{\tilde{\varpi}(j)}$ and $\tilde{\varpi}(j) = \varpi(\sigma_{s-1}(j))$, with $s = \varpi^{-1}(i)$.

  Hence, $\tilde{\varpi}$ satisfies
  \begin{equation}
    \tilde{\varpi}(1) = \varpi(\sigma_{s - 1}(1)) = \varpi(s) = i.
  \end{equation}
  Define $\bar{A}^{1:N}$ such that
  \[
    \bar{A}^j = \tilde{\varpi}(I^j).
  \] 
  Then, by Lemma \ref{lem:systematic-invariance}, it holds that 
  \[
    A^j = \bar{A}^{\sigma_C(j)}, \text{ for } j \in [N] \text{, with } C \sim \mathcal{U}([N]),
  \]
  have the same distribution as the indices from systematic resampling with order $\varpi$ 
  that have been shifted by $\sigma_C$.
  In particular, note that Definition \mref{def:valid-conditional-resampling} (\ref{defcond:symmetricity}) holds for the latter.

  Consider then the count of indices equal to $i$:
  \[
    N^i = \#\{j: A^j = i\}.
  \]
  Since $A^j = i \iff I^{\sigma_C(j)} = 1$ and the indices $I^{1:N}$ are ascending, it holds that
  \[
    N^i = \max\{j \geq 1 \given I^j = 1\},
  \]
  where $\max$ is zero in case the set is empty.
  The event $N^i = n$ is equivalent with
  \begin{align*}
    \phantom{\iff}& \dfrac{n - 1 + U}{N} < F_{\tilde{\varpi}}(1) \leq \dfrac{n + U}{N} \\
    \iff& n - 1 + U < Nw^i \leq n + U \\
    \iff& Nw^i - (n - 1) > U \geq Nw^i - n.
  \end{align*}
  We deduce that only two values of $n$ have nonzero probability (for $U\in(0,1)$), since:
  \begin{align*}
    n &= \lfloor N w^i \rfloor \iff U \in [r,1) \\
    n &= \lfloor N w^i \rfloor + 1 \iff U \in (0,r),
  \end{align*}
  where $r = Nw^i - \lfloor Nw^i \rfloor$.
  Furthermore, the conditional probabilities for the events $N^i = n$ are given as:
  \[ 
    \mathbb{P}(N^i = n\mid A^k=i)
    = \frac{\mathbb{P}(N^i = n, A^k=i)}{\mathbb{P}(A^k=i)}
    = \frac{\mathbb{P}(N^i = n, A^k=i)}{w^i},
  \]
  where the numerator satisfies
  \begin{align*}
    \mathbb{P}(N^i = n, A^k=i)
    &= \sum_{c=1}^N \mathbb{P}(C=c, A^k=i\mid N^i = n)\mathbb{P}(N^i = n) \\
    &= \sum_{c=1}^N \mathbb{P}(A^k=i\mid C=c, N^i = n)
    \mathbb{P}(C=c\mid N^i = n)\mathbb{P}(N^i = n).
  \end{align*}
  
  Since
  \begin{itemize}
    \item $\mathbb{P}(N^i = \lfloor N w^i \rfloor + 1) = r$ and $\mathbb{P}(N^i = \lfloor N w^i \rfloor) = 1-r$ (from above),
    \item $\mathbb{P}(C=c\mid N^i = n) = 1/N$ (because $C$ is independent of $I^{1:N}$ and therefore $N^i$),
    \item $\mathbb{P}(A^k=i\mid C=c, N^i = n)$ are deterministic, either zero or one, and precisely $n$ are one,
  \end{itemize}
  it holds that 
  \[
    \mathbb{P}(N^i = \lfloor N w^i \rfloor + 1\mid A^k=i) = \frac{(\lfloor N w^i \rfloor + 1)r}{N w^i} := p,
  \]
  and 
  \[
    \mathbb{P}(N^i = \lfloor N w^i \rfloor \mid A^k=i) = 1 - p.
  \]
  Observe also that the random variable $U$ conditional on $A^k=i$ and $N^i = n$ has the density $\mathcal{U}(0,r)$ if $N^i = \lfloor N W^i \rfloor + 1$ and $\mathcal{U}(r,1)$ if $n = \lfloor N W^i \rfloor$.
  This follows since $U$ is conditionally independent from the event $A^k = i$ given $N^i = n$, since $U$ only depends on $A^k = i$ through $N^i$.
  Similarly, 
  \[
    \mathbb{P}(C=c\mid A^k=i, N^i = n, U)
    = \mathbb{P}(C=c\mid A^k=i, N^i = n) = \frac{1}{n} 1(\sigma_c(k)\in[1,n]).
  \]
  In practice, we can simulate $C$ from this distribution as follows:
  \begin{enumerate}
    \item Draw $\bar{C}\sim U\{1,\ldots,n\}$,
    \item Set $C = \sigma_{-k}^N(\bar{C})$,
  \end{enumerate}
since then $\sigma_C(k) = \bar{C}$.

Algorithm \mref{alg:systematic} proceeds by first drawing $n \sim N^i \mid A^k = i$. 
Then, $c \sim C \mid N^i = n, A^k = i$ is drawn, after which $A^{1:N}$ (satisfying $A^k = i$) may be drawn by drawing $\bar{A}^{1:N}$ and shifting by $c$. 

\end{proof}
\section{Validity of CPF-BBS} 
\label{app:bbcpf} 

We start by two auxiliary results about marginal distributions after
partial ancestor tracing and a partial CPF.
In what follows, we assume that $G_k(x_{1:k}) = G_k(x_{k-1:k})$ for
$k\in\{2{:}T\}$.
Using the definition of $\underline{M}_k$ as in Appendix \ref{app:cpf},
let us fix some notation: for $u=1,\ldots,T$, 
denote by 
$\check{\pi}^{(N)}_{u}(\underline{x}_{1:u}, \underline{a}_{1:u-1},
b_u)$:
\begin{align*}
\frac{1}{\mathcal{Z}} \underline{M}_1(\underline{x}_1)
\prod_{k=1}^{u-1} \bigg[
  \bigg( \frac{1}{N} \sum_{j=1}^N
G_k(\vec{x}_{k}^{(j)})\bigg)
 r\big(\underline{a}_{k}\mid
G_{k}(\underline{\vec{x}}_{k} )\big) 
  \underline{M}_{k+1}(\underline{x}_{k+1}\mid
x_{k}^{(\underline{a}_{k})}) \bigg]
\frac{G_u(\vec{x}_u^{(b_u)})}{N},
\end{align*}
and $ \gamma_{u:T}(x_{u:T}) \defeq \prod_{k=u+1}^T M_k(x_k\mid x_{k-1})
    G_k(x_{k-1:k})$ with $\gamma_{T:T}(x_T)\equiv 1$, 
then the following define probability distributions for $u=1,\ldots,T$:
\[
    \mu_{u}^{(N)}(\underline{x}_{1:u}, \underline{a}_{1:u-1}, b_u,
    x_{u+1:T}^*)
    \defeq 
    \check{\pi}^{(N)}_{u}(\underline{x}_{1:u}, \underline{a}_{1:u-1}, b_u)  
  \gamma_{u:T}(x_{u}^{(b_u)}, x_{u+1:T}^*).
\]
\begin{lemma} 
    \label{lem:bbcpf-recurse} 
Suppose that $r^{(p,n)}$ is a valid conditional resampling
scheme, with respect to resampling $r$ in Definition
\mref{def:valid-conditional-resampling}.
Suppose
$(\underline{X}_{1:u},\underline{A}_{1:u-1}, B_u, X_{u+1:T}^*)
\sim \mu_{u}^{(N)}$, and $\ell\in\{1{:}u-1\}$. 
\begin{enumerate}[(i)]
\item \label{item:partial-trace} If $B_{\ell:u-1} \gets
  \textsc{AncestorTrace}(\underline{A}_{\ell:u-1}, B_u)$,
  then the marginal density of
  $\underline{X}_{1:\ell}, \underline{A}_{1:\ell-1}$,
  $X_{\ell+1:u}^{(B_{\ell+1:u})}$, $B_{\ell:u}$ and $X_{u+1:T}^*$ is
  \[
      \mu_{\ell}^{(N)}(\underline{x}_{1:\ell},
  \underline{a}_{1:\ell-1}, b_\ell,
    x_{\ell+1:u}^{(b_{\ell+1:u})}, x_{u+1:T}^*)/N^{u-\ell}.
  \]
\item \label{item:partial-bridge} 
  If further $(X_{\ell:u-1}^*, \tilde{B}_{\ell:u-1})\gets
  \textsc{BridgeCPF}(\underline{\vec{X}}_\ell, B_{\ell:u-1},
  X_{\ell:u}^{B_{\ell:u}})$, $\tilde{B}_u=B_u$ and
  $X_u^*=X_u^{(B_u)}$,
  then the marginal density of 
    $\underline{X}_{1:\ell}, \underline{A}_{1:\ell-1}$,
  $X_{\ell+1:u}^*$ and  
  $\tilde{B}_{\ell:u}$ is
  \[
      \mu_{\ell}^{(N)}(\underline{x}_{1:\ell},
  \underline{a}_{1:\ell-1}, \tilde{b}_\ell, 
    x_{\ell+1:T}^*)/N^{u-\ell}.
  \]
\end{enumerate}
\end{lemma}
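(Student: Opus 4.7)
The plan is to prove (i) by direct marginalisation of the explicit density $\mu_u^{(N)}$, and then build on this to obtain (ii) by tracking the BridgeCPF step through a parallel calculation.

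For (i), I will augment $\mu_u^{(N)}$ with the ancestor-traced indices $b_{\ell:u-1}$, which contributes the indicators $\prod_{k=\ell}^{u-1} \charfun{a_k^{(b_{k+1})} = b_k}$. Applying Lemma \mref{lem:valid-conditional-resampling} \eqref{item:unconditional-vs-conditional} for each $k \in \{\ell, \ldots, u-1\}$ rewrites the unconditional resampling factor together with its indicator as $\frac{G_k(\vec{x}_k^{(b_k)})}{\sum_j G_k(\vec{x}_k^{(j)})}\, r^{(b_k, b_{k+1})}(\underline{a}_k \mid G_k(\underline{\vec{x}}_k))$; combined with the factor $\frac{1}{N}\sum_j G_k(\vec{x}_k^{(j)})$ present in $\mu_u^{(N)}$, this simplifies to $\frac{G_k(\vec{x}_k^{(b_k)})}{N}\, r^{(b_k, b_{k+1})}$. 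I will then marginalise out $x_k^{(-b_k)}$ for $k \in \{\ell+1, \ldots, u\}$ — the non-reference factors of $\underline{M}_k$ integrate to one, and the reference factor reduces via $a_{k-1}^{(b_k)} = b_{k-1}$ to $M_k(x_k^{(b_k)} \mid x_{k-1}^{(b_{k-1})})$ — and sum out $a_k^{(-b_{k+1})}$ for $k \in \{\ell, \ldots, u-1\}$, making each $r^{(b_k, b_{k+1})}$ integrate to one. The $k=\ell$ factor $G_\ell(\vec{x}_\ell^{(b_\ell)})/N$ completes $\check{\pi}_\ell^{(N)}$, and the residual factors reconstruct $\gamma_{\ell:u}(x_\ell^{(b_\ell)}, x_{\ell+1:u}^{(b_{\ell+1:u})})/N^{u-\ell}$, which combined with the untouched $\gamma_{u:T}$ yields $\mu_\ell^{(N)}/N^{u-\ell}$.

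For (ii), starting from the marginal just established, I will multiply by the joint density of the random variables produced in Algorithm \mref{alg:bridge} and repeat the same style of manipulation. The key algebraic identity, obtained by telescoping the definition of $\bar{M}_v$, is
\[
\prod_{v=\ell+1}^{u-1} \bar{M}_v(x_v \mid x_{v-1}, x_u) \cdot M_{u\mid\ell}(x_u \mid x_\ell) = \prod_{v=\ell+1}^{u} M_v(x_v \mid x_{v-1}).
\]
Since $W_v^{(i)}$ is copied along the genealogy, on the trajectory selected by $\tilde{B}_{u-1}$ followed by \textsc{AncestorTrace} it equals $M_{u\mid\ell}(x_u^{*} \mid x_\ell^{(\tilde{b}_\ell)})^{1/(u-\ell)}$, and it enters the $u-1-\ell$ bridge resampling weights plus the terminal $\tilde{\omega}_{u-1}$, i.e.\ $u-\ell$ uses in total, so that along the chosen trajectory these multiply to exactly $M_{u\mid\ell}(x_u^{*} \mid x_\ell^{(\tilde{b}_\ell)})$. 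Applying Lemma \mref{lem:valid-conditional-resampling} \eqref{item:unconditional-vs-conditional} to each bridge resampling introduces the constraints $\tilde{a}_{v-1}^{(\tilde{b}_v)} = \tilde{b}_{v-1}$, and together with the telescoping identity and the $G_v$ factors the joint density reorganises into the same shape as the augmented density analysed in (i), with $b_{\ell:u}$ and $x_{\ell+1:u-1}^{(b_{\ell+1:u-1})}$ replaced by $\tilde{b}_{\ell:u}$ and $\tilde{x}_{\ell+1:u-1}^{*}$. Marginalising out the non-reference bridge particles and ancestors exactly as in (i) then delivers the claimed $\mu_\ell^{(N)}(\underline{x}_{1:\ell}, \underline{a}_{1:\ell-1}, \tilde{b}_\ell, x_{\ell+1:T}^{*})/N^{u-\ell}$.

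The main hurdle is in (ii): the careful bookkeeping for $W_v$ propagated along the particle genealogy, and the verification that its $u-\ell$ uses combine with the bridge transitions $\bar{M}_v$ to rebuild the original Feynman--Kac structure on the chosen trajectory. The telescoping identity reduces this to a single clean calculation per trajectory, after which the argument is a replay of the marginalisation carried out in (i).
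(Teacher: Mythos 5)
Your proposal is correct and follows essentially the same route as the paper's proof: part (i) by augmenting with the ancestor-trace indicators, converting $r\cdot\mathbf{1}$ into $r^{(b_k,b_{k+1})}$ via Lemma \ref{lem:valid-conditional-resampling} \eqref{item:unconditional-vs-conditional}, and marginalising backwards; and part (ii) via the telescoping identity $M_{u\mid\ell}\prod_v \bar{M}_v = \prod_v M_v$ together with spreading $M_{u\mid\ell}^{1/(u-\ell)}$ over the $u-\ell$ weight evaluations, which is exactly the paper's modified potential $\tilde{G}_k^{(\ell,u)} = G_k\, M_{u\mid\ell}^{1/(u-\ell)}$, followed by the same symmetry-and-marginalisation argument exchanging $b_{\ell:u-1}$ for $\tilde{b}_{\ell:u-1}$. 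The only detail left implicit is that the reorganised density must be seen not to depend on the old indices $b_{\ell:u-1}$ so that they can be summed out, but this sits squarely within the strategy you describe.
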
 
\begin{proof} 
In the case \eqref{item:partial-trace},
the joint density of all variables may be written as
\begin{align*}
&    \check{\pi}_{u}^{(N)}(\underline{x}_{1:u},\underline{a}_{1:u-1}, b_u)
    \bigg(\prod_{k=\ell}^{{u-1}} \charfun{b_k =
      a_k^{(b_{k+1})}}\bigg) \gamma_{u:T}(x_u^{(b_u)}, x_{u+1:T}^*)\\
&   =
   \frac{\check{\pi}_{\ell}^{(N)}(\underline{x}_{1:\ell},\underline{a}_{1:\ell-1},b_\ell)
     }{G_\ell(\vec{x}_\ell^{(b_\ell)})}
       \bigg[\prod_{k=\ell}^{{u-1}} 
       \bigg(  \frac{1}{N}\sum_{i=1}^N G_{k}(\vec{x}_{k}^{(i)})\bigg)
       \charfun{b_k =
      a_k^{(b_{k+1})}}
    r(\underline{a}_{k}\mid G_{k}(\underline{\vec{x}}_{k})\big) \\
& \phantom{
  =
   \frac{\check{\pi}_{\ell}^{(N)}(\underline{x}_{1:\ell},\underline{a}_{1:\ell-1},b_\ell)
     }{G_\ell(\vec{x}_\ell^{(b_\ell)})}
       \bigg[
}
    \underline{M}_{k+1}(\underline{x}_{k+1}\mid
    x_{k}^{(\underline{a}_{k})})
    \bigg] 
    G_u(\vec{x}_u^{(b_u)}) \gamma_{u:T}(x_u^{(b_u)}, x_{u+1:T}^*)
    \\
&   =
   \frac{\check{\pi}_{\ell}^{(N)}(\underline{x}_{1:\ell},\underline{a}_{1:\ell-1},b_\ell)
     \gamma_{\ell:T}(x_{\ell:u}^{(b_{\ell:u})}, x_{u+1:T}^*)
     }{N^{u-\ell}}
       \prod_{k=\ell}^{{u-1}} 
    r^{(b_k,b_{k+1})}(\underline{a}_{k}\mid
    G_{k}(\underline{\vec{x}}_{k})\big)
    \!\!\! \prod_{i\neq b_{k+1}} \!\!\! M_{k+1}(x_{k+1}^{(i)}\mid
    x_k^{(a_k^{(i)})}),
\end{align*}
by Lemma \ref{lem:valid-conditional-resampling} 
\eqref{item:unconditional-vs-conditional}.
The result \eqref{item:partial-trace}
follows as we marginalise $x_{u}^{(i)}$ for $i\neq b_u$,
$\underline{a}_{u-1}$, $x_{u-1}^{(i)}$ for $i\neq b_{u-1}$, \ldots, $\underline{a}_\ell$.

For \eqref{item:partial-bridge}, define 
$\tilde{G}_k^{(\ell,u)}(x_{1:k}\mid x_u)
= G_k(x_{k-1:k}) M_{u\mid \ell}(x_u\mid
x_\ell)^{(u-\ell)^{-1}}$, and notice that 
\begin{align*}
    G_\ell(x_{\ell-1:\ell})\gamma_{\ell:T}(x_{\ell:T})
        &=\bigg(\prod_{k=\ell+1}^{u}
        \tilde{G}_{k-1}^{(\ell,u)}(x_{1:k-1}\mid x_u)
          \bar{M}_k(x_k \mid x_{k-1}, x_{u}) 
         \bigg)
         G_u(x_{u-1:u})
          \gamma_{u:T}(x_{u:T}),
\end{align*}       
where $\bar{M}_u(x_u\mid \uarg, x_u) \equiv 1$.
Adding the variables generated in lines 
\ref{line:bridge-start}--\ref{line:bridge-end}
of Algorithm \mref{alg:bridge} leads to
\begin{align}
&   \frac{\check{\pi}_{\ell}^{(N)}(\underline{x}_{1:\ell},
  \underline{a}_{1:\ell-1}, b_\ell)}{N^{u-\ell}G_{\ell}(\vec{x}_\ell^{(b_\ell)})}
    \bigg[
    \prod_{k=\ell+1}^{u-1} 
    \tilde{G}_{k-1}(\vec{\check{x}}_{k-1}^{(b_{k-1})}\mid x_u^{(b_u)})
    r^{(b_{k-1},b_k)}\big(\underline{\tilde{a}}_{k-1}\mid
    \tilde{G}_{k-1}(\underline{\vec{\check{x}}}_{k-1}\mid x_u^{(b_u)})\big)
    \label{eq:bridge-dist}
    \\
  &
    \bigg(\prod_{i=1}^N \bar{M}_k(\tilde{x}_k^{(i)}\mid
    \tilde{x}_{k-1}^{(\tilde{a}_{k-1}^{(i)})}, x_u^{(b_u)})\bigg)\bigg] 
    \tilde{G}_{u-1}(\vec{\check{x}}_{u-1}^{(b_{u-1})}\mid x_u^{(b_u)})
    G_u(x_{u-1:u}^{(b_{u-1:u})})
    \gamma_{u:T}(x_{u}^{(b_{u})}, x_{u+1:T}^*),\nonumber
\end{align}
where $\underline{\vec{\check{x}}}_\ell = \underline{x}_\ell$ and
$\vec{\check{x}}_k^{(i)} = (\vec{\check{x}}_{k-1}^{(\tilde{a}_{k-1}^{(i)})},\tilde{x}_k^{(i)})$.
Thanks to Lemma \ref{lem:valid-conditional-resampling} 
\eqref{item:unconditional-vs-conditional}
\begin{align*}
&\tilde{G}_{k-1}(\vec{\check{x}}_{k-1}^{(b_{k-1})}\mid x_u^{(b_u)})
    r^{(b_{k-1},b_k)}\big(\underline{\tilde{a}}_{k-1}\mid
    \tilde{G}_{k-1}(\underline{\vec{\check{x}}}_{k-1}\mid x_u^{(b_u)})\big)
    \\
    &= \bigg(\sum_{i=1}^N 
    \tilde{G}_{k-1}(\vec{\check{x}}_{k-1}^{(i)}\mid x_u^{(b_u)})\bigg)
    \charfun{b_{k-1} = \tilde{a}_{k-1}^{(b_k)} }
    r\big(\underline{\tilde{a}}_{k-1}\mid
    \tilde{G}_{k-1}(\underline{\vec{\check{x}}}_{k-1}\mid
    x_u^{(b_u)})\big).
\end{align*}
Because the fraction in \eqref{eq:bridge-dist} does not depend on
$b_\ell$, we may now marginalise over $b_\ell$, \ldots, $b_{u-1}$ and
add the distribution of
$\tilde{B}_{u-1}\sim\Categ(\tilde{\omega}_{u-1}^{(1:N)})$ where $\tilde{\omega}_{u-1}^{(j)} = 
\tilde{G}_{u-1}(\vec{\check{X}}_{u-1}^{(j)})
G_u(\tilde{X}_{u-1}^{(j)}, X_u^{(B_u)})$, leading into
\begin{align*}
&   \frac{\check{\pi}_{\ell}^{(N)}(\underline{x}_{1:\ell},
  \underline{a}_{1:\ell-1}, b_\ell)}{N^{u-\ell}G_{\ell}(\vec{x}_\ell^{(b_\ell)})}
    \bigg[
    \prod_{k=\ell+1}^{u-1} 
    \bigg(\sum_{i=1}^N 
    \tilde{G}_{k-1}(\vec{\check{x}}_{k-1}^{(i)}\mid x_u^{(b_u)})\bigg)
    r\big(\underline{\tilde{a}}_{k-1}\mid
    \tilde{G}_{k-1}(\underline{\vec{\check{x}}}_{k-1}\mid x_u^{(b_u)})\big)
    \\
  &\qquad\qquad
    \bigg(\prod_{i=1}^N \bar{M}_k(\tilde{x}_k^{(i)}\mid
    \tilde{x}_{k-1}^{(\tilde{a}_{k-1}^{(i)})}, x_u^{(b_u)})\bigg)\bigg] 
    \tilde{G}_{u-1}(\vec{\check{x}}_{u-1}^{(\tilde{b}_{u-1})}\mid x_u^{(b_u)})
    G_u(x_{u-1:u}^{(\tilde{b}_{u-1},b_{u})})
    \gamma_{u:T}(x_{u}^{(b_{u})}, x_{u+1:T}^*).
\end{align*}
Introducing $\tilde{b}_{\ell:u-2}$ by \textsc{AncestorTrace} leads to
addition of terms
$\charfun{\smash{\tilde{b}_{k-1}=\tilde{a}_{k-1}^{(b_k)}}}$. Then, 
calculations similar as above, but in reverse order, lead to
\eqref{eq:bridge-dist} with $b_{\ell:u-1}$ replaced with 
$\tilde{b}_{\ell:u-1}$. The result follows by marginalising
over $\tilde{X}_{\ell+1:u-1},\tilde{A}_{\ell+1:u-2}$.
\end{proof} 

\begin{proof}[Proof of Theorem \mref{thm:bbcpf-valid}] 
We start by observing that by Theorem \mref{thm:cpf-valid},
$(\underline{X}_{1:T},\underline{A}_{T-1}, \tilde{B}_T) \sim \check{\pi}_{T}^{(N)}
= \mu_{T}^{(N)}$. Then, the proof relies on an iterative application of
Lemma \ref{lem:bbcpf-recurse} \eqref{item:partial-trace} and
\eqref{item:partial-bridge}, with
$(\ell,u)=(T_{L-1},T_L),\ldots,(T_1,T_2)$, which concludes that
$(\underline{X}_1, X_{2:T}^*, \tilde{B}_{1:T}) \sim 
\mu_1^{(N)}(\underline{x}_1,\tilde{b}_1, x_{2:T}^*)/N^{T-1}$
so 
$(\tilde{X}_{1:T}^*,\tilde{B}_{1:T}) \sim \mathcal{Z}^{-1}
\gamma_{1:T}(\tilde{x}_{1:T})/N^{T} = \pi(\tilde{x}_{1:T})/N^T$.
\end{proof}

\section{Computing the conditional distributions in Assumption \mref{a:conditionals} for discretisations of linear SDEs} \label{sec:lgssm-conditionals} 

The practical application of Algorithm \mref{alg:bbcpf} requires for each block the computation of the conditional distributions $M_{u|\ell}$
and $M_{k|k-1,u}$, where $k$ is a time index, and $\ell$ and $u$ refer to the block lower and upper boundaries, respectively. 
This section discusses how these distributions may be computed when:
\begin{itemize}
  \item $M_{1:T}$ stem from a discretisation of a linear SDE 
  \item $M_{1:T}$ stem from a discretisation of a linear SDE that is conditioned on a set of noisy linear Gaussian observations.
\end{itemize}
Note that it is enough to only consider the second case, since the first one may be obtained by
omitting the conditioning on the observations (see discussion at the end of this section).

Following \citep{sarkka-solin}, the conditional means and variance (matrices) of the SDE \meqref{eq:lin-sde} are given for $t > s$ by
\begin{align}
  \E[X_t \mid X_s = x_s] &= \expm{(\mathbf{F}(t - s))} x_s \label{eq:lin-sde-cond-mean} \\
  \Var[X_t \mid X_s = x_s] &= \int_s^t \expm{(\mathbf{F}(t - \tau))} \mathbf{K} \mathbf{K}^T \expm{(\mathbf{F}(t - \tau))}^T \ud \tau, \label{eq:lin-sde-cond-var}
\end{align}
where $\expm$ denotes the matrix exponential. 
We introduce the notation 
\begin{equation}
  \label{eq:transition-notation}
  T_{s,t} := \expm{(\mathbf{F}(t - s))}, \ Q_{s, t} := \Cov[X_t \mid X_s = x_s]. 
\end{equation}
Assuming a Gaussian initial distribution, we have:
\begin{equation}
  \label{eq:lgssm}
  \begin{aligned}
    X_{t_k} \mid X_{t_{k-1}} = x_{t_{k-1}} &\sim N(T_{t_{k-1}, t_k} x_{t_{k-1}}, Q_{t_{k-1}, t_k}) \\
    X_{t_1} &\sim N(\mu_{\mathrm{init}}, \Sigma_{\mathrm{init}}),
  \end{aligned}
\end{equation}
where the time discretisation corresponds to 
\begin{equation}
  \label{eq:time-discretisation}
  0 = t_1 < t_2 < \cdots < t_T = \tau
\end{equation}
as in Section \mref{sec:fk-path-integral}, and where $\mu_{\mathrm{init}}$ and $\Sigma_{\mathrm{init}}$ are the initial mean and variance, respectively.

Suppose then that there are observations $\tilde{Y} = (\tilde{Y}_k)_{k = 1, \dots, K_y}$ observed at times $\tilde{t}_k$, $k = 1, \dots, K_y$,
where each observation time is one of the times in time discretisation \eqref{eq:time-discretisation}.
Further suppose the $\tilde{Y}_k$ are distributed as
\begin{equation}
  \label{eq:lgssm-obs-eq}
  \tilde{Y}_{k} \mid X_{\tilde{t}_k} = x_{\tilde{t}_k} \sim N(Z_{k} x_{\tilde{t}_k}, H_{k}),
\end{equation}
where $Z_k$ and $H_k$ are matrices and observation variances, respectively.
We may then define the augmented observations $Y = (Y_k)_{k = 1, \dots, T}$ with times \eqref{eq:time-discretisation}.
The random vector $Y$ is distributed like the observations $\tilde{Y}$ at their respective times, and has missing elements otherwise. 

Consider then the joint distribution of $X_{t_k}$, $X_{t_{k-1}}$ and $X_{t_u}$ for $k = \ell + 1, \dots, u - 1$, conditioned on the observations $Y_{1:T} = y_{1:T}$.
Since all variables involved are jointly Gaussian, this conditional distribution is:
\begin{equation}
  \label{eq:joint-normal}
  N\left(\begin{bmatrix}
    \mu_{k|T} \\
    \mu_{k-1|T} \\
    \mu_{u|T}
  \end{bmatrix},
  \begin{bmatrix}
    \Sigma_{k|T} & \Sigma_{k,k-1 \mid T} & \Sigma_{k, u \mid T} \\
    \Sigma_{k-1, k \mid T} & \Sigma_{k-1|T} & \Sigma_{k-1, u \mid T} \\
    \Sigma_{u, k \mid T} & \Sigma_{u, k-1 \mid T} & \Sigma_{u|T} 
  \end{bmatrix}\right),
\end{equation}
where we have used the notation
\begin{equation*}
  \begin{aligned}
    \mu_{k \mid n} &:= \E[X_{t_k} \mid Y_{1:n} = y_{1:n}], \\
    \Sigma_{k \mid n} &:= \Var[X_{t_k} \mid Y_{1:n} = y_{1:n}], \\
    \Sigma_{p, s \mid n} &:= \Cov[X_{t_p}, X_{t_s} \mid Y_{1:n} = y_{1:n}].
  \end{aligned}
\end{equation*}
Here, conditioning on a missing observation should be understood as the observation being removed from the condition.

To obtain the (cross)covariances in \eqref{eq:joint-normal}, the following backwards recursion for $s = t - 1, t - 2, \dots$ from \citep{smoothed-covariances} may be used
(with a matrix transpose applied to the result as needed):
\begin{equation}
  \label{eq:smooth-cov-recursion}
  \Sigma_{s, t \mid T} = \Sigma_{s|s}T_{t_s, t_{s + 1}}^{T}\Sigma_{s+1|s}^{-1}\Sigma_{s+1, t \mid T}.
\end{equation}
An inspection of Equations \eqref{eq:joint-normal} and \eqref{eq:smooth-cov-recursion} reveals that all the quantities required are
computed routinely by the Kalman filter and smoother \citep[cf.][]{durbin-koopman} applied to the linear Gaussian state space model 
composed of \eqref{eq:lgssm} and \eqref{eq:lgssm-obs-eq}. 
Note that the Kalman filter automatically handles any missing values in the observation sequence. 

For $k = \ell + 1$, by elementary properties of the Gaussian distribution, the distribution $M_{u|\ell}$,
that is $X_{t_u} \mid X_{t_\ell} = x_{t_{\ell}}, Y_{1:T} = y_{1:T}$, is 
\begin{equation}
  \begin{aligned}
    N\Big(&\mu_{u|T} + \Sigma_{\ell, u \mid T}^{T}\Sigma_{\ell|T}^{-1}(x_{t_{\ell}} - \mu_{\ell|T}), \\
    &\Sigma_{u|T} - \Sigma_{\ell, u \mid T}^{T}\Sigma_{\ell|T}^{-1}\Sigma_{\ell, u \mid T}\Big).
  \end{aligned}
\end{equation}

Similarly, for $k = \ell + 1, \dots, u - 1$, the distribution $M_{k \mid k - 1, u}$,
that is, $X_{t_k} \mid X_{t_{k-1}} = x_{t_{k-1}}, X_{t_u} = x_{t_u}, Y_{1:T} = y_{1:T}$, is
\begin{equation}
  \begin{aligned}
    N\Bigg(&\mu_{k|T} + \Sigma_{k, (k-1, u) \mid T}\Sigma_{(k-1, u) \mid T}^{-1}((x_{t_{k-1}} \ x_{t_u})^{T} - \mu_{(k-1, u) \mid T}), \\ 
    &\Sigma_{k|T} - \Sigma_{k, (k-1,u) \mid T}\Sigma_{(k-1, u) \mid T}^{-1}\Sigma_{k, (k-1,u) \mid T}^{T}\Bigg),
  \end{aligned}
\end{equation}
where 
\begin{equation}
  \label{eq:conditional_block_cov}
  \begin{aligned}
    \mu_{(k-1, u) \mid T} &:= (\mu_{k-1|T} \quad \mu_{u|T})^{T}, \\
    \Sigma_{k, (k-1, u) \mid T} &:= [\Sigma_{k, k-1 \mid T} \quad \Sigma_{k, u \mid T}], \\
    \Sigma_{(k-1, u) \mid T} &:= \begin{bmatrix}
      \Sigma_{k-1|T} & \Sigma_{k-1, u \mid T} \\
      \Sigma_{u, k-1 \mid T} & \Sigma_{u|T}
    \end{bmatrix}.
  \end{aligned}
\end{equation}

In the case where $M_{1:T}$ simply corresponds to a discretisation of the linear SDE \meqref{eq:lin-sde}, 
the above computations can be repeated with the conditioned means, variances and covariances replaced with their unconditional counterparts.
In practice, an easy way to compute the unconditional means and variances is to set all observations missing in the Kalman filter. 
The unconditional covariances can then be obtained from \eqref{eq:smooth-cov-recursion} as before.

\section{Models}

This section gives additional details related to the models appearing in Section \mref{sec:experiments}.

\subsection{CTCRW-P} \label{sec:ctcrwp-details}

The CTCRW-P SDE \meqref{eq:ctcrwp-sde} may be placed into the form of the linear SDE \meqref{eq:lin-sde} by setting
\[
  X_t = (V_t \ L_t)^T, \
 \mathbf{F} = \begin{bmatrix} -\beta_v & 0 \\ 1 & -\beta_x \\ \end{bmatrix}
  \  \text{and }
  \mathbf{K} = \begin{bmatrix} \sigma & 0 \\ 0 & 0 \end{bmatrix}.
\]

The expressions for $T_{s,t}$ and $Q_{s,t}$ in \eqref{eq:transition-notation} are given as follows.
A direct computation yields 
\begin{equation}
  \label{eq:ctcrwp-expm}
  \expm(Ft) = 
  \begin{bmatrix}
    \exp{(-\beta_v t)} & 0 \\ 
    \dfrac{\exp{(-\beta_x t)} - \exp{(-\beta_v t)}}{\beta_v - \beta_x}  &
    \exp{(-\beta_x t)} 
  \end{bmatrix},
\end{equation}
when $\beta_v \neq \beta_x$.
If $\beta_v = \beta_x$, the first element of the second row is replaced by $t\exp{(-\beta_v t)}$.
The transition matrix $T_{s,t}$ may be obtained from \eqref{eq:ctcrwp-expm} by substituting $t-s$ for $t$.

If $\beta_v \neq \beta_x$, the elements $q_{ij}, 1 \leq i, j \leq 2$ of $Q_{s, t}$ are given by
\begin{equation}
  \label{eq:upper-tri-ctcrwp}
  \begin{aligned}
    q_{11} &= \dfrac{\sigma^2}{2\beta_v}(1 - \exp{(-2 \beta_v(t-s))}) \\
    q_{12} &= q_{21} = \dfrac{\sigma^2}{\beta_v - \beta_x} \Bigg[ \dfrac{1}{\beta_v + \beta_x} (1 - \exp{(-(\beta_v + \beta_x)(t - s))}) - \dfrac{1}{2\beta_v}(1 - \exp{(-2\beta_v(t-s))}) \Bigg] \\
    q_{22} &= \dfrac{\sigma^2}{(\beta_v - \beta_x)^2} \Bigg[ \dfrac{1}{2\beta_x}(1 - \exp{(-2\beta_x (t - s))}) + \dfrac{1}{2\beta_v} (1 - \exp{(-2\beta_v (t - s))}) \\ 
    &\phantom{==}-\dfrac{2}{\beta_x + \beta_v} (1 - \exp{(-(\beta_x + \beta_v)(t -s))})\Bigg].
  \end{aligned}
\end{equation}
If $\beta_v = \beta_x$, the element $q_{11}$ remains as in \eqref{eq:upper-tri-ctcrwp}, but the elements $q_{12}$ and $q_{22}$ become:
\begin{equation}
  \label{eq:upper-tri-ctcrwp-eq}
  \begin{aligned}
    q_{12} &= \dfrac{\sigma^2}{4\beta_v^2}\Bigg[ 1 + \parexp{-2\beta_v (t-s)}\Big(-2\beta_v(t-s) - 1\Big)\Bigg], \\ 
    q_{22} &= \dfrac{\sigma^2}{4\beta_v^3}\Bigg[1 - \parexp{-2\beta_v(t-s)}\Big(1 + 2\beta_v(t-s)(\beta_v(t-s) + 1) \Big)\Bigg]. 
  \end{aligned}
\end{equation}

Finally, the stationary covariance matrix $S$ with elements $s_{ij}$ used in the initial distribution of the CTCRW-P model is 
obtained by taking the limit $(t - s) \to \infty$ in the previous equations:
\begin{equation}
  \label{eq:ctcrwp-statcov}
  \begin{aligned}
    s_{11} &= \dfrac{\sigma^2}{2\beta_v} \\
    s_{12} &= s_{21} = \dfrac{\sigma^2}{\beta_v - \beta_x} \Bigg[ \dfrac{1}{\beta_v + \beta_x} - \dfrac{1}{2\beta_v} \Bigg] \\
    s_{22} &= \dfrac{\sigma^2}{(\beta_v - \beta_x)^2} \Bigg[ \dfrac{1}{2\beta_x} + \dfrac{1}{2\beta_v} - \dfrac{2}{\beta_x + \beta_v} \Bigg],
  \end{aligned}
\end{equation}
when $\beta_v \neq \beta_x$. When $\beta_v = \beta_x$, the elements $s_{12}$ and $s_{22}$ are
\begin{equation}
  \label{eq:ctcrwp-statcov-eq}
  s_{12} = \dfrac{\sigma^2}{4\beta_v^2} \ \text{ and } \ s_{22} = \dfrac{\sigma^2}{4\beta_v^3}. \\
\end{equation}

\subsection{CP-RBM} \label{sec:cp-rbm-details}

The density of the reflected normal distribution $N^{(r)}(\mu, \sigma^2, a, b)$, for any point $x$ in the support $(a, b)$, is given by
\begin{equation}
  \label{eq:reflected-normal-dens}
  N^{(r)}(x; \mu, \sigma^2, a, b) = N(x; \mu, \sigma^2) + \sum_{k = 1}^{\infty} N(g_a^{(k)}(x); \mu, \sigma^2) + N(g_b^{(k)}(x); \mu, \sigma^2),
\end{equation}
where
\begin{equation}
  \label{eq:reflection-sequences}
  \begin{aligned}
    g_a^{(k)}(x) &:= (-1)^k x + ka - kb + (a + b)\charfun{\text{k odd}} \\
    g_b^{(k)}(x) &:= (-1)^k x + kb - ka + (a + b)\charfun{\text{k odd}}. \\
  \end{aligned}
\end{equation}
Equation \eqref{eq:reflected-normal-dens} may be derived by noting that the density of any point $x \in (a, b)$ is equal to the sum of normal densities at points that (eventually) reflect to $x$. 
These points consist of $x$ itself and the reflection points outside $(a, b)$ given by the sequences in \eqref{eq:reflection-sequences}.
In practice, we truncate the infinite sum in \eqref{eq:reflected-normal-dens} to the first 10 terms, which provides a reasonable approximation for the values of $\sigma$, $a$ and $b$ we use.

\subsection{CTCRW} \label{sec:ctcrw-details}

The CTCRW SDE can be placed in the form of the linear SDE \meqref{eq:lin-sde} by setting
\[
 X_t = (V_t \  L_t)^T, \
 \mathbf{F} = \begin{bmatrix} -\beta & 0 \\ 1 & 0 \\ \end{bmatrix}
  \  \text{and }
  \mathbf{K} = \begin{bmatrix} \sigma & 0 \\ 0 & 0 \end{bmatrix}.
\]

The expressions for $T_{s, t}$ and $Q_{s,t}$ in \eqref{eq:transition-notation} are then given as follows.
\begin{equation}
  T_{s, t} = \begin{bmatrix}
    \exp{(-(t - s)\beta)} & 0 \\
    \dfrac{1 - \exp{(-(t-s)\beta)}}{\beta} & 1 \\
  \end{bmatrix},
\end{equation}
and the matrix $Q_{s, t}$ has elements $q_{ij}$, $i,j = 1, 2$, such that
\begin{equation}
  \begin{aligned}
    q_{11} &= \dfrac{\sigma^2}{2\beta}\Bigg(1 - \exp{(-(t-s)2\beta)}\Bigg), \\
    q_{21} &= q_{12} = \dfrac{\sigma^2}{2\beta^2}\Bigg(1 - 2\exp{(-(t-s)\beta)} + \exp{(-(t-s)2\beta)}\Bigg), \\
    q_{22} &= \dfrac{\sigma^2}{\beta^2}\Bigg ( (t - s) - \dfrac{2}{\beta}\Big(1 - \exp{(-(t-s)\beta)}\Big) + \dfrac{1}{2\beta}\Big(1 - \exp{(-(t-s)2\beta)}\Big)\Bigg).
  \end{aligned}
\end{equation}

\section{Miscellanneous algorithms} \label{sec:misc-algs}

\subsection{Algorithm for finding mean partition order of weights} \label{sec:mean-partition-order}

The following algorithm finds a mean partition order $I$ of the input weights $w^{1:N}$. 
Note that the algorithm does not modify the weights $w^{1:N}$ and that the operation 
`break' means exiting from the current (innermost) `while' loop.
\begin{algorithm}
	\spacingset{1.0}
  \caption{\small \textsc{MeanPartitionOrder}($w^{1:N}$)}
  \label{alg:mean-partition-order}
  \begin{algorithmic}[1]
    \small
    \State Set $p = \text{\textsc{Mean}}(w^{1:N})$ (`pivot'). 
    \State Set $i_\ell = 0$ and $i_{u}$ = $N + 1$.
    \State Initialise $I$ as the index set $[N]$.
    \While{True}
      \While{$i_{\ell} < \min(i_u, N)$}
        \State Set $i_{\ell} = i_{\ell} + 1$.
        \State Break if $w^{I^{(i_{\ell})}} > p$.
      \EndWhile
      
      \While{$i_{u} > i_{\ell}$}
        \State Set $i_{u} = i_{u} - 1$.
        \State Break if $w^{I^{(i_{u})}} < p$.
      \EndWhile
      \State Break if $i_{\ell}$ equals $i_{u}$.
      \State Swap indices $i_{\ell}$ and $i_u$ of $I$.
    \EndWhile
    \State \textbf{output} $I$
  \end{algorithmic}
\end{algorithm}
\subsection{Algorithm for constructing dyadic blocking sequences} \label{sec:construct-dyadic-blockings}
$\phantom{a}$
\begin{algorithm}[h!]
	\spacingset{1.0}
  \caption{\small \textsc{DyadicCandidateBlockings}($T \in \{2, 3, \ldots\}$)} 
  \label{alg:dyadic-candidate-blockings}
  \begin{algorithmic}[1]
    \small
    \State Denote by $p^{*}$ the largest $p$ such that $2^{p} + 1 \leq T$.
    \For{$i = 1, \ldots, p^{*} + 1$}
      \State Set $\blocksize = 2^{i-1}$
      \State Set $\ell = 1; u = 0$.
      \State Set $k = 0$ (block index)
      \While{$l < T$}
        \State Set $k = k + 1$
        \State Set $u = \ell + \blocksize$
        \State Set $T_k^{(i)} = \ell; T_{k + 1}^{(i)} = \min{(u, T)}$
        \State Set $\ell = u$
      \EndWhile 
    \EndFor \\
    \Return Candidate blocking sequences $T^{(i)}_{1:L^{(i)}}$ for $i = 1, 2, \dots, p^{*} + 1$ 
  \end{algorithmic}
\end{algorithm}

\section{Derivation of $\plu_{\mathrm{G}}$} \label{sec:plu-g-derivation}

Consider the following artificial conditional particle system that approximates a continuous time conditional 
particle filter with near constant weights:
\begin{itemize}
    \item The system has $N$ particles.
    \item One of the particles corresponds to the `reference', which can not die. 
    \item At most one resampling event occurs at any time $k$, with probability $p_R^{(k)}$.
    \item If a resampling event occurs:
        \begin{itemize}
          \item a dying particle is chosen uniformly among the $N - 1$ particles
            (excluding the reference).
          \item a particle is selected for `reproduction' uniformly among the $N - 1$ particles
            (excluding the dying particle). 
        \end{itemize}
    \item If a resampling event does not occur, no particles die or reproduce.
\end{itemize}
Further suppose that the particle population is divided into two groups, `ill' and `healthy',
where the ill population is to be interpreted as the particles having reproduced from the reference
or any of its descendants. 
Denote by $H_k$ and $I_k \defeq N - H_k$ the number of healthy and number of ill (including reference)
at time $k$, respectively. 
Initially, $H_1 = N - 1$.

\begin{theorem}
  For the artificial particle system of this section, it holds for any $T \geq 1$ that 
  \begin{equation}
    \label{eq:h-expectation}
    \mathbb{E}[H_T] = (N - 1) \prod_{k=1}^{T - 1} \Bigg( 1 - \frac{p_R^{(k)}}{(N - 1)^2} \Bigg).
  \end{equation}
\end{theorem}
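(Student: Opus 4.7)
The plan is to establish the one-step recursion
\[
\E[H_{k+1} \mid H_k] = H_k\left(1 - \frac{p_R^{(k)}}{(N-1)^2}\right)
\]
for every $1 \le k \le T-1$. Since $H_1 = N-1$ deterministically (initially only the reference is ``ill''), taking unconditional expectation and iterating this identity immediately yields \eqref{eq:h-expectation}.

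To derive the recursion, I would condition on $H_k = h$ and enumerate the possible events at time $k$. With probability $1 - p_R^{(k)}$ no resampling occurs and $H_{k+1} = h$. With probability $p_R^{(k)}$ a resampling occurs; then the dying particle is drawn uniformly from the $N-1$ non-reference particles (of which $h$ are healthy and $N-h-1$ are ill non-reference), and the reproducing particle is drawn uniformly from the $N-1$ non-dying particles. Because the newly produced particle inherits the health status of its reproducer, the only sub-cases that change $H_{k+1}$ are: dying healthy with reproducing ill, giving $H_{k+1} = h-1$; and dying ill non-reference with reproducing healthy, giving $H_{k+1} = h+1$. The two remaining sub-cases (both endpoints of the same health type) leave $H_{k+1} = h$.

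A short calculation of the expected change conditional on a resample then gives
\[
\E[H_{k+1} - h \mid H_k = h, \text{resample}] = -\frac{h}{N-1}\cdot\frac{N-h}{N-1} + \frac{N-h-1}{N-1}\cdot\frac{h}{N-1} = -\frac{h}{(N-1)^2},
\]
since after the death of a healthy particle the reproducer pool contains $N-h$ ill particles (including the reference) out of $N-1$, while after the death of an ill non-reference the reproducer pool contains $h$ healthy particles out of $N-1$. Multiplying by $p_R^{(k)}$ and combining with the no-resample contribution yields the claimed linear recursion.

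The substantive content of the argument is the case analysis; the only delicate point is making sure the reference is excluded from the dying pool but included among potential reproducers, and that the reproducer pool correctly excludes the dying particle so that its composition depends on the dying particle's health status. Once the one-step identity is established, a trivial induction on $k$, using $\E[H_1] = N-1$, completes the proof.
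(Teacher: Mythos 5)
Your proposal is correct and follows essentially the same route as the paper: the same case analysis of the resampling event (only the mixed healthy/ill death--reproduction pairs change $H_k$), the same transition probabilities $p_R^{(k)}\tfrac{h}{N-1}\cdot\tfrac{N-h}{N-1}$ and $p_R^{(k)}\tfrac{N-h-1}{N-1}\cdot\tfrac{h}{N-1}$, the same one-step identity $\E[H_{k+1}\mid H_k]=H_k\bigl(1-\tfrac{p_R^{(k)}}{(N-1)^2}\bigr)$, and the same iteration from $H_1=N-1$.
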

\begin{proof}
  For any $k \geq 2$ we have 
  \begin{equation*}
    H_k \mid H_{k-1} = 
     \begin{cases}
       H_{k-1} + 1, \ \text{ prob. } p_{\mathrm{increase}}^{(k-1)} \\ 
       H_{k-1} - 1, \ \text{ prob. } p_{\mathrm{decrease}}^{(k-1)} \\
       H_{k-1}, \ \text{ prob. } p_{\mathrm{nothing}}^{(k-1)},
     \end{cases} 
  \end{equation*}
  where
  \begin{equation*}
    \begin{aligned}
      p_{\mathrm{increase}}^{(k-1)} &= p_R^{(k - 1)} \dfrac{I_{k-1} - 1}{N - 1} \cdot \dfrac{H_{k-1}}{N - 1}\  \text{(resampling occurs, ill dies, healthy reproduces)} \\
      p_{\mathrm{decrease}}^{(k-1)} &= p_R^{(k - 1)} \dfrac{H_{k-1}}{N - 1} \cdot \dfrac{I_{k-1}}{N - 1}\  \text{(resampling occurs, healthy dies, ill reproduces)},
\\
      p_{\mathrm{nothing}}^{(k-1)} &= 1 - p_{\mathrm{increase}}^{(k-1)} - p_{\mathrm{decrease}}^{(k-1)}.
    \end{aligned}
  \end{equation*}
  Therefore, 
  \begin{equation*}
    \begin{aligned}
      \mathbb{E}[H_T \mid H_{T-1}] &= H_{T - 1} + \mathbb{E}[H_T - H_{T-1} \mid H_{T -1}] \\
      &= H_{T - 1} +  p_R^{(T - 1)} \dfrac{I_{T-1} - 1}{N - 1} \cdot \dfrac{H_{T-1}}{N - 1} - p_R^{(T - 1)} \dfrac{H_{T-1}}{N - 1} \cdot \dfrac{I_{T-1}}{N - 1} \\
      &= \Big(1 - \frac{p_R^{(T-1)}}{(N - 1)^2} \Big) H_{T - 1},
    \end{aligned}
  \end{equation*}
  and
  \begin{equation*}
    \mathbb{E}[H_T] = \mathbb{E}[\mathbb{E}[H_T \mid H_{T-1}]] = \Big(1 - \frac{p_R^{(T-1)}}{(N - 1)^2} \Big) \mathbb{E}[H_{T - 1}],
  \end{equation*}
  which yields \eqref{eq:h-expectation} by repeated application. 
\end{proof}
The direct consequence of this result is that $\plu_{\mathrm{G}}(\ell, u)$ equals $\mathbb{E}[H_u / N]$ with
$p_{R}^{(k)} = p_k N$ (defined in Section \mref{sec:blocking-seq-selection}) and $\ell$ considered as the `first' time point. 
\newpage

\section{Supplementary figures}

\begin{figure}[h!]
  \centering
  \includegraphics{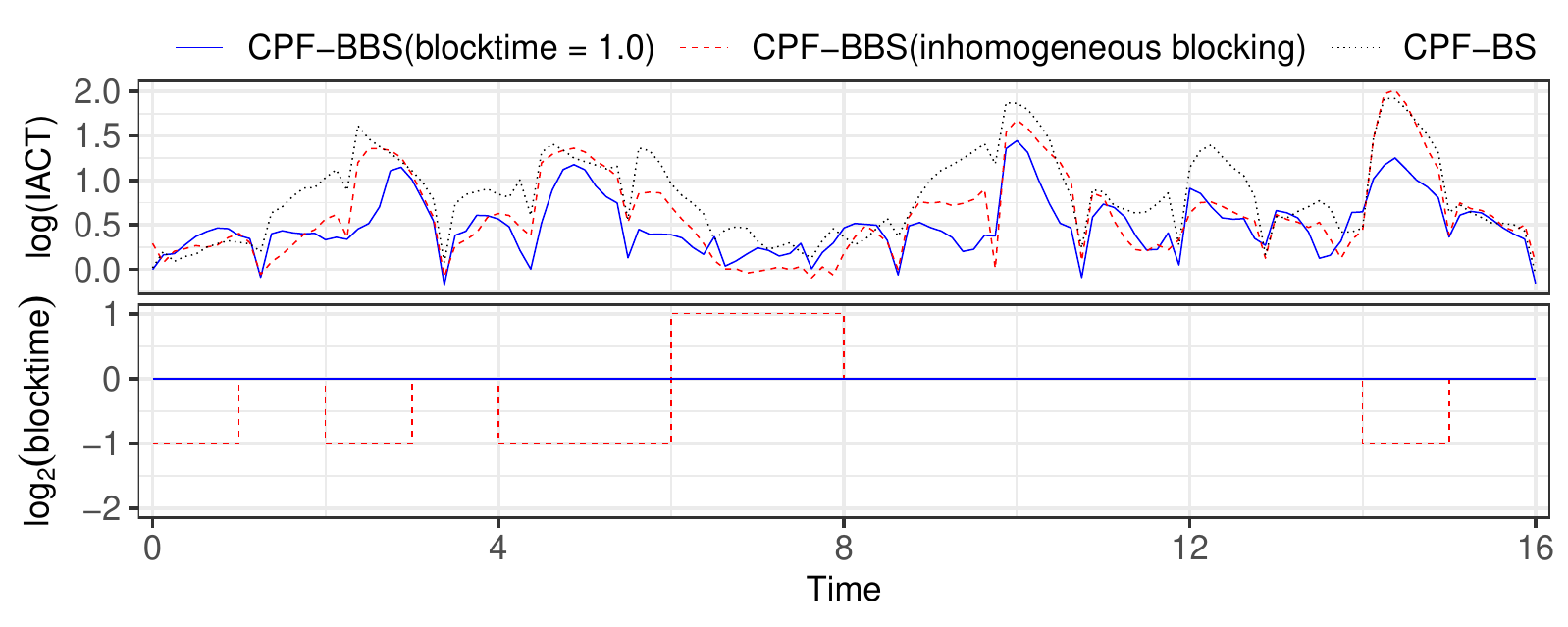}
  \caption{\small The logarithmic IACT (top) of the state $L_t^{(x)}$ in the CTCRW-T model
  with $|\Delta_k| = 0.125$ for CPF-BS and the CPF-BBS with blocktime $1.0$
  and the blocking from Algorithm \mref{alg:choose-blocking} (bottom).
  }
  \label{fig:terrain-sim-bbcpf-vs-cpfbs-iacts-supp}
\end{figure}  

\begin{figure}[h!]
  \centering
  \includegraphics{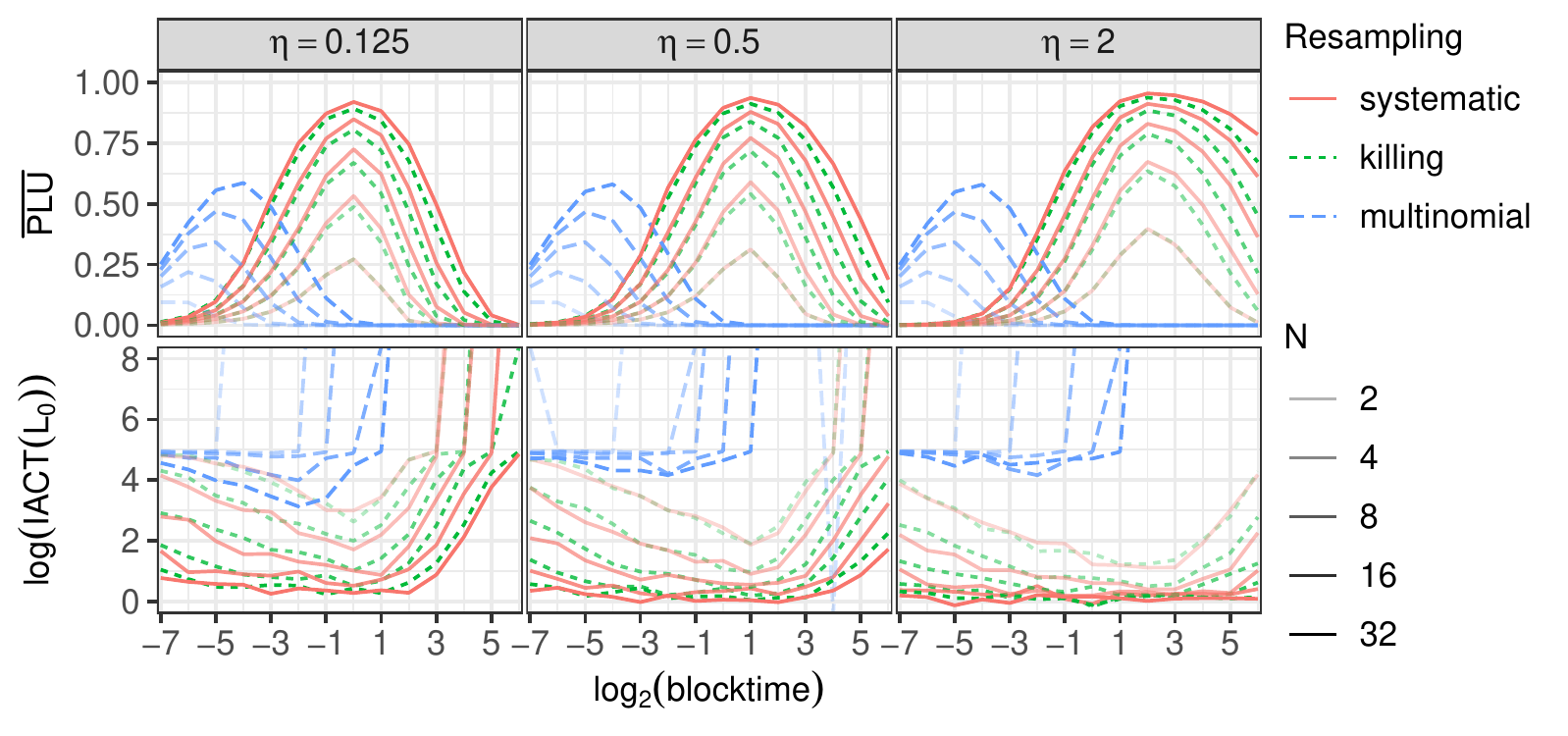}
  \caption{\small The estimated mean PLUs and the logarithm of IACT with varying $\eta$ for the location state variable at time 
  $0.0$ in the CTCRW-P model. The value of $|\Delta_k|$ was set to $2^{-7}$. 
The performance of CPF-BS is seen at the far left, with $\blocktime = 2^{-7}$.
  }
  \label{fig:ctcrwp-potential-vs-blocksize-supp}
\end{figure}

\end{document}